\def \A {\mathbf{A}}
\def \B {\mathbf{B}}
\def \d {\mathbf{d}}
\def \E {\mathbf{E}}
\def \e {\mathbf{e}}
\def \I {\mathbf{I}}
\def \M {\mathbf{M}}
\def \m {\mathbf{m}}
\def \Q {\mathbf{Q}}
\def \R {\mathbf{R}}
\def \s {\mathbf{s}}
\def \S {\mathbf{S}}
\def \T {\mathbf{T}}
\def \u {\mathbf{u}}
\def \v {\mathbf{v}}
\def \V {\mathbf{V}}
\def \W {\mathbf{W}}
\def \x {\mathbf{x}}
\def \Y {\mathbf{Y}}
\def \y {\mathbf{y}}
\def \z {\mathbf{z}}
\def \Ccal {\mathcal{C}}
\def \Dcal {\mathcal{D}}
\def \Ecal {\mathcal{E}}
\def \Kcal {\mathcal{K}}
\def \Ncal {\mathcal{N}}
\def \Ocal {\mathcal{O}}
\def \Rcal {\mathcal{R}}
\def \Cbb {\mathbb{C}}
\def \Ebb {\mathbb{E}}
\def \Nbb {\mathbb{N}}
\def \Pbb {\mathbb{P}}
\def \Rbb {\mathbb{R}}
\def \Vbb {\mathbb{V}}
\def \drm {\mathrm{d}}
\def \erm {\mathrm{e}}
\def \irm {\mathrm{i}}
\def \Prm {\mathrm{P}}
\def \Qrm {\mathrm{Q}}
\def \xibs {\boldsymbol{\xi}}
\def \Deltabs {\boldsymbol{\Delta}}
\def \Gammabs {\boldsymbol{\Gamma}}
\def \Lambdabs {\boldsymbol{\Lambda}}
\def \Omegabs {\boldsymbol{\Omega}}
\def \Pibs {\boldsymbol{\Pi}}
\def \Sigmabs {\boldsymbol{\Sigma}}
\def \Xibs {\boldsymbol{\Xi}}
\def \det {\mathrm{det}}
\def \Tr {\mathrm{tr}\ }
\def \Res {\mathrm{Res}}
\def \diag{\mathrm{diag}}
\def \Vec{\mathrm{Vec}}
\DeclareMathOperator{\supp}{supp}
\renewcommand{\Im}{\mathrm{Im}}
\renewcommand{\Re}{\mathrm{Re}}
\def\restriction#1#2
\def\restrictionaux#1#2{{#1\,\smash{\vrule height .8\ht1 depth .85\dp1}}_{\,#2}}
\newtheorem{corollary}{Corollary}
\newtheorem{theorem}{Theorem}
\newtheorem{lemma}{Lemma}
\newtheorem{remark}{Remark}
\newtheorem{proposition}{Proposition}
\newcounter{countassum}
\newenvironment{assumption}
{
	\refstepcounter{countassum}
	\begin{flushleft}
	\noindent\textbf{Assumption A-\thecountassum}:
	\it
}
{
	\end{flushleft}
	
}
\def\blfootnote{\xdef\@thefnmark{}\@footnotetext}
\title{A CLT for an improved subspace estimator with observations of increasing dimensions}
\author{P. Vallet$^{(1)}$, X. Mestre$^{(2)}$, P. Loubaton$^{(3)}$}
\date{}
\begin{document}
\maketitle
\blfootnote
{
	$^{(1)}$ Laboratoire de l'Int\'egration du Mat\'eriau au Syst\`eme (CNRS, Univ. Bordeaux, Bordeaux INP), 
	351, Cours de la Lib\'eration 33405 Talence (France), 
	\textsf{pascal.vallet@bordeaux-inp.fr}	
}
\blfootnote
{
	$^{(2)}$ Centre Tecnol\`{o}gic de Telecomunicacions de Catalunya (CTTC), 
	Av. Carl Friedrich Gauss 08860 Castelldefels, Barcelona (Spain), 
	\textsf{xavier.mestre@cttc.cat}
}
\blfootnote
{
	$^{(3)}$ Universit\'e Paris-Est / Laboratoire d'Informatique Gaspard Monge (CNRS UMR 8049), 5 Bd. Descartes 77454 Marne-la-Vallée (France),
	\textsf{loubaton@univ-mlv.fr}
}
\blfootnote
{
	This work was partially supported by the French programs GDR ISIS/GRETSI 
	"Jeunes Chercheurs" and Project ANR-12-MONU-OOO3 DIONISOS Project.
}

\begin{abstract}
	This paper deals with subspace estimation in the small sample size regime, where the number of samples is comparable in magnitude with the observation dimension.
	The traditional estimators, mostly based on the sample correlation matrix, are known to perform well as long as the number of available samples is much larger
	than the observation dimension. However, in the small sample size regime, the performance degrades. 
	Recently, based on random matrix theory results, a new subspace estimator was introduced, which was shown to be consistent 
	in the asymptotic regime where the number of samples and the observation dimension converge to infinity at the same rate. 
	In practice, this estimator outperforms the traditional ones even for certain scenarios where the observation dimension is small and of the same order of magnitude as
	the number of samples.
	In this paper, we address a performance analysis of this recent estimator, by proving a central limit theorem in the above asymptotic regime.
	We propose an accurate approximation of the mean square error, which can be evaluated numerically. 
\end{abstract}
\tableofcontents

\section{Introduction}
\label{section:introduction}
	
	\subsection{Motivation}

The problem of subspace estimation, i.e. estimating the eigenspaces of the correlation matrix of a certain multivariate time series of dimension $M$, 
available from a set of $N$ noisy observations, is an important problem in statistical signal processing, and covers several topics such as 
Direction of Arrival (DoA) estimation \cite{schmidt1986multiple}, multiuser detection in Code Division Multiple Access (CDMA) \cite{liu1996subspace}, 
chirp parameter estimation \cite{volcker2001chirp} or beamforming \cite{citron1984improved}.
Let us consider an complex $M$-variate time series $(\y_n)_{n \geq 1}$, following a "signal plus noise" model
\begin{align}
	\y_n = \s_n + \v_n,
	\notag
\end{align}
where $\s_n$ corresponds to a signal part and $\v_n$ to a noise part, and assume that $N$ observations $\y_1,\ldots,\y_N$ are collected and stacked in the $M \times N$ matrix
\begin{align}
	\Y_N = [\y_1,\ldots,\y_N] = \S_N + \V_N,
	\notag
\end{align}
with $\S_N=[\s_1,\ldots,\s_N]$ and $\V_N = [\v_1,\ldots,\v_N]$.
In many applications, the signals $(\s_n)_{n \geq 0}$ are moreover constrained to a subspace of dimension $K$ less than $M$ and the matrix $\S_N$ is full rank $K$.
The subspace estimation problem consists in estimating the column space of $\S_N$ called the "signal subspace", of dimension $K$ 
(or equivalently its orthogonal complement called the "noise subspace" of dimension $M-K$) from the observation matrix $\Y_N$.

The usual way of estimating the signal or noise subspaces consists in estimating their orthogonal projection matrices. 
The estimation is performed most of the time by using the so-called sample correlation matrix (SCM) of the observations 
\begin{align}
	\frac{\Y_N\Y_N^*}{N} = \frac{1}{N} \sum_{n=1}^N \y_n\y_n^*,
	\notag
\end{align} 
and these projections are directly estimated by considering 
their sample estimates, i.e. by considering the corresponding orthogonal projection matrix of the SCM. 
For example, the noise subspace projection matrix $\Pibs_N$, i.e.
the orthogonal projection matrix onto the kernel of $\S_N\S_N^*$, is traditionally estimated by $\hat{\Pibs}_N$, the orthogonal projection matrix onto the eigenspace associated with the $M-K$ smallest eigenvalues of $\frac{\Y_N\Y_N^*}{N}$.

These sample estimators are known to perform well when the number of available samples $N$ is much larger than the observation dimension $M$, in particular because
the SCM is a good estimator of the true correlation matrix of the observations. Indeed, in the asymptotic regime where $M$ is constant and
$N$ converges to infinity, under some technical conditions, the law of large numbers ensures that
\begin{align}
	\left\|\hat{\Pibs}_N - \Pibs_N\right\| \xrightarrow[]{} 0 \notag
\end{align}
almost surely as $N \to \infty$, i.e. the sample projection matrices are consistent estimators of the true ones. These sample estimators have been also characterized in terms of Central Limit Theorems (CLT) 
in the previous asymptotic regime, and several accurate approximations of the Mean Square Error (MSE) have been obtained, 
see e.g. Anderson \cite{anderson1958introduction}, Stoica \cite{stoica1989music}, and the references therein.
However, it may exist some situations where obtaining such an amount of samples is not conceivable, for example in situations where the signals are stationnary only for
a short period of time, or simply if the observation dimension is large.
As a consequence, in the low sample size regime where $M$ and $N$ are of the same order of magnitude, the performance of the sample subspace estimators severely degrades,
essentially because the SCM does not estimate properly the true correlation matrix.

In this context, based on recent results in random matrix theory, a new subspace estimator was proposed by Mestre \cite{mestre2008improved}, in the case where $(\s_n)_{n\geq 0}$ 
and $(\v_n)_{n \geq 0}$ are modeled as two independent zero-mean Gaussian stationnary processes, temporally uncorrelated, with the signal correlation matrix $\R_s = \Ebb[\s_n\s_n^*]$ being rank $K$ and the noise covariance being equal to $\sigma^2 \I$, where $\sigma > 0$ and $\I$ is the $M \times M$ identity matrix, i.e. $(\y_n)_{n \geq 0}$ can be modeled equivalently as
\begin{align}
	\y_n = \left(\R_s + \sigma^2 \I\right)^{1/2} \x_n,
\end{align}
with $(\x_n)_{n \geq 0}$ a standard spatially and temporally white Gaussian process.
Later Vallet et al. \cite{vallet2012improved} obtained, using the same approach, a different estimator in the more general situation where the signals $(\s_n)_{n \geq 0}$ are considered as unknown 
deterministic. 
The estimators of  \cite{mestre2008improved} and \cite{vallet2012improved} were shown to be consistent in the asymptotic regime where both the observation dimension $M$ and the number of samples 
$N$ converge to infinity in such a way that the ratio $\frac{M}{N}$ converge to a positive constant.
Moreover, these estimators do not assume any particular assumption on the behaviour of the rank $K$, which may also converge to infinity with $M,N$.
In practice, these estimators outperform the traditional ones, when $M,N$ are of the same order of magnitude.
Based on these results, an application to DoA estimation of $K$ source signals impinging on an array of $M$ sensors was proposed, and an improved subspace DoA estimator called
G-MUSIC (Generalized MUltiple Signal Classification) was built, which was shown to numerically outperform the traditionnal MUSIC estimator, for realistic values of $M,N$.
This DoA estimator was also shown to be consistent in Hachem et al. \cite{hachem2012large}.

Recently, Hachem et al. \cite{hachem2012subspace} proposed an analysis of the subspace estimator \cite{vallet2012improved}, in terms of a Central Limit Theorem (CLT), in the previous asymptotic
regime where $M,N$ converge to infinity at the same rate, and by assuming that the rank $K$ is fixed. 
In practice, these results are accurate as long as the rank $K$ remains small compared to $M,N$. 
However, when the rank $K$ is of the same order of magnitude than the dimension $M$ and $N$, the corresponding results do not predict anymore the behaviour of 
the subspace estimator \cite{vallet2012improved} and the results of  \cite{hachem2012subspace} are not very accurate.

In this paper, 
\footnote{The material of this paper was party presented in the conference paper \cite{mestre2011asymptotic}.}
we propose to extend the analysis of \cite{hachem2012subspace} regarless the behaviour of the rank $K$, which may increase with $M,N$. For that purpose, we use a different approach and prove a Central Limit Theorem (CLT) in the previous asymptotic regime. We also provide an explicit expression for the Mean Square Error (MSE) which can be easily evaluated numerically. Numerical examples confirm the validity of the results.

The paper is organized as follows.
In the remainder of section \ref{section:introduction}, we introduce formally the model of signals used in the paper, and recall some basic results from random matrix theory, necessary for the next sections. In section \ref{section:subspace_estimation}, we introduce the subspace estimator of \cite{vallet2012improved} and provide the main result of the paper, namely a CLT for this estimator, as well as numerical illustrations.
Sections \ref{section:proof_clt} and \ref{section:appendix} contain the proofs of the results.

	\subsection{Notations}
	
We introduce here the main notations used throughout the paper.

The sets $\Rbb$, $\Rbb^+$ and $\Nbb$ (resp. $\Nbb^*$) will respectively represent the real numbers, the non-negative numbers and the non-negative integers (resp. the positive integers). 
$\Cbb$ will be the set of complex numbers, and for $z \in \Cbb$, $\Re(z)$,  $\Im(z)$ and $z^*$ will stand 
for the real part, the imaginary part and the complex conjuguate. $\irm$ will be the imaginary unit and we will also use the set $\Cbb^+ = \{z \in \Cbb: \Im(z)>0\}$.
The indicator of a set $\Ecal \subset \Rbb$ is denoted $\mathbb{1}_{\Ecal}$, $\partial \Ecal$ and $\mathrm{Int}(\Ecal)$ will denote the boundary and interior of $\Ecal$.

For a real-valued function $\varphi$ defined on $\Rbb$, $\supp(\varphi)$ will represent the support of $\varphi$, and $\Ccal_c^{\infty}(\Rbb,\Ecal)$ will the set of smooth compactly supported 
functions defined on $\Rbb$, taking values in some set $\Ecal \subset \Rbb$.

Matrices (respectively vectors) are denoted by bolfaced capital (respectively boldfaced lower case) letters. For a complex matrix $\A$, we denote by $\A^T, \A^*$ its transpose and 
its conjuguate transpose, and by $\Tr(\A)$ and $\|\A\|$ its trace and spectral norm. The identity matrix will be $\I$. $\e_n$ will refer to a vector having 
all its components equal to $0$ except the $n$-th equals to $1$.

The real normal distribution with mean $m$ and variance $\sigma^2$ is denoted $\Ncal_{\Rbb}(\alpha,\sigma^2)$ and the multivariate normal distribution in $\Rbb^k$, with mean $\m$ 
and covariance $\Gammabs$ is denoted in the same way $\Ncal_{\Rbb^k}(\m,\Gammabs)$. We will say that a complex random variable $Z = X +\irm Y$ follow the distribution
$\Ncal_{\Cbb}(\alpha+\irm\beta,\sigma^2)$ if $X$ and $Y$ are independent with respective distributions $\Ncal_{\Rbb}(\alpha, \frac{\sigma^2}{2})$ and $\Ncal_{\Rbb}(\beta, \frac{\sigma^2}{2})$.
The expectation and variance of a complex random variable $Z$ will be denoted $\Ebb[Z]$ and $\Vbb[Z]$.
The support of a probability measure $\mu$ will be denoted $\supp(\mu)$. For a sequence of random variables $(X_n)_{n \in \Nbb}$ and a random variable $X$, we write
\begin{align}
	X_n \xrightarrow[n\to\infty]{a.s.} X \text{ and } X_n \xrightarrow[n\to\infty]{\Dcal} X
	\notag
\end{align}
when $X_n$ converges respectively with probability one and in distribution to $X$. Finally, $X_n = o_{\Pbb}(1)$ will stand for the convergence of $X_n$ to $0$ in probability, and
$X_n = \Ocal_{\Pbb}(1)$ will stand for boundedness in probability (tightness).

Some other special notations may be used at some very localized parts in the paper, and will be introduced in the text.

	\section{Asymptotic behaviour of the sample eigenvalues}
	\label{section:review_rmt}
	
In this section, we present some basic results from random matrix theory, describing the behaviour of the eigenvalues of the SCM $\frac{\Y_N\Y_N^*}{N}$, in the asymptotic regime where $M,N$ converge to infinity such that $\frac{M}{N} \to c > 0$. 
These results will be required to introduce the improved subspace estimator of \cite{vallet2012improved}.
To that end, we will work with the following random matrix model, refered to as "Information plus Noise" in the literature.
We consider $M,N,K \in \Nbb^*$ such that $K < M < N$ and $M=M(N)$, $K=K(N)$ are functions of $N$ satisfying $c_N = \frac{M}{N} \to c \in (0,1)$ as $N \to \infty$.
For each $N \in \Nbb^*$, we consider the $M \times N$ random matrix $\Sigmabs_N$, defined by
\begin{align}
	\Sigmabs_N = \B_N + \W_N,
	\label{eq:INmodel}
\end{align}
with
\begin{itemize}
 \item $\B_N$ a rank $K$ deterministic matrix satisfying $\sup_{N} \|\B_N\| < \infty$,
 \item $\W_N$ having i.i.d. entries $W_{i,j} \sim \Ncal_{\Cbb}\left(0,\frac{\sigma^2}{N}\right)$.
\end{itemize}
We denote by $\lambda_{1,N}> \ldots > \lambda_{K,N} > \lambda_{K+1,N} = \ldots = \lambda_{M,N} = 0$ the eigenvalues of $\B_N\B_N^*$ (the non-zero eigenvalues are assumed 
to have multiplicity one for simplicity), and by $\u_{1,N},\ldots,\u_{M,N}$ the respective unit norm eigenvectors.
Equivalently, $\hat{\lambda}_{1,N}\geq\ldots\geq\hat{\lambda}_{M,N}$ are the eigenvalues of the matrix $\Sigmabs_N\Sigmabs_N^*$ and $\hat{\u}_{1,N},\ldots,\hat{\u}_{M,N}$ 
the respective unit norm eigenvectors.

		\subsection{The asymptotic spectral distribution}
		\label{section:asymptotic_spectral_distribution}
		
Let $\hat{\mu}_N = \frac{1}{M} \sum_{k=1}^M \delta_{\hat{\lambda}_{k,N}}$ the empirical spectral measure of the matrix $\Sigmabs_N\Sigmabs_N^*$, with $\delta_x$ the Dirac
measure at point $x$.
From Dozier \& Silverstein \cite{dozier2007empirical} \cite{dozier2007analysis}, there exists a deterministic probability measure $\mu_N$, 
with support $\supp(\mu_N) \subset \Rbb^+$, such that w.p.1.,
\begin{align}
	\hat{\mu}_N - \mu_N \xrightarrow[N \to \infty]{w} 0,
	\notag
\end{align}
where "$w$" stands for the weak convergence.
Equivalently the Stieltjes transform $\hat{m}_N(z)$ of $\hat{\mu}_N$, defined by
\begin{align}
	\hat{m}_N(z) = \int_{\Rbb} \frac{\drm \hat{\mu}_N(\lambda)}{\lambda - z} = \frac{1}{M} \Tr \Q_N(z)
	\notag
\end{align}
where $\Q_N(z) = \left(\Sigmabs_N\Sigmabs_N^* - z \I\right)^{-1}$ satisfies for all $z \in \Cbb \backslash \Rbb$
\begin{align}
	\hat{m}_N(z) - m_N(z) \xrightarrow[N \to \infty]{a.s.} 0,
	\notag
\end{align}
where $m_N(z) = \int_{\Rbb} \frac{\drm \mu_N(\lambda)}{\lambda - z}$ is the Stieltjes transform of $\mu_N$, which satisfies the equation
\begin{align}
	m_N(z) = \frac{1}{M} \Tr \T_N(z),
	\label{def:m}
\end{align}
for all $z \in \Cbb\backslash\Rbb$, where the matrix $\T_N(z)$ is defined by
\begin{align}
	\T_N(z) = \left(\frac{\B_N\B_N^*}{1+\sigma^2 c_N m_N(z)} - z (1+\sigma^2 c_N m_N(z))\I + \sigma^2 (1-c_N)\I \right)^{-1}.
	\notag
\end{align}
Moreover, $m_N(z)$ can be further continuously extended to the real axis when $z \in \Cbb^+ \to x \in \Rbb$, and we denote the limit $m_N(x)$. 
Defined in this way, $x \mapsto m_N(x)$ is continuous on $\Rbb$, continuously differentiable on $\Rbb\backslash \partial \supp(\mu_N)$ and still satisfies
the equation \eqref{def:m} for $x \in \Rbb\backslash \partial \supp(\mu_N)$. 

We now recall the characterization of the support of $\mu_N$ provided in \cite{vallet2012improved}.
Define the function $w_N(z)$ by
\begin{align}
	w_N(z) = z \left(1+\sigma^2 c_N m_N(z)\right)^2 - \sigma^2 (1-c_N)\left(1+\sigma^2 c_N m_N(z)\right),
	\label{def:w}
\end{align}
The main equation \eqref{def:m} can be expressed in terms of an equation in $w_N(z)$, i.e.
\begin{align}
	z = \phi_N(w_N(z)),
	\label{eq:phi_w_z}
\end{align}
where
\begin{align}
	\phi_N(w)=w(1-\sigma^2 c_N f_N(w))^2 + \sigma^2(1-c_N)(1-\sigma^2 c_N f_N(w))
	\label{def:phi}
\end{align}
and 
\begin{align}
	f_N(w) = \frac{1}{M}\Tr\left(\B_N\B_N^* - w \I\right)^{-1}. 
	\label{def:f}
\end{align}
Starting from the properties that $w_N$ is real and increasing on $\Rbb\backslash\supp(\mu_N)$ and $w_N(x) \in \Cbb^+$ for $x \in \supp(\mu_N)$, \cite{vallet2012improved} 
characterized $w_N(x)$ among the set of all solutions of the polynomial equation $\phi_N(w)=x$ (which has degree $2K+2$), for $x \in \Rbb$, 
and showed that $\phi_N$ admits $2Q$ ($1 \leq Q \leq K+1$) positive local extrema
\footnote{Note that $Q=Q(N)$ is a function of $N$} $0 < x_{1,N}^- < x_{1,N}^+ < \ldots < x_{Q,N}^- < x_{Q,N}^+$  whose preimages are
\begin{align}
	w_{N}(x_{1,N}^-) < 0 < w_N(x_{1,N}^+) < \ldots < w_N(x_{Q,N}^-) < w_N(x_{Q,N}^+).
	\label{eq:ordering_wq}
\end{align}
Moreover, we always have $w_N(x_{Q,N}^+) > \lambda_{1,N}$, and if $Q > 1$, it turns out that for each $q =1 ,\ldots, Q-1$, there exists $k \in \{0,\ldots,K\}$ such that
\begin{align}
	w_N(x_{q,N}^+), w_N(x_{q+1,N}^-) \in \left(\lambda_{k,N}, \lambda_{k+1,N}\right).
	\notag
\end{align}
By differentiating \eqref{eq:phi_w_z} on both sides, we find $\phi'_N(w_N(x)) > 0$ for all $x \in \Rbb \backslash \supp(\mu_N)$.
Finally, by showing that $\Im\left(w_N(x)\right)=0$ for $x \in \Rbb \backslash \bigcup_{q=1}^Q [x_{q,N}^-,x_{q,N}^+]$ and $\Im\left(w_N(x)\right) > 0$
for $x \in \bigcup_{q=1}^Q [x_{q,N}^-,x_{q,N}^+]$, \cite{vallet2012improved} concludes that the support of $\mu_N$ is given by the union
\begin{align}
	\supp(\mu_N) = \bigcup_{q=1}^Q \left[x_{q,N}^-, x_{q,N}^+\right],
	\label{def:supp_mu}
\end{align}
where the intervals $\left[x_{q,N}^-, x_{q,N}^+\right]$ are called "clusters".

A typical illustration of function $\phi_N(w)$ for $w\in \Rbb$ is given in figure \ref{figure:phi1}.
\begin{center}
	\begin{figure}[h]
		\centering
		\includegraphics[scale=1]{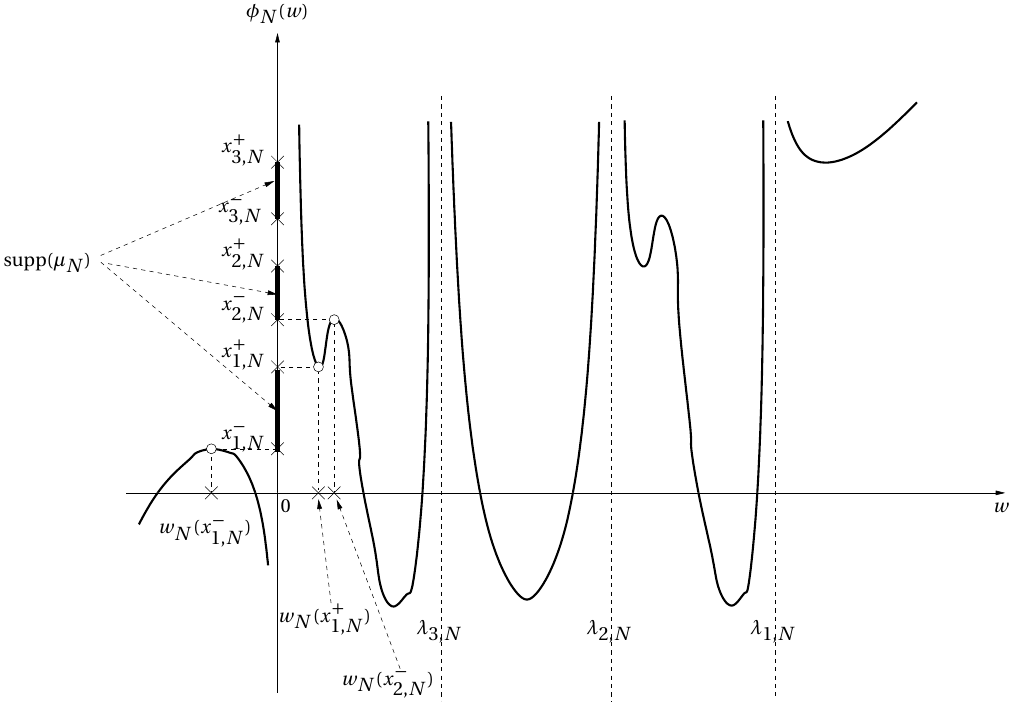}
		\caption
		{
			Typical example of the function $\phi_N$, for $K=3$.	 For $x \in (x_{q,N}^+, x_{q+1,N}^-)$ with $q=1,\ldots,Q-1$, 
			the equation $\phi_N(w)=x$ admits $2K+2$ real solutions and $w_N(x)$ is the unique solution in the interval $(w_N(x_{q,N}^+), w_N(x_{q+1,N}^-))$, 
			and for $x < x_{1,N}^-$ (resp. $x > x_{Q,N}^+$), $w_N(x)$ is the unique solution in $(- \infty, w_N(x_{1,N}^-))$ (resp. $(w_N(x_{Q,N}^+),\infty)$). 
			Conversely, for $x \in \supp(\mu_N)$, the equation $\phi_N(w)=x$ admits $2K$ real solutions plus two complex
			conjugate solutions, and $w_N(x)$ coincides with the solution having positive imaginary part.
		}
		\label{figure:phi1}
	\end{figure}
\end{center}

		\subsection{Useful quantities and bounds}
		
We now introduce a few bounds which will be of constant use for the derivation of the main results of the paper.
Let us define
\begin{align}
	\tilde{m}_N(z) = c_N m_N(z) - \frac{1-c_N}{z},
	\label{def:mtilde}
\end{align}
which corresponds to the Stieltjes transform of the probability measure $c_N \mu_N + (1-c_N) \delta_0$. 
It can be shown that $\tilde{m}_N(z) = \frac{1}{N} \Tr \tilde{\T}_N(z)$, with
\begin{align}
	\tilde{\T}_N(z) = 
	\left(
		\frac{\B_N^*\B_N}{1+\sigma^2 \tilde{m}_N(z)} - z (1+\sigma^2 c_N m_N(z)) \I
	\right)^{-1},
	\label{def:Ttilde}
\end{align}
and note that $w_N(z)$ defined in \eqref{def:w} can be written as
\begin{align}
	w_N(z) = z \left(1+\sigma^2 c_N m_N(z)\right)\left(1+\sigma^2 \tilde{m}_N(z)\right).
	\notag
\end{align}
The proof of the following bounds can be found in \cite{loubaton2011almost}, \cite{vallet2012improved} and \cite{hachem2012large}: matrices $\T_N(z)$ and $\tilde{\T}_N(z)$ satisfy
\begin{align}
	\left\|\T_N(z)\right\| \leq \frac{C}{\drm\left(z_,\supp(\mu_N)\right)}
	\quad\text{and}\quad
	\left\|\tilde{\T}_N(z)\right\| \leq \frac{\tilde{C}}{\drm\left(z_,\supp(\mu_N)\cup \{0\}\right)},
	\label{eq:bound_norm_T_Ttilde}
\end{align}
where $\drm\left(z, \Ecal\right)$ is the distance of $z$ to a set $\Ecal$, and $C, \tilde{C}$ are two positive constants independent of $N,z$. 
We also have, for all $z \in \Cbb$,
\begin{align}
	\left|1+\sigma^2 c_N m_N(z)\right|^{-1} \leq 2,
	\label{eq:bound_m_half}
\end{align}
and
\begin{align}
	\min_{k=1,\ldots,M} \left|\lambda_{k,N} - w_N(z)\right| \geq \frac{\drm\left(z,\supp(\mu_N)\right)}{2}.
	\label{eq:bound_dist_w_lambda}
\end{align}
Note finally the two useful identities
\begin{align}
	1+\sigma^2 c_N m_N(z) = \frac{1}{1-\sigma^2 c_N f_N(w_N(z))} \text{ and } 1+\sigma^2 c_N \tilde{m}_N(z) = \frac{1}{1-\sigma^2 c_N \tilde{f}_N(w_N(z))},
	\label{eq:link_m_f}
\end{align}
where $f_N$ is defined by \eqref{def:f} and $\tilde{f}_N(w) = \frac{1}{M} \Tr\left(\B^*_N\B_N - w \I\right)^{-1}$.

To conclude this section, we introduce some quantities which will appear during the computations of the CLT.
We define
\begin{align}
	u_N(z_1,z_2) = \frac{\sigma^2}{N} \Tr \frac{\T_N(z_1)\B_N\B_N^*\T_N(z_2)}{\left(1+\sigma^2 c_N m_N(z_1)\right)\left(1+\sigma^2 c_N m_N(z_2)\right)}
	\label{def:u}
\end{align}
as well as
\begin{align}
	v_N(z_1,z_2) = \frac{\sigma^2}{N} \Tr \T_N(z_1) \T_N(z_2)
	\quad\text{and}\quad
	\tilde{v}_N(z_1,z_2) = \frac{\sigma^2}{N} \Tr \tilde{\T}_N(z_1) \tilde{\T}_N(z_2).
	\label{def:v_vtilde}	
\end{align}
Finally, we define
\begin{align}
	\Delta_N(z_1,z_2) = \left(1-u_N(z_1,z_2)\right)^2 - z_1 z_2 v_N(z_1,z_2)\tilde{v}_N(z_1,z_2).
	\label{def:determinant}
\end{align}
The last quantity $\Delta_N(z_1,z_2)$ satisfies moreover the following bounds.
\begin{lemma}
	\label{lemma:properties_determinant}
	For all $z_1,z_2 \not \in \supp(\mu_N)$ such that $z_1 \neq z_2$, $\Delta_N(z_1,z_2)$ can be represented as
	\begin{align}
		\Delta_N(z_1,z_2) = \frac{z_1-z_2}{w_N(z_1)-w_N(z_2)}.
		\label{eq:link_Delta_w}
	\end{align}
	Moreover, if there exists a closed set $\Ecal$ independent of $N$ s.t. $\supp(\mu_N) \subset \Ecal$ for all large $N$, and 
	if $\Kcal$ is a compact set s.t. $\Kcal \subset \Cbb\backslash\left(\{0\} \cup \Ecal\right)$, then
	\begin{align}
		\limsup_{N\to \infty} \sup_{z_1,z_2 \in \Kcal} \left|u_N(z_1,z_2)\right| < 1
		\label{eq:bound_u}
	\end{align}
	and	
	\begin{align}
		0 < \liminf_{N \to \infty} \inf_{z_1,z_2 \in \Kcal} \left|\Delta_N(z_1,z_2)\right|
		\leq \limsup_{N\to \infty} \sup_{z_1,z_2 \in \Kcal} \left|\Delta_N(z_1,z_2)\right| < \infty.
		\label{eq:bound_det}
	\end{align}
	Finally, we also have
	\begin{align}
		\left|\frac{\Delta_N(z_1,z_2)}{\left(1-u_N(z_1,z_2)\right)^2} - 1\right| < 1,
		\label{eq:bound_det_serie}
	\end{align}
	for all $z_1,z_2 \in \Kcal$.
\end{lemma}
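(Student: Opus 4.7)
The plan is to establish the four assertions in order, with the lower bound in \eqref{eq:bound_det} deduced from \eqref{eq:bound_u} and \eqref{eq:bound_det_serie}. For \eqref{eq:link_Delta_w}, the key observation is that $-z(1+\sigma^2 c_N m_N(z))+\sigma^2(1-c_N) = -z(1+\sigma^2\tilde m_N(z))$, so $\T_N(z) = (1+\sigma^2 c_N m_N(z))(\B_N\B_N^*-w_N(z)\I)^{-1}$ and $\tilde\T_N(z) = (1+\sigma^2\tilde m_N(z))(\B_N^*\B_N-w_N(z)\I)^{-1}$. Setting $\beta_i = (1+\sigma^2 c_N m_N(z_i))^{-1}$ and $\tilde\beta_i = (1+\sigma^2\tilde m_N(z_i))^{-1}$, I would use the spectral decomposition of $\B_N\B_N^*$ and the resolvent identity to obtain
\begin{align*}
1-u_N(z_1,z_2) &= \frac{w_1\beta_1-w_2\beta_2}{w_1-w_2}, \\
z_1 z_2\,v_N(z_1,z_2)\tilde v_N(z_1,z_2) &= \frac{w_1 w_2(\beta_2-\beta_1)(\tilde\beta_2-\tilde\beta_1)}{(w_1-w_2)^2},
\end{align*}
after invoking $\sigma^2 c_N f_N(w_i)=1-\beta_i$ from \eqref{eq:link_m_f} and $z_i = w_i\beta_i\tilde\beta_i$. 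Combined with $\tilde\beta_i-\beta_i = \sigma^2(1-c_N)/w_i$ (obtained from $\phi_N(w_i)=z_i$ and the definition of $\phi_N$), a direct algebraic expansion shows that the numerator of $(1-u_N)^2-z_1 z_2 v_N\tilde v_N$ factors as $(z_1-z_2)(w_1-w_2)$, giving \eqref{eq:link_Delta_w}.

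For \eqref{eq:bound_u}, the crux is a dissipative identity. For $z\in\Cbb^+$, decomposing
\begin{align*}
\Im\T_N(z) = \T_N(z)\Bigl(\tfrac{\sigma^2 c_N\Im m_N(z)}{|1+\sigma^2 c_N m_N(z)|^2}\B_N\B_N^* + \Im\bigl(z(1+\sigma^2\tilde m_N(z))\bigr)\I\Bigr)\T_N(z)^*
\end{align*}
and taking the normalized trace yields
\begin{align*}
(1-u_N(z,z^*))\,\Im m_N(z) = \frac{\Im\bigl(z(1+\sigma^2\tilde m_N(z))\bigr)}{\sigma^2 c_N}\,v_N(z,z^*).
\end{align*}
The right-hand side is strictly positive for $z\in\Cbb^+$: $\Im m_N(z)>0$, $v_N(z,z^*)>0$, and $\Im(z(1+\sigma^2\tilde m_N(z))) = \Im z+\sigma^2\Im(z\tilde m_N(z))>0$ because $z\mapsto z\tilde m_N(z)$ is a Nevanlinna function associated with the probability measure $c_N\mu_N+(1-c_N)\delta_0$ on $[0,\infty)$. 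This forces $u_N(z,z^*)<1$ on $\Cbb^+$, extending to $\Cbb^-$ by conjugation and to $\Rbb\setminus\supp(\mu_N)$ by continuity, using $\Delta_N(z,z) = 1/w'_N(z)>0$ (from $\phi'_N(w_N(z))>0$) to exclude the value $1$. Writing $u_N(z_1,z_2) = \langle\x,\y\rangle$ with $\x_k=(\sigma^2\lambda_{k,N}/N)^{1/2}/(\lambda_{k,N}-w_N(z_1))$ and $\y_k=(\sigma^2\lambda_{k,N}/N)^{1/2}/(\lambda_{k,N}-\overline{w_N(z_2)})$, Cauchy--Schwarz gives $|u_N(z_1,z_2)|^2\leq u_N(z_1,z_1^*)u_N(z_2,z_2^*)<1$. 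Uniformity on $\Kcal$ follows by compactness together with the uniform-in-$N$ bounds \eqref{eq:bound_norm_T_Ttilde}--\eqref{eq:bound_dist_w_lambda} and $\sup_N\|\B_N\|<\infty$.

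For \eqref{eq:bound_det_serie}, the same Cauchy--Schwarz applied to $v_N$ and $\tilde v_N$, combined with \eqref{eq:link_Delta_w} evaluated at $(z_i,z_i^*)$, gives $|z_i|^2 v_N(z_i,z_i^*)\tilde v_N(z_i,z_i^*) = (1-u_N(z_i,z_i^*))^2-\Delta_N(z_i,z_i^*)<(1-u_N(z_i,z_i^*))^2$ (since $\Delta_N(z_i,z_i^*) = \Im z_i/\Im w_N(z_i)>0$ for $z_i\in\Cbb^+$), whence $|z_1 z_2 v_N\tilde v_N|<(1-u_N(z_1,z_1^*))(1-u_N(z_2,z_2^*))$. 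On the other hand, the Hilbert-space representation yields $|1-u_N(z_1,z_2)|\geq 1-\|\x\|\|\y\|$, and the elementary identity $(1-\|\x\|\|\y\|)^2 = (1-\|\x\|^2)(1-\|\y\|^2)+(\|\x\|-\|\y\|)^2$ implies $|1-u_N(z_1,z_2)|^2\geq(1-u_N(z_1,z_1^*))(1-u_N(z_2,z_2^*))$. Combining both inequalities produces $|z_1 z_2 v_N\tilde v_N|/|1-u_N(z_1,z_2)|^2<1$, which is \eqref{eq:bound_det_serie}. Finally, the upper bound in \eqref{eq:bound_det} is immediate from \eqref{eq:bound_norm_T_Ttilde} and the definition of $\Delta_N$, while the lower bound follows from $|\Delta_N|=|1-u_N|^2\bigl|1-z_1 z_2 v_N\tilde v_N/(1-u_N)^2\bigr|$ using the uniform versions of \eqref{eq:bound_u} and \eqref{eq:bound_det_serie}. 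The main obstacle I anticipate is precisely upgrading the pointwise strict inequalities in \eqref{eq:bound_u} and \eqref{eq:bound_det_serie} to uniform bounds on $\Kcal$ for all large $N$, which I would handle by a compactness argument combined with the uniform estimates already collected in \eqref{eq:bound_norm_T_Ttilde}--\eqref{eq:bound_dist_w_lambda}.
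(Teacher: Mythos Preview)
Your approach is essentially correct and matches the paper's in its use of Cauchy--Schwarz to reduce the two-variable quantities $u_N,v_N,\tilde v_N$ to their diagonal values at $(z,z^*)$, and in the elementary inequality $(1-\sqrt{ab})^2\geq(1-a)(1-b)$. Two points deserve comment.

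\textbf{On \eqref{eq:link_Delta_w}.} Your direct spectral computation via $\beta_i,\tilde\beta_i$ is correct and arguably more transparent than the paper's argument, which instead observes that $\Delta_N(z_1,z_2)$ is the determinant of the $2\times2$ linear system
\[
\begin{bmatrix}\sigma c_N(m_N(z_1)-m_N(z_2))\\\sigma(z_1\tilde m_N(z_1)-z_2\tilde m_N(z_2))\end{bmatrix}
=
\begin{bmatrix}u_N&v_N\\z_1z_2\tilde v_N&u_N\end{bmatrix}
\begin{bmatrix}\sigma c_N(m_N(z_1)-m_N(z_2))\\\sigma(z_1\tilde m_N(z_1)-z_2\tilde m_N(z_2))\end{bmatrix}
+\frac{z_1-z_2}{\sigma}\begin{bmatrix}v_N\\u_N\end{bmatrix},
\]
solves for the first component, and compares with the resolvent identity $\sigma^2 c_N(m_N(z_1)-m_N(z_2))=(w_N(z_1)-w_N(z_2))v_N(z_1,z_2)$. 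Both routes yield the same factorisation.

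\textbf{On the lower bound in \eqref{eq:bound_det}.} Here your proposed route has a genuine weakness. You plan to deduce $\liminf_N\inf_{\Kcal}|\Delta_N|>0$ from \emph{uniform} versions of \eqref{eq:bound_u} and \eqref{eq:bound_det_serie}, and you flag the upgrade from pointwise to uniform as the main obstacle. But \eqref{eq:bound_det_serie} as stated is only a pointwise strict inequality, and a compactness argument alone cannot produce a bound uniform in $N$ without further input; in effect you would be proving a strictly stronger statement than the lemma asserts. The paper avoids this detour entirely: it shows directly, via the inequality $(1-\sqrt{x_1x_2})^2-\sqrt{s_1t_1s_2t_2}\geq\sqrt{((1-x_1)^2-s_1t_1)((1-x_2)^2-s_2t_2)}$ applied with $x_i=u_N(z_i,z_i^*)$ and $s_it_i=|z_i|^2v_N(z_i,z_i^*)\tilde v_N(z_i,z_i^*)$, that
\[
|\Delta_N(z_1,z_2)|\geq|\Delta_N(z_1,z_1^*)|^{1/2}|\Delta_N(z_2,z_2^*)|^{1/2},
\]
and then invokes the uniform diagonal bounds $\sup_N\sup_{z\in\Kcal}|u_N(z,z^*)|<1$ and $\inf_N\inf_{z\in\Kcal}|\Delta_N(z,z^*)|>0$, which it cites from \cite{hachem2012subspace} rather than deriving from your dissipative identity. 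This reduction to the diagonal is the cleanest way to obtain the uniform lower bound, and I would recommend you adopt it in place of your proposed route through \eqref{eq:bound_det_serie}.
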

Lemma \ref{lemma:properties_determinant} is proved in appendix \ref{appendix:determinant}.

		\subsection{Separation of the sample eigenvalues}

In this section, we review some existing results concerning the location of the sample eigenvalues.
		
The following terminology will be used in the remainder: an eigenvalue $\lambda_{k,N}$ of $\B_N\B_N^*$ is \textit{associated with the interval} $[x_{q,N}^-,x_{q,N}^+]$ of the support of $\mu_N$ if $w_N(x_{q,N}^-) <\lambda_{k,N} < w_N(x_{q,N}^+)$. 
It turns out that the "noise eigenvalue" $0$ is always associated with the first interval $[x_{1,N}^-,x_{1,N}^+]$ since $w_N(x_{1,N}^-) < 0 < w_N(x_{1,N}^+)$ 
(see \eqref{eq:ordering_wq}), which is thus called in this context "noise cluster". 
Moreover, each "signal eigenvalue" $\lambda_{1,N},\ldots,\lambda_{K,N}$ is associated with a unique interval $[x_{q,N}^-,x_{q,N}^+]$ for $q=1,\ldots,Q$ ; 
in particular, a signal eigenvalue may be associated with the "noise cluster" while two signal eigenvalues may be associated with the same interval.
We now introduce the two following additional assumptions, which informally ensure that the $K$ signal eigenvalues $\lambda_{1,N},\ldots,\lambda_{K,N}$
will not be associated with the noise cluster, that is, will be separated from the "noise eigenvalue" $0$ ($\lambda_{K+1,N},\ldots,\lambda_{M,N}$) for large $N$.
This assumption  will be necessary to guaranty the consistency of the subspace estimator introduced in the forthcoming sections.
\begin{assumption}
	\label{assumption:subspace1}
	There exists $t_1^-,t_1^+,t_2^-,t_2^+$ s.t.
	\begin{align}
		0 < t_1^- < \liminf_{N \to \infty} x_{1,N}^- < \limsup_{N \to \infty} x_{1,N}^+ < t_1^+ 
		< t_2^- < \liminf_{N\to \infty} x_{2,N}^- < \limsup_{N \to \infty} x_{Q,N}^+ < t_2^+.
		\notag
	\end{align}
\end{assumption}
Assumption {\bf A-\ref{assumption:subspace1}} thus ensures that the noise cluster remains asymptotically separated from the the other intervals in the support of $\mu_N$, as $N \to \infty$.
From \eqref{eq:ordering_wq} and the fact that $w_N$ is increasing on $\Rbb \backslash \mu_N$, we have $w_N(t_1^-) < 0 < w_N(t_1^+) < w_N(t_2^-)$ for all large $N$.
The second assumption {\bf A-\ref{assumption:subspace2}} is related to the behaviour of the signal eigenvalues. 
\begin{assumption}
	\label{assumption:subspace2}
	For all large $N$, $0$ is the unique eigenvalue of $\B_N\B_N^*$ associated with the noise cluster, i.e.
	\begin{align}
		 w_N(t_2^-) < \lambda_{K,N}.
		\notag
	\end{align}
\end{assumption}
Note that this assumption implies that $\liminf_{N\to\infty} \lambda_{K,N} > 0$, thus ensuring that noise and signal eigenvalues are asymptotically separated 
(see lemma \ref{lemma:conseq_sep} below).
These separation conditions have a direct consequence on the localization of the eigenvalues of the matrix $\Sigmabs_N\Sigmabs_N^*$.
Indeed, it was shown in \cite{vallet2012improved} that under assumptions {\bf A-\ref{assumption:subspace1}} and {\bf A-\ref{assumption:subspace2}},
\begin{align}
	\hat{\lambda}_{1,N},\ldots,\hat{\lambda}_{M-K,N} \in [t_1^-,t_1^+]
	\quad\text{and}\quad
	\hat{\lambda}_{M-K+1,N},\ldots,\hat{\lambda}_{M,N} \in [t_2^-,t_2^+],
	\label{eq:separation_lambdahat}
\end{align}
with probability one, for $N$ large, i.e. the "noise sample eigenvalues" split from the "signal sample eigenvalues".
An illustration of the density of $\mu_N$ and the localization of the sample eigenvalues \eqref{eq:separation_lambdahat} is given in figure \ref{figure:loceig}.

\begin{figure}[!h]
	\centering
	\includegraphics[scale=0.4]{./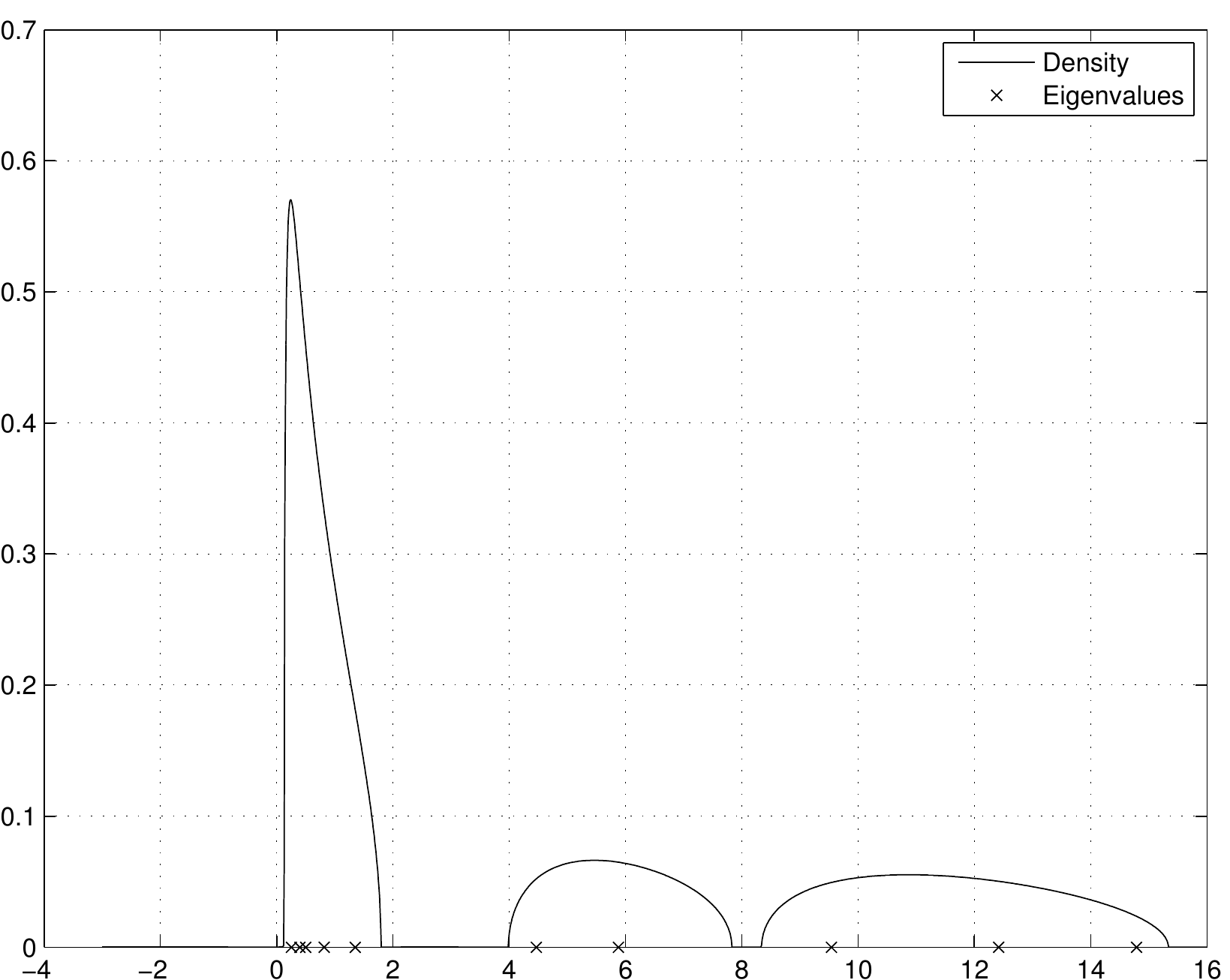}
	\caption
	{
		Density of $\mu_N$ and sample eigenvalues of $\Sigmabs_N\Sigmabs_N^*$ for one trial. The parameters are $M=10$, $N=20$, $\sigma=1$ and the eigenvalues
		of $\B_N\B_N^*$ are $0$ (with multiplicity $5$), $5$ (with multiplicity $2$) and $10$ (with multiplicity $3$)
	}
	\label{figure:loceig}
\end{figure}

Functions $\phi_N$ represents in some sense a link between the support of $\mu_N$ and the eigenvalues of $\B_N\B_N^*$.
Figure \ref{figure:phi2} shows the consequence of assumption {\bf A-\ref{assumption:subspace2}} on the behaviour of $\phi_N(w)$ near $w=0$.
\begin{center}
	\begin{figure}[h]
		\centering
		\subfigure[Separation]
		{
			\includegraphics[scale=1.2]{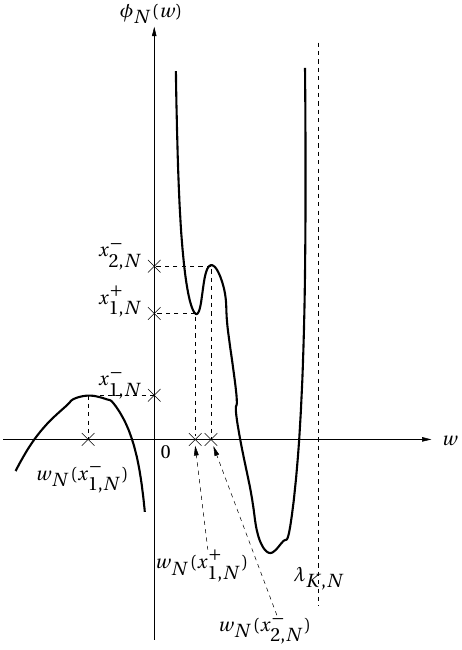}
			\label{subfigure:sep}
		}
		\qquad\qquad
		\subfigure[No separation]
		{
			\includegraphics[scale=1.2]{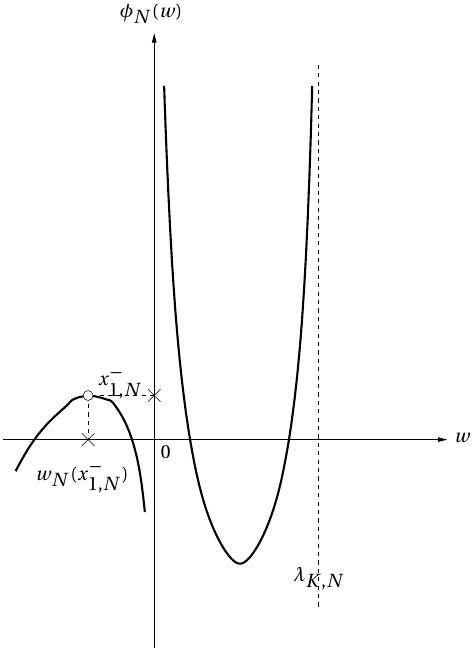}
			\label{subfigure:nosep}
		}
		\caption
		{
			Typical example of the behaviour of $\phi_N$ near $0$, when assumption {\bf A-\ref{assumption:subspace2}} is satified \subref{subfigure:sep}, 
			or not \subref{subfigure:nosep}.
		}
		\label{figure:phi2}
	\end{figure}
\end{center}
To conclude this part, we show that the separation conditions {\bf A-\ref{assumption:subspace1}} and {\bf A-\ref{assumption:subspace2}}
ensure the effective separation between signal and noise eigenvalues of $\B_N\B_N^*$.
\begin{lemma}
	\label{lemma:conseq_sep}
	Assume the separation conditions {\bf A-\ref{assumption:subspace1}} and {\bf A-\ref{assumption:subspace2}} hold. Then,
	\begin{align}
		\liminf_{N \to \infty} \lambda_{K,N} > 0.
		\notag
	\end{align}
\end{lemma}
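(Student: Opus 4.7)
The plan is to derive the desired lower bound on $\lambda_{K,N}$ by combining the uniform separation of $t_2^-$ from $\supp(\mu_N)$ provided by Assumption {\bf A-\ref{assumption:subspace1}} with the distance bound \eqref{eq:bound_dist_w_lambda} applied at $z=t_2^-$, and then invoking Assumption {\bf A-\ref{assumption:subspace2}} to transfer the resulting estimate into a lower bound on $\lambda_{K,N}$.

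First, I would check that $\drm(t_2^-,\supp(\mu_N))$ is bounded below by a fixed positive constant for $N$ large. Since $\supp(\mu_N) = \bigcup_{q=1}^Q [x_{q,N}^-, x_{q,N}^+]$ and $t_2^-$ sits in the gap between the noise cluster and the remaining clusters, the strict inequalities $\limsup_N x_{1,N}^+ < t_1^+ < t_2^- < \liminf_N x_{2,N}^-$ from {\bf A-\ref{assumption:subspace1}} immediately yield the existence of some $\delta>0$ with $\drm(t_2^-,\supp(\mu_N)) \geq \delta$ for all large $N$.

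Next I would apply \eqref{eq:bound_dist_w_lambda} at $z=t_2^-$. Because $\B_N$ has rank $K<M$, the matrix $\B_N\B_N^*$ admits $0$ as an eigenvalue (say $\lambda_{M,N}=0$), so restricting the minimum in \eqref{eq:bound_dist_w_lambda} to this index gives
\begin{align}
|w_N(t_2^-)| = |\lambda_{M,N}-w_N(t_2^-)| \geq \frac{\drm(t_2^-,\supp(\mu_N))}{2} \geq \frac{\delta}{2}.
\notag
\end{align}
To fix the sign, \eqref{eq:ordering_wq} provides $w_N(x_{1,N}^+)>0$; moreover, for $N$ large the interval $(x_{1,N}^+, x_{2,N}^-)$ lies in $\Rbb\setminus\supp(\mu_N)$ and contains $t_2^-$, so the monotonicity of $w_N$ on $\Rbb\setminus\supp(\mu_N)$ and its continuity on $\Rbb$ yield $w_N(t_2^-) \geq w_N(x_{1,N}^+) > 0$, hence $w_N(t_2^-) \geq \delta/2$.

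Finally, Assumption {\bf A-\ref{assumption:subspace2}} states that $w_N(t_2^-)<\lambda_{K,N}$ for all large $N$, which combined with the previous step gives $\lambda_{K,N} > \delta/2$ and therefore $\liminf_N \lambda_{K,N} \geq \delta/2 > 0$. The argument is a direct consequence of the tools reviewed above; the only point requiring slight care is the sign of $w_N(t_2^-)$, which is settled by the monotonicity step, so no substantial obstacle is expected.
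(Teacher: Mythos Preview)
Your proof is correct and takes a genuinely different, more direct route than the paper. The paper argues by contradiction: assuming a subsequence with $\lambda_{K,\varphi(N)}\to 0$, it uses $0<w_N(t_1^+)<w_N(t_2^-)<\lambda_{K,N}$ to force $w_{\varphi(N)}(y)-w_{\varphi(N)}(x)\to 0$ for two fixed points $x<y$ in the gap $(t_1^+,t_2^-)$, and then derives a contradiction from the identity $\Delta_N(x,y)=\frac{x-y}{w_N(x)-w_N(y)}$ of Lemma~\ref{lemma:properties_determinant} together with the uniform upper bound \eqref{eq:bound_det} on $|\Delta_N|$. Your argument bypasses Lemma~\ref{lemma:properties_determinant} entirely: you simply apply the ready-made distance bound \eqref{eq:bound_dist_w_lambda} at $z=t_2^-$ with the eigenvalue $\lambda_{M,N}=0$ to get $w_N(t_2^-)\geq \delta/2$, and then invoke A-\ref{assumption:subspace2}. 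This is shorter and more elementary; the paper's route has the mild advantage of exercising the $\Delta_N$ machinery that is needed later anyway, but as a proof of this lemma your approach is cleaner. Incidentally, the positivity of $w_N(t_2^-)$ that you single out as the only delicate point is in fact stated explicitly in the paper just before Assumption A-\ref{assumption:subspace2}, so no extra monotonicity argument is really needed.
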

\begin{proof}
	Assume the converse. Then there exists a subsequence $\varphi(N)$ such that $\lambda_{K,\varphi(N)} \to_N 0$.
	From the condition {\bf A-\ref{assumption:subspace2}}, we have for all large $N$
	\begin{align}
		w_N(t_1^-) < 0 < w_N(t_1^+) < w_N(t_2^-) < \lambda_{K,N},
		\notag
	\end{align}
	and the condition {\bf A-\ref{assumption:subspace1}} ensures the existence of $x,y \in (t_1^+,t_2^-)$, with $x < y$, such that
	\begin{align}
		w_{\varphi(N)}(y) - w_{\varphi(N)}(x) \xrightarrow[N \to \infty]{} 0.
		\label{eq:diff_w}
	\end{align}
	But using \eqref{eq:link_Delta_w} and \eqref{eq:bound_det} in lemma \ref{lemma:properties_determinant} contradicts \eqref{eq:diff_w}.
\end{proof}

	\subsection{The spiked model case: fixed rank}
	\label{section:spiked_model}
	
When $K$ is constant with respect to $N$, the results of the previous sections can be simplified. Indeed, in this case, we have for all $z \in \Cbb\backslash\Rbb^+$,
\begin{align}
	m_N(z) - m(z) \xrightarrow[N \to \infty]{} 0,
	\label{eq:conv_MP}
\end{align}
where $m(z)$ satisfies the equation
\begin{align}
	m(z) = \frac{1}{-z\left(1+\sigma^2 c m(z)\right) + \sigma^2 (1-c)},
	\label{eq:def_m_check_N}
\end{align}
and is the Stieltjes transform of the Marchenko-Pastur distribution \cite{Marchenko1967distribution}, with support $[\sigma^2(1-\sqrt{c})^2,\sigma^2(1+\sqrt{c})^2]$.
An illustration of the Marchenko-Pastur distribution is given in figure \ref{figure:densityMP}.
\begin{center}
	\begin{figure}[!h]
		\centering
		\includegraphics[scale=0.4]{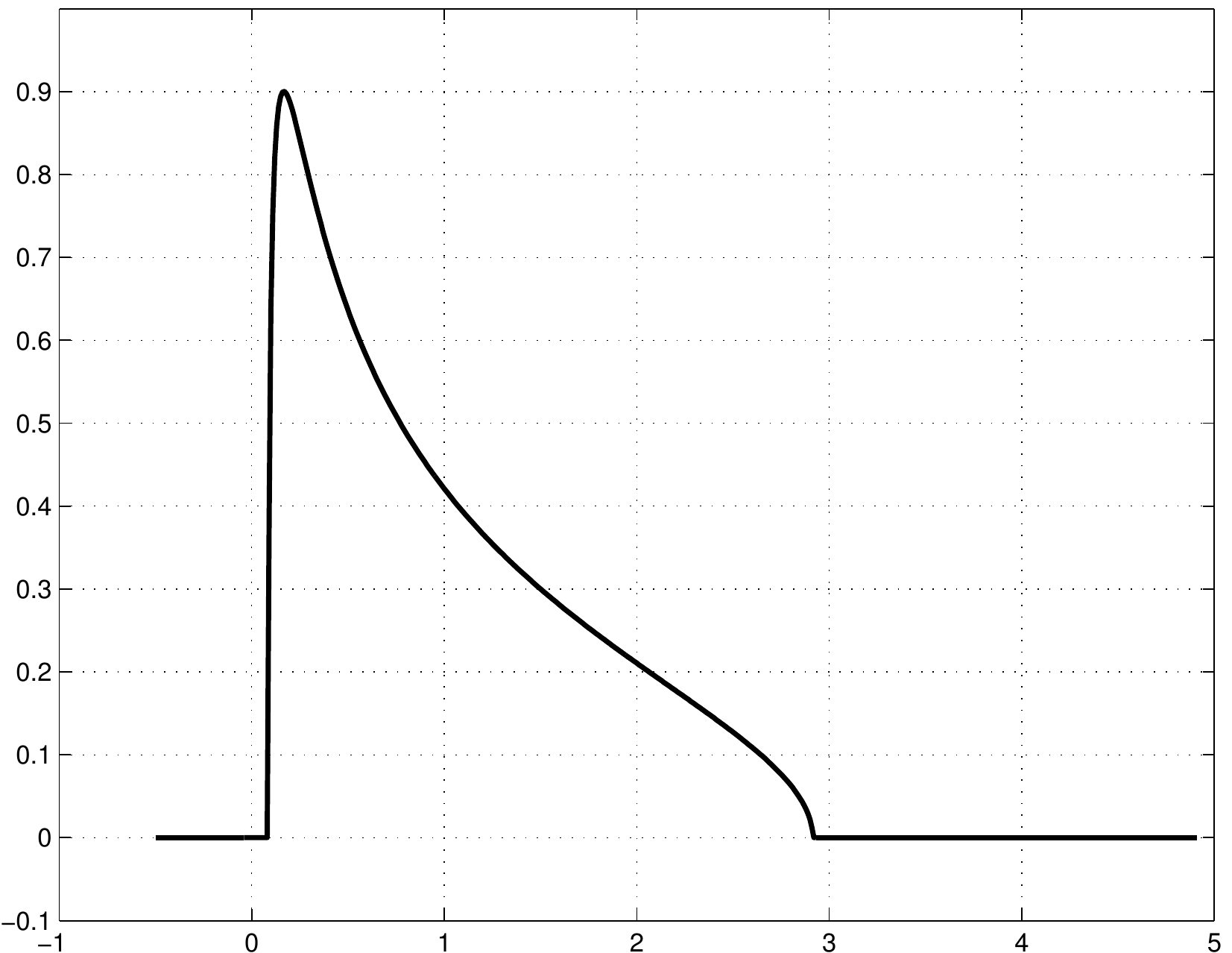}
		\caption
		{
			Density of the Marchenko-Pastur distribution 
			in the case where $c = 0.5$, $\sigma=1$.
		}
		\label{figure:densityMP}
	\end{figure}
\end{center}
For any compact $\Kcal \in \Cbb \backslash \left(\left[t_1^-,t_1^+\right]\cup\left[t_2^-,t_2^+\right]\right)$, \eqref{eq:conv_MP} can be strengthened with
\begin{align}
	\sup_{z \in \Kcal} \left|m_N(z) - m(z)\right| \xrightarrow[N \to \infty]{} 0,
	\label{eq:conv_MP2}
\end{align}
Simple algebra allows to rewrite the usual quantities in a simpler way. Indeed, we will have (in the same way as for \eqref{eq:link_m_f})
\begin{align}
	1+\sigma^2 c m(z) = \frac{w(z)}{w(z) + \sigma^2 c},
	\notag
\end{align}
with $w(z)$ given by
\begin{align}
	w(z)=z\left(1+\sigma^2 c m(z)\right)^2 + \sigma^2 (1-c)\left(1+\sigma^2 c m(z)\right).
	\label{eq:def_w_check_N}
\end{align}
As for \eqref{eq:phi_w_z}, equation \eqref{eq:def_m_check_N} can be rewritten as
\begin{align}
	\phi(w(z))=z,
	\label{eq:canonical_eq_mp}
\end{align}
where 
\begin{align}
	\phi(w) = \frac{\left(w+\sigma^2 c\right)\left(w+\sigma^2\right)}{w}.
	\label{eq:phi_MP}
\end{align}
Of course, the boundary points of the support of the Marchenko-Pastur distribution, namely $\sigma^2(1-\sqrt{c})^2$ and $\sigma^2(1+\sqrt{c})^2$, 
coincides with the local extrema of $\phi$, and with respective preimages 
\begin{align}
	w\left(\sigma^2(1-\sqrt{c})^2\right)=-\sigma^2 \sqrt{c} \text{ and } w\left(\sigma^2(1+\sqrt{c})^2\right)=\sigma^2 \sqrt{c}. 
	\notag
\end{align}
The function $w$ is continuous on $\Rbb$, real and increasing on $\Rbb \backslash [\sigma^2 (1-\sqrt{c})^2,\sigma^2(1+\sqrt{c})^2]$.
Figure \eqref{fig:spikesupport} provides an illustration of the behaviour of the density of $\mu_N$ when $K$ is fixed and $N \to \infty$.
\begin{figure}[!h]
        \begin{center}
	\subfigure[$N=20$]{\includegraphics[scale=0.4]{./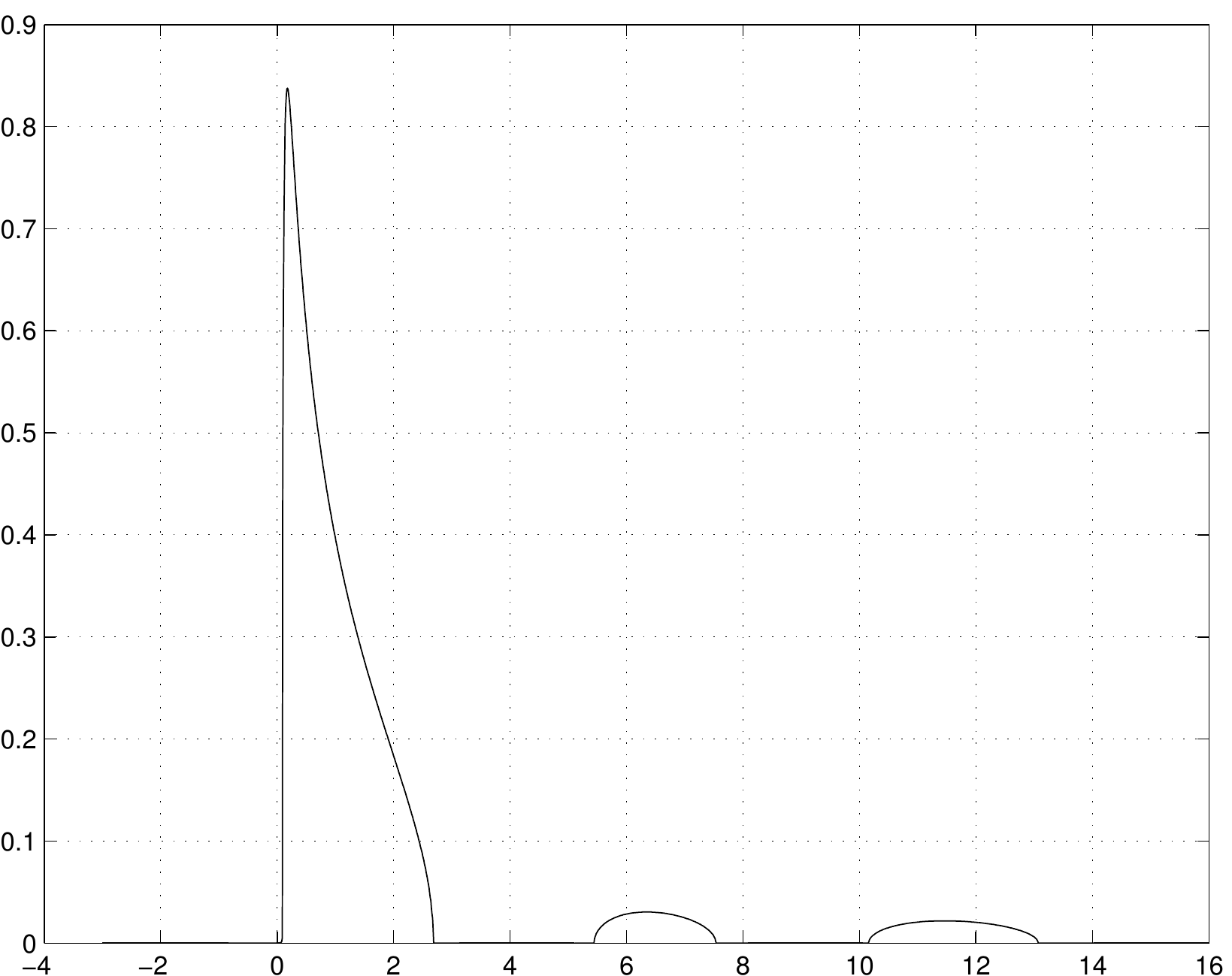}}
	\subfigure[$N=100$]{\includegraphics[scale=0.4]{./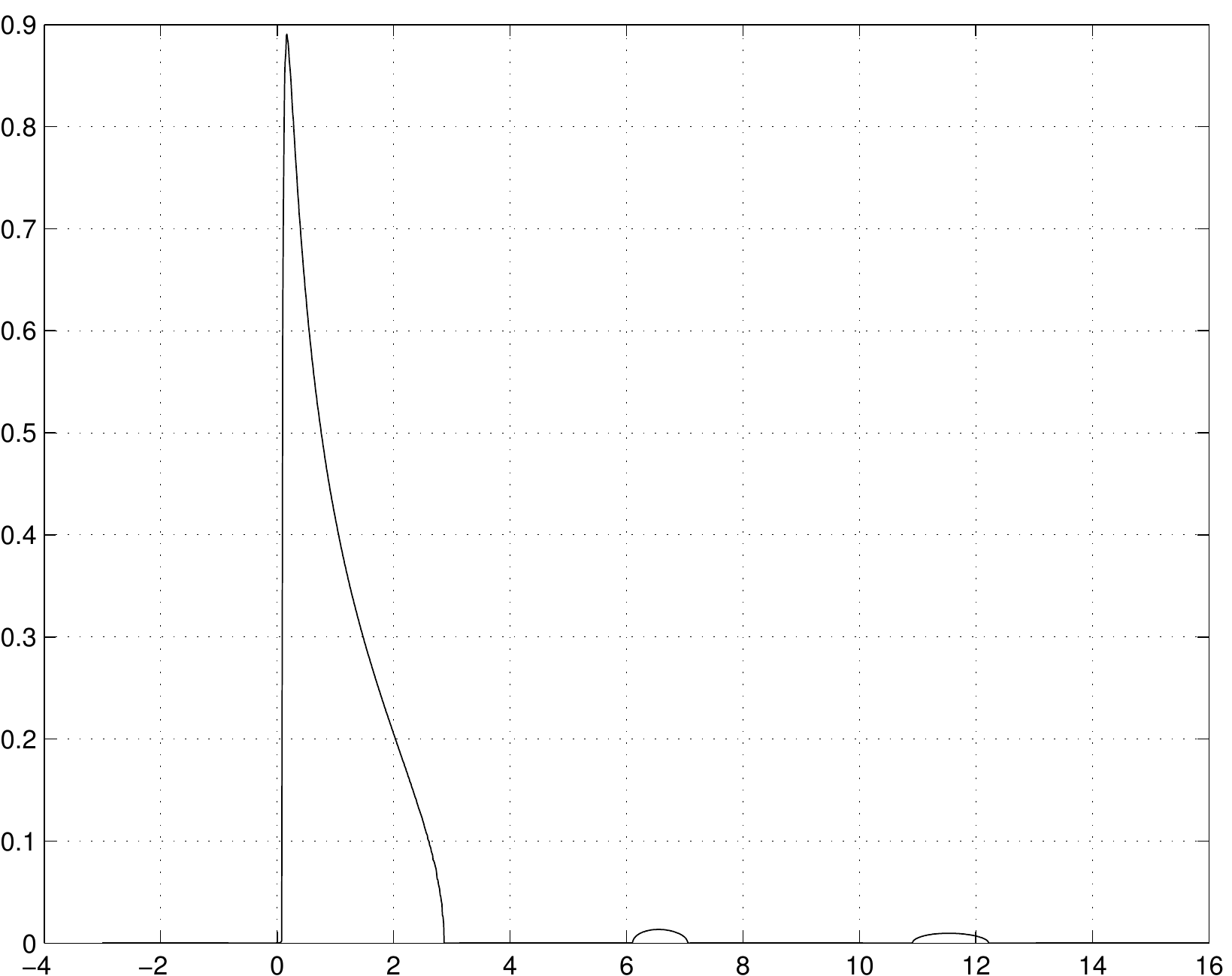}}\\
	\subfigure[$N=200$]{\includegraphics[scale=0.4]{./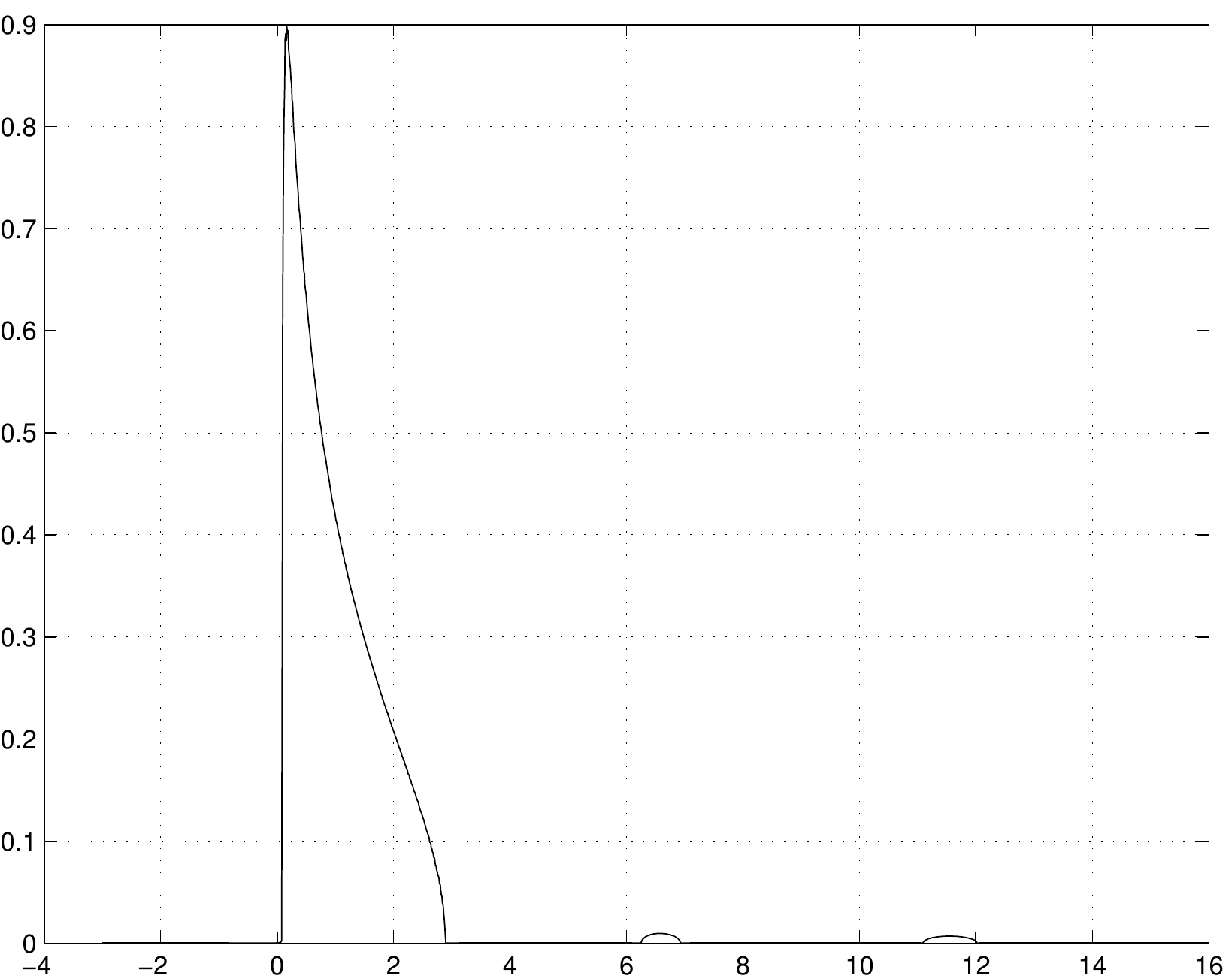}}
	\subfigure[$N=2000$]{\includegraphics[scale=0.4]{./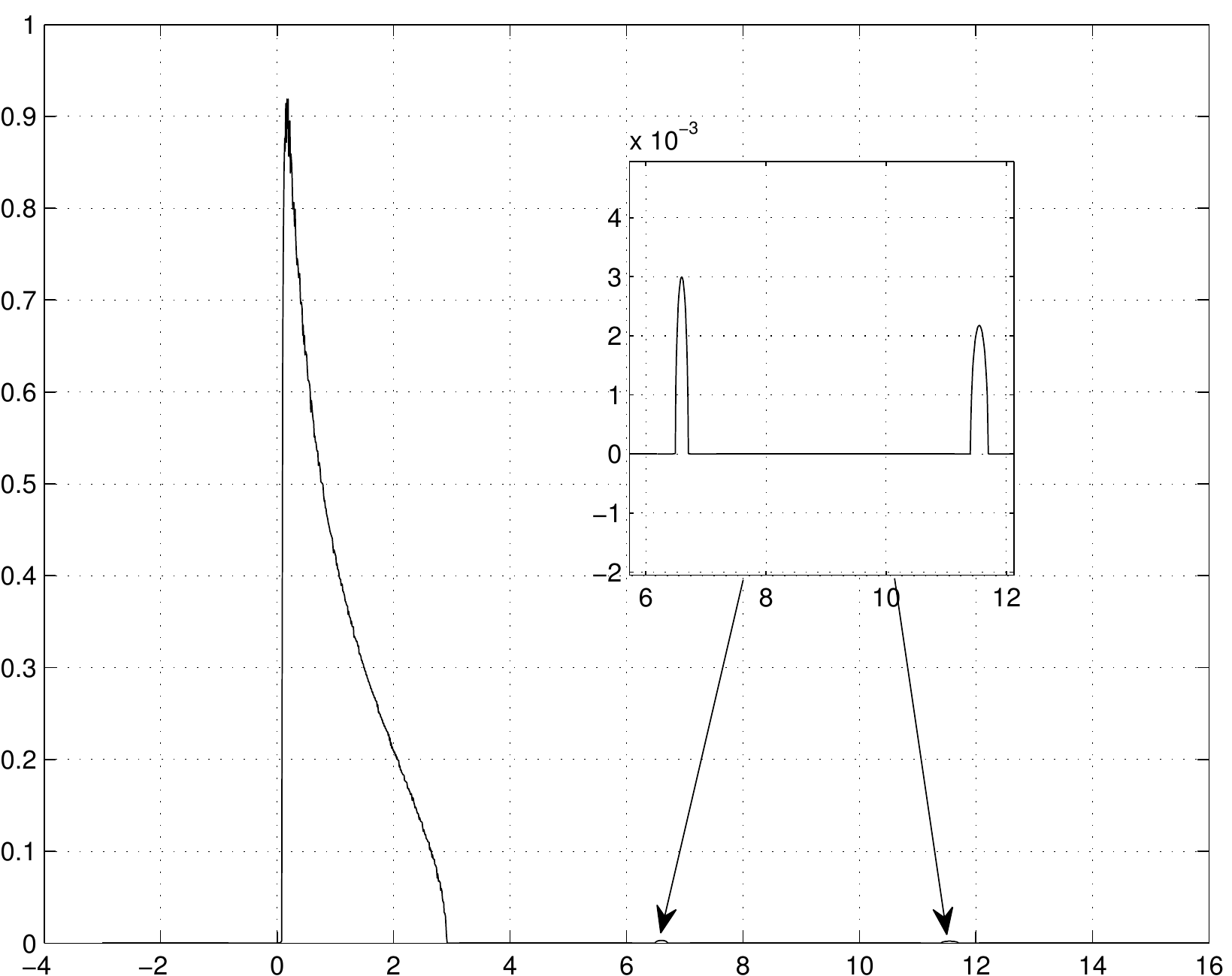}}
        \end{center}
        \caption{Density of $\mu_N$ when $K=2$, $c_N=0.5$. The two non zero eigenvalues of $\B_N\B_N^*$ are $5$ and $10$.}
        \label{fig:spikesupport}
\end{figure}
In the case when $K$ is fixed, the separation assumptions {\bf A-\ref{assumption:subspace1}} and {\bf A-\ref{assumption:subspace2}} have a useful consequence on the behaviour of the smallest non zero eigenvalue of $\B_N\B_N^*$.
\begin{lemma}
	\label{lemma:conseq_sep2}
	Assume $K$ independent of $N$. Then, the separation condition {\bf A-\ref{assumption:subspace1}} and {\bf A-\ref{assumption:subspace2}} hold iff 
	\begin{align}
		\liminf_{N \to \infty} \lambda_{K,N} > \sigma^2 \sqrt{c}.
		\label{eq:sep_cond_spike}
	\end{align}
\end{lemma}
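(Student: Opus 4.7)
The plan is to prove the two implications separately, leveraging the explicit form of $\phi(w) = (w+\sigma^2 c)(w+\sigma^2)/w$ in \eqref{eq:phi_MP}: $\phi$ has non-degenerate critical points at $w = \pm \sigma^2\sqrt{c}$ with critical values $\sigma^2(1\mp\sqrt{c})^2$, and is strictly increasing on $(\sigma^2\sqrt{c}, +\infty)$. The main preliminary I will use is that when $K$ is fixed, the decomposition $f_N(w) = \frac{1}{M}\sum_{k=1}^K(\lambda_{k,N} - w)^{-1} - \frac{M-K}{Mw}$ yields $\phi_N \to \phi$ and $\phi_N' \to \phi'$ uniformly on any compact of $\Rbb\setminus\{0\}$ that stays at fixed positive distance from the eigenvalues $\{\lambda_{k,N}\}_{k=1}^K$ for large $N$.

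For the ``only if'' direction, assume {\bf A-\ref{assumption:subspace1}} and {\bf A-\ref{assumption:subspace2}}. Then \eqref{eq:conv_MP2} gives $m_N \to m$ uniformly on compacts of $\Cbb\setminus([t_1^-,t_1^+]\cup[t_2^-,t_2^+])$, and via the algebraic relation \eqref{eq:def_w_check_N} also $w_N \to w$ on such compacts. Stability of the non-degenerate critical points of $\phi$ under $C^1$-convergence forces $x_{1,N}^\pm \to \sigma^2(1\mp\sqrt{c})^2$, so by {\bf A-\ref{assumption:subspace1}}, $t_2^- > t_1^+ > \sigma^2(1+\sqrt{c})^2 = \phi(\sigma^2\sqrt{c})$. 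Monotonicity of $\phi$ on $(\sigma^2\sqrt{c}, +\infty)$ then yields $w(t_2^-) > \sigma^2\sqrt{c}$. Passing to the liminf in {\bf A-\ref{assumption:subspace2}}, I conclude $\liminf_N \lambda_{K,N} \geq w(t_2^-) > \sigma^2\sqrt{c}$.

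For the converse, suppose $\liminf_N \lambda_{K,N} > \sigma^2\sqrt{c}$ and pick $\epsilon > 0$ with $\sigma^2\sqrt{c} + \epsilon < \liminf_N \lambda_{K,N}$. Set $a = \sigma^2\sqrt{c} + \epsilon/2$, $t_2^- = \phi(a)$, $t_1^+ = \phi(\sigma^2\sqrt{c} + \epsilon/4)$, $t_1^-$ slightly below $\sigma^2(1-\sqrt{c})^2$ (positive since $c < 1$), and $t_2^+$ any constant exceeding $\phi(\sup_N \lambda_{1,N})$, the latter being finite by $\sup_N \|\B_N\| < \infty$. The $C^1$-stability of the non-degenerate extrema of $\phi$ at $\pm \sigma^2\sqrt{c}$ again gives $x_{1,N}^\pm \to \sigma^2(1\mp\sqrt{c})^2$, which already handles the noise-cluster part of {\bf A-\ref{assumption:subspace1}}. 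On the compact $I = [\sigma^2\sqrt{c} + \epsilon/3, a]$, which is uniformly separated from $\{\lambda_{k,N}\}$ for large $N$, uniform convergence and $\phi' > 0$ make $\phi_N$ strictly increasing on $I$, so for every $z$ in a neighborhood of $[t_1^+, t_2^-]$ the equation $\phi_N(w) = z$ admits a unique real root $w^*(z) \in I$ with $\phi_N'(w^*(z)) > 0$.

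The decisive step, which I expect to be the main obstacle, is to identify $w^*(z)$ with the correct branch $w_N(z)$, which would simultaneously deliver $[t_1^+, t_2^-] \cap \supp(\mu_N) = \emptyset$ (completing {\bf A-\ref{assumption:subspace1}}) and $w_N(t_2^-) \leq a < \lambda_{K,N}$ (i.e., {\bf A-\ref{assumption:subspace2}}). For this I must control every critical point of $\phi_N$ in $(0, \liminf_N \lambda_{K,N})$: besides the one converging to $\sigma^2\sqrt{c}$ (which becomes $w_N(x_{1,N}^+)$), additional critical points can only come from the poles of $f_N$ at each $\lambda_{k,N}$. A local expansion $f_N(w) \simeq -1/w - \frac{1}{N(\lambda_{k,N}-w)}$ near $\lambda_{k,N}$ shows that the pole-induced critical points of $\phi_N$ lie in an $o(1)$ neighborhood of $\lambda_{k,N}$, with critical values asymptotically close to $\phi(\lambda_{k,N}) > t_2^-$ (by $\lambda_{k,N} > \sigma^2\sqrt{c}+\epsilon > a$ and monotonicity of $\phi$). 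Hence $I$ contains no critical point of $\phi_N$ for large $N$, $w^*$ is a monotone real branch of $\phi_N^{-1}$ on $\phi_N(I)$, and the ordering \eqref{eq:ordering_wq} forces it to coincide with $w_N$. This spike-type tracking of the pole-induced extrema of $\phi_N$ is where the fixed-$K$ assumption is crucially used and constitutes the main technical step of the proof.
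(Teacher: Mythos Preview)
Your overall strategy for both implications is correct, but each direction has an issue worth flagging when compared with the paper's argument.

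\textbf{``Only if'' direction.} Your route through ``stability of the non-degenerate critical points of $\phi$ under $C^1$-convergence'' to obtain $x_{1,N}^{+}\to\sigma^2(1+\sqrt{c})^2$ is circular as written. Your preliminary $\phi_N\to\phi$ in $C^1$ only holds on compacts that stay uniformly away from the $\lambda_{k,N}$, and invoking it in a neighbourhood of $\sigma^2\sqrt{c}$ already presupposes that the $\lambda_{k,N}$ avoid that neighbourhood---which is precisely the conclusion $\liminf_N\lambda_{K,N}>\sigma^2\sqrt{c}$ you are trying to establish. The paper sidesteps this entirely: pick any $t\in(t_1^+,t_2^-)$; since $t\notin\supp(\mu_N)$ one has $w_N(t)>0$ and $m_N(t)\to m(t)$, hence $w_N(t)\to w(t)\ge 0$, which forces $t\ge\sigma^2(1+\sqrt{c})^2$ and thus $w(t)\ge\sigma^2\sqrt{c}$. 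Then A-2 gives $\lambda_{K,N}>w_N(t_2^-)>w_N(t)\to w(t)$, and the strict inequality follows by taking $t$ strictly inside the gap. No information on the location of the $\lambda_{k,N}$ relative to $\sigma^2\sqrt{c}$ is needed.

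\textbf{``If'' direction.} Your plan (uniform $C^1$-convergence on $(-\infty,0)\cup(0,\sigma^2\sqrt{c}+\epsilon)$, convergence of $x_{1,N}^{\pm}$, then branch identification on $I$) matches the paper exactly up to the ``decisive step''. The difference is in how that step is carried out. You propose to locate \emph{all} critical points of $\phi_N$ on $(0,\liminf_N\lambda_{K,N})$, track the pole-induced ones near each $\lambda_{k,N}$, and then appeal to the ordering \eqref{eq:ordering_wq}. This can be made to work, but your sketch is incomplete: the claim that the pole-induced critical \emph{values} are close to $\phi(\lambda_{k,N})$ needs the spike analysis of Section~\ref{section:spiked_model} (the critical points sit at distance $O(1/N)$ from $\lambda_{k,N}$, where the singular part of $1-\sigma^2 c_N f_N$ is $O(1)$, not $o(1)$), and you must also verify $Q\ge 2$ before invoking $w_N(x_{2,N}^-)$. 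The paper bypasses all of this by citing \cite[Lemma~6]{vallet2012improved}: on any interval where simultaneously $1-\sigma^2 c_N f_N(w)>0$ and $\phi_N'(w)>0$, the image under $\phi_N$ lies in $\Rbb\setminus\supp(\mu_N)$ and $w=w_N(\phi_N(w))$. Both conditions follow immediately from your uniform convergence on $\mathcal{K}'\subset(\sigma^2\sqrt{c},\sigma^2\sqrt{c}+\epsilon)$, and the existence of a second cluster (hence $Q\ge 2$) is then a consequence rather than an assumption. You have essentially the first condition implicitly (it is the limit $(w+\sigma^2 c)/w>0$), so adding it and invoking that characterization would close your argument with far less work than the critical-point census.
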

The proof of lemma \ref{lemma:conseq_sep2} is defered in Appendix \ref{appendix:proof_lemma_conseq_sep_2}.

In the special situation where the non-zero eigenvalues of $\B_N\B_N^*$ converge to some different limits, i.e. 
\begin{align}
	\lambda_{k,N} \to_N \lambda_{k} > 0
\end{align}
for all $k=1,\ldots,K$, with $\sigma^2 \sqrt{c} < \lambda_K < \ldots < \lambda_1$, it is shown in \cite{loubaton2011almost} that the number $Q$ of clusters 
in the support of $\mu_N$ is exactly $K+1$ for $N$ large and in this case, the "noise" eigenvalue $0$ is the unique eigenvalue associated with the "noise" cluster 
$[x_{1,N}^-,x_{1,N}^+]$.
Therefore, assumptions {\bf A-\ref{assumption:subspace1}} and {\bf A-\ref{assumption:subspace2}} are ensured in this case.
It is also proved that $x_{k+1,N}^{\pm} \to \phi(\lambda_k)$, and using a refinement of \eqref{eq:separation_lambdahat}, 
\cite{loubaton2011almost} also showed that (see also Benaych \& Nadakuditi \cite{benaych2011singular}) that the $K$ largest sample eigenvalues 
$\hat{\lambda}_{1,N},\ldots,\hat{\lambda}_{K,N}$ split from the $M-K$ smallest eigenvalues $\hat{\lambda}_{K+1,N},\ldots,\hat{\lambda}_{M,N}$ and
\begin{align}
	\hat{\lambda}_{k,N} \xrightarrow[N \to \infty]{a.s} \phi(\lambda_k),
	\label{eq:lim_spike}
\end{align}
while $\hat{\lambda}_{K+1,N} \to_N \sigma^2 (1+\sqrt{c})^2$ and $\hat{\lambda}_{M,N} \to_N \sigma^2 (1-\sqrt{c})^2$ a.s.

Finally, we notice that $\Delta_N(z_1,z_2)$ defined in \eqref{def:determinant} will satisfy, when $K$ is constant
\begin{align}
	\sup_{z_1,z_2 \in \Kcal} \left|\Delta_N(z_1,z_2) - \Delta(z_1,z_2)\right| \xrightarrow[N \to \infty]{} 0,
	\notag
\end{align}
where $\Delta(z_1,z_2)$ is given by
\begin{align}
	\Delta(z_1,z_2) = 1 - \frac{\sigma^4 c}{w(z_1)w(z_2)}.
	\label{def:determinant_MP}
\end{align}
The properties given in lemma \ref{lemma:properties_determinant} are of course valid for $\Delta(z_1,z_2)$: in particular, we have
\begin{align}
	\Delta(z_1,z_2) = \frac{z_1-z_2}{w(z_1)-w(z_2)},
	\notag
\end{align}
for all $z_1 \neq z_2$ and $z_1,z_2 \not\in [\sigma^2(1-\sqrt{c})^2,\sigma^2(1+\sqrt{c})^2]$, as well as
\begin{align}
	0 < \inf_{z_1,z_2 \in \Kcal} \left|\Delta(z_1,z_2)\right|
	\leq \sup_{z_1,z_2 \in \Kcal} \left|\Delta(z_1,z_2)\right| < \infty
	\notag
\end{align}
and
\begin{align}
	\left|\Delta(z_1,z_2) - 1\right| < 1,
	\notag
\end{align}
for all $z_1,z_2 \in \Kcal$ with $\Kcal$ a compact set such that 
$\Kcal \subset \Cbb \backslash\left(\{0\} \cup [\sigma^2 (1-\sqrt{c})^2,\sigma^2 (1+\sqrt{c})^2]\right)$.

	\subsection{Contour integrals}
	\label{section:contour_integrals}

Thoughout the paper, we will deal with integrals of the form
\begin{align}
	I_N = \frac{1}{2 \pi \irm} \int_{\partial \Rcal} \Psi_N(w_N(z)) w'_N(z) \drm z 
\end{align}
where $\partial \Rcal$ is the clockwise oriented boundary of a rectangle $\Rcal$ intersecting the real axis at two points $t_2^- - \epsilon$, $t_2^+ + \epsilon$
with $\epsilon > 0$ such that $t_2^- > t_1^+ + \epsilon$, and where $\Psi_N$ is a meromorphic function with poles contained in the set 
$\{\lambda_{1,N},\ldots,\lambda_{K,N},0\}$. 
It is shown in \cite{vallet2012improved} from assumptions {\bf A-\ref{assumption:subspace1}} and {\bf A-\ref{assumption:subspace2}} that the set $w_N(\partial \Rcal)$ is a closed piecewise $\Ccal^1$ path intersecting the real axis at points $w_N(t_2^--\epsilon)$, $w_N(t_2^+ +  \epsilon)$, enclosing the non-zero eigenvalues of $\B_N\B_N^*$ with winding number $-1$ and leaving $0$ outside, for $N$ large. Therefore, for all large $N$, a change of variable and residue theorem lead to
\begin{align}
	I_N = \frac{1}{2 \pi \irm} \oint_{w_N(\partial \Rcal)} \Psi_N(w) \drm w = - \sum_{k=1}^K \Res\left(\Psi_N,\lambda_{k,N}\right),
\end{align}
where $\Res(\Psi_N,\lambda)$ is the residue of $\Psi_N$ at $\lambda$.	
Note that in the case of the spiked models (see section \ref{section:spiked_model}) where $K$ is fixed with respect to $N$, and under 
assumptions {\bf A-\ref{assumption:subspace1}} and {\bf A-\ref{assumption:subspace2}}, the previous result still holds by replacing $w_N(z)$
with $w(z)$ and $w'_N(z)$ with $w'(z)$.

\section{Noise subspace estimation}
\label{section:subspace_estimation}

In this section, we review the results of \cite{vallet2012improved}, \cite{hachem2012subspace} on the consistent subspace estimation, in the asymptotic regime where the number of antennas $M=M(N)$ is a function of the number of samples $N$ such that $c_N=\frac{M}{N} \to c \in (0,1)$ as $N \to \infty$. 

	\subsection{Consistent estimation}
	
Noise subspace estimation consists in our case in estimating the quantity
\begin{align}
	\eta_N = \d_{1,N}^* \Pibs_N \d_{2,N} = \d_{1,N}^* \left(\I - \sum_{k=1}^{K} \u_{k,N} \u_{k,N}^* \right) \d_{2,N},
	\label{eq:eta}
\end{align}
where $(\d_{1,N})$, $(\d_{2,N})$ are two sequences of deterministic vectors such that $\sup_N \|\d_{1,N}\|, \sup_N\|\d_{2,N}\| < \infty$.

We recall that the traditional estimator based on the SCM $\frac{\Y_N\Y_N^*}{N}$ is defined by
\begin{align}
	\hat{\eta}_N^{(t)} = \d_{1,N}^* \left(\I - \sum_{k=1}^{K} \hat{\u}_{k,N} \hat{\u}_{k,N}^*\right) \d_{2,N}.
	\label{eq:traditional_localization_function_estimator}
\end{align}		
It was shown that under the separation assumptions {\bf A-\ref{assumption:subspace1}} and {\bf A-\ref{assumption:subspace2}}, the quantity \eqref{eq:eta} 
can be written in terms of the following integral
\begin{align}
	\eta_N = 
	\d_{1,N}^*\left(\I - \frac{1}{2 \pi \irm}\oint_{\partial \Rcal}  \T_N(z) \frac{w'_N(z)}{1+\sigma^2 c_N m_N(z)} \drm z\right) \d_{2,N},
	\label{eq:eta_contour_integral}
\end{align}
where $\partial \Rcal$ is the clockwise oriented boundary of the rectangle
\begin{align}
	\Rcal = \left\{x + \irm y: x \in [t_2^- - \epsilon, t_2^+ + \epsilon], y \in [-\delta,\delta]\right\},
	\label{def:rectangle}
\end{align}
with $\epsilon>0$ s.t. $t_1^+ + \epsilon < t_2^-$ and $\delta > 0$.
By defining 
\begin{align}
	\hat{w}_N(z) = z (1+\sigma^2 c_N \hat{m}_N(z))^2 - \sigma^2 (1+\sigma^2 c_N \hat{m}_N(z)), 
	\label{def:what}
\end{align}
it is shown in \cite{vallet2012improved} that
\begin{align}
	\sup_{z \in \partial \Rcal} 
	\left|\d_{1,N}^* \Q_N(z) \d_{2,N} \frac{\hat{w}'_N(z)}{1+\sigma^2 c_N \hat{m}_N(z)}-\d_{1,N}^* \T_N(z) \d_{2,N} \frac{w'_N(z)}{1+\sigma^2 c_N m_N(z)}\right|
	\xrightarrow[N \to \infty]{a.s.} 0.
	\label{eq:uniform_conv_integrand}
\end{align}
This of course readily implies that 
\begin{align}
	\hat{\eta}_N - \eta_N \xrightarrow[N\to\infty]{a.s.} 0,
	\label{eq:consistency_subspace}
\end{align}
where
\begin{align}
	\hat{\eta}_N 
	= \d_{1,N}^*\left(\I - \frac{1}{2 \pi \irm}\oint_{\partial \Rcal}  \Q_N(z) \frac{\hat{w}'_N(z)}{1+\sigma^2 c_N \hat{m}_N(z)} \drm z\right) \d_{2,N}.
	\label{eq:subspace_estimator}
\end{align}
Thus $\hat{\eta}_N$ is a consistent estimator of \eqref{eq:eta}.

\begin{remark}
	\label{remark:subspace_estimator}
	The integrand in \eqref{eq:subspace_estimator} is meromorphic with poles at $\hat{\lambda}_{1,N},\ldots,\hat{\lambda}_{M,N}$ as well as at the zeros of the function
	$z \mapsto 1+\sigma^2 c_N \hat{m}_N(z)$, denoted $\hat{\omega}_{1,N},\ldots,\hat{\omega}_{M,N}$. 
	It is shown in \cite{vallet2012improved} that these zeros are the eigenvalues of the matrix 
	$\hat{\Omegabs}_N  = \hat{\Lambdabs}_N + \frac{\sigma^2 c_N}{M} \mathbf{1} \mathbf{1}^T$, 
	where $\hat{\Lambdabs}_N = \diag\left(\hat{\lambda}_{1,N},\ldots,\hat{\lambda}_{M,N}\right)$, and follow a  property  similar to \eqref{eq:separation_lambdahat}, i.e
	\begin{align}
		\hat{\omega}_{1,N},\ldots,\hat{\omega}_{M-K,N} \in [t_1^-,t_1^+]
		\quad\text{and}\quad
		\hat{\omega}_{M-K+1,N},\ldots,\hat{\omega}_{M-K,N} \in [t_2^-,t_2^+],
		\label{eq:separation_omegahat}
	\end{align}
	with probability one, for $N$ large enough.
	This ensures that the integral can be solved using residue theorem, and an explicit formula in terms of $\hat{\u}_{k,N}, \hat{\lambda}_{k,N}$ and $\hat{\omega}_{k,N}$ was provided
	in \cite{vallet2012improved} for the improved subspace estimator \eqref{eq:subspace_estimator}.
\end{remark}

\begin{remark}
	Originally, the estimator derived in \cite{vallet2012improved} was based on the representation
	\begin{align}
		\eta_N = 
		\frac{1}{2 \pi \irm}\oint_{\partial \tilde{\Rcal}}  \d_{1,N}^*\T_N(z)\d_{2,N} \frac{w'_N(z)}{1+\sigma^2 c_N m_N(z)} \drm z,
		\notag
	\end{align}
	where $\partial \tilde{\Rcal}$ is the clockwise oriented boundary of the rectangle
	\begin{align}
		\tilde{\Rcal} = \left\{x + \irm y: x \in [t_1^- - \epsilon, t_1^+ + \epsilon], y \in [-\delta,\delta]\right\},
		\notag
	\end{align}
	enclosing the noise cluster (the contour \eqref{eq:eta_contour_integral} enclosing the signal cluster).
	In that case, \eqref{eq:uniform_conv_integrand} with $\Rcal$ replaced by $\tilde{\Rcal}$ still holds, and 
	\begin{align}
		\hat{\eta}_N 
		= \frac{1}{2 \pi \irm}\oint_{\partial \Rcal}  \d_{1,N}^*\Q_N(z)\d_{2,N} \frac{\hat{w}'_N(z)}{1+\sigma^2 c_N \hat{m}_N(z)} \drm z.
		\notag
	\end{align}
	Therefore, the subspace estimators of \cite{vallet2012improved} and \eqref{eq:subspace_estimator} coincide. 
	We choose to keep the representation \eqref{eq:subspace_estimator} (with contour enclosing the signal cluster), 
	which will be simpler to analyze in the following.
\end{remark}

	When $K$ is constant, a simpler estimator of the localization function can be obtained \cite{hachem2012subspace} \cite[Sec. C]{vallet2012improved}. 
	Indeed, since 
	\begin{align}
		\hat{\eta}_N = 
		\d_{1,N}^* \left( \I - \frac{1}{2 \pi \irm}\oint_{\partial \Rcal} \Q_N(z) \frac{w'(z)}{1+\sigma^2 c m(z)} \drm z \right) \d_{2,N} + o(1),
		\notag
	\end{align}
	a straigthforward application of residue theorem leads
	\begin{align}
		\hat{\eta}_N = \hat{\eta}_N^{(s)} + o(1)
		\notag
	\end{align}
	with probability one, where
	\begin{align}		
		\hat{\eta}_N^{(s)} = \d_{1,N}^* \left( \I - \sum_{k=1}^K h\left(\hat{\lambda}_{k,N}\right) \hat{\u}_{k,N}\hat{\u}_{k,N} \right) \d_{2,N},
		\label{eq:simplified_eta}
	\end{align}
	and where $h(z)$ is given by
	\begin{align}
		h(z) = \frac{w'(z)}{1+\sigma^2 c m(z)} = \frac{w(z) (w(z)+\sigma^2 c)}{w(z)^2 - \sigma^4 c}.
		\notag
	\end{align}


	\subsection{CLT}
	\label{section:CLT_bilinear_form}

		\subsubsection{The main result}
		
Before stating the main result, we need to introduce some new quantities. 
For $k,\ell \in \left\{1,\ldots,M\right\}$, let 
\begin{align}
	& \vartheta_N(k,\ell) = 
	\notag\\
	&\frac{\sigma^2}{2}\left(\frac{1}{2 \pi \irm}\right)^2 \oint_{\partial \Rcal}\oint_{\partial \Rcal}
	\frac{\theta_{N}^{(k,\ell)}(z_1,z_2) w'_N(z_1)w'_N(z_2)}
	{
		\left(\lambda_{k,N} - w_N(z_1)\right)\left(\lambda_{\ell,N} - w_N(z_1)\right)\left(\lambda_{k,N} - w_N(z_2)\right)\left(\lambda_{\ell,N} - w_N(z_2)\right)
		\Delta_N(z_1,z_2)
	}
	 \drm z_1 \drm z_2,
	 \label{eq:def_vartheta}
\end{align}
with
\begin{align}
	&\theta_{N}^{(k,\ell)}(z_1,z_2) = 
	\notag\\
	&\qquad\qquad
	z_1 z_2  \left(1+\sigma^2 c_N m_N(z_1)\right)\left(1+\sigma^2 c_N m_N(z_2)\right) \tilde{v}_N(z_1,z_2)
	\notag\\
	&\qquad\qquad\qquad
	+\frac{\lambda_{k,N}\lambda_{\ell,N} v_N(z_1,z_2)}{\left(1+\sigma^2 c_N m_N(z_1)\right)\left(1+\sigma^2 c_N m_N(z_2)\right)}
	+(\lambda_{k,N}+\lambda_{\ell,N}) \left(1-u_N(z_1,z_2)\right).
	\label{eq:def_theta}
\end{align}
We define the $2 \times 2$ matrix $\Gammabs_N(k,\ell)$ by
\begin{align}
	&\Gammabs_N(k,\ell) =
	\notag\\
&	\begin{bmatrix}
		\Re\left(\eta^{(1,2)}_{k,N}\eta^{(1,2)}_{\ell,N}\right)
		+
		\frac{1}{2}\left(\eta^{(1,1)}_{k,N}\eta^{(2,2)}_{\ell,N} + \eta^{(1,1)}_{\ell,N}\eta^{(2,2)}_{k,N}\right)
		&
		-\Im\left(\eta^{(1,2)}_{k,N}\eta^{(1,2)}_{\ell,N}\right)
		\\
		-\Im\left(\eta^{(1,2)}_{k,N}\eta^{(1,2)}_{\ell,N}\right)
		&
		-\Re\left(\eta^{(1,2)}_{k,N}\eta^{(1,2)}_{\ell,N}\right)
		+
		\frac{1}{2}\left(\eta^{(1,1)}_{k,N}\eta^{(2,2)}_{l,N} + \eta^{(1,1)}_{\ell,N}\eta^{(2,2)}_{k,N}\right)
	\end{bmatrix},
	\notag
\end{align}
where $\eta^{(i,j)}_{k,N} = \d_{i,N}^* \u_{k,N}\u_{k,N}^* \d_{j,N}$, and we finally set
\begin{align}
	\Gammabs_N = \sum_{k=1}^M\sum_{l=1}^M \vartheta_{N}(k,\ell) \Gammabs_N(k,\ell).
	\label{def:Gamma}
\end{align}
The main result is the following.
\begin{theorem}
	\label{theorem:CLT_subspace}		
	Assume the separation conditions {\bf A-\ref{assumption:subspace1}} and {\bf A-\ref{assumption:subspace2}} hold. 
	Then we have 
	\begin{align}
		0 \leq \liminf_{N \to \infty} \min_{k,\ell} \vartheta_{N}(k,\ell) \leq \limsup_{N \to \infty} \max_{k,\ell} \vartheta_{N}(k,\ell) < \infty.
		\label{eq:bound_var_general}
	\end{align}	 
	Moreover, if $0 < \liminf_N \frac{K}{N}< \limsup_{N} \frac{K}{N} < c$, then 
	\begin{align}
		\liminf_{N \to \infty} \min_{k,\ell} \vartheta_{N}(k,\ell) > 0,
		\label{eq:lb_var_nonspike}
	\end{align}
	and if $K$ is independent of $N$, then
	\begin{align}
		&\vartheta_N(k,\ell)=
		\notag\\
	& \frac
		{
			\sigma^4 c_N \left(\lambda_{k,N} \lambda_{\ell,N} + (\lambda_{k,N}+\lambda_{\ell,N})\sigma^2 + \sigma^4\right)\left(\lambda_{k,N} \lambda_{\ell,N} + \sigma^4 c_N\right)
		}
		{2\left(\lambda_{k,N}^2 - \sigma^4 c_N\right)\left(\lambda_{\ell,N}^2 - \sigma^4 c_N\right)\left(\lambda_{k,N}\lambda_{\ell,N} - \sigma^4 c_N\right)}
		\left(1 - \mathbb{1}_{[K+1,M]}(k) \mathbb{1}_{[K+1,M]}(\ell)\right)
		+
		\epsilon_N(k,\ell),
		\label{eq:var_spike}
	\end{align}
	with $\max_{k,\ell} |\epsilon_N(k,\ell)| \to_N 0$.
	Finally, let $(\xi_N)$ be a deterministic bounded sequence and denote $\xibs_N=\left[\Re\left(\xi_N\right), \Im\left(\xi_N\right)\right]^T$. 
	Then,
	\begin{align}
		\Re\left(\xi_N\left(\hat{\eta}_N - \eta_N\right)\right) = \Ocal_{\Pbb}\left(\sqrt{\frac{\xibs_N^T \Gammabs_N \xibs_N}{N}}\right) + o_{\Pbb}\left(\frac{1}{\sqrt{N}}\right),
		\label{eq:tightness_real_part}
	\end{align}
	and if $\liminf_N \xibs_N^T \Gammabs_N \xibs_N > 0$, it holds that
	\begin{align}
		\sqrt{N} \frac{\Re\left(\xi_N\left(\hat{\eta}_N - \eta_N\right)\right)}{\sqrt{\xibs_N^T \Gammabs_N \xibs_N}} \xrightarrow[N \to \infty]{\Dcal} \Ncal_{\Rbb}\left(0, 1\right).	
		\label{eq:conv_distrib_eta}
	\end{align}
\end{theorem}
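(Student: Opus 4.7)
The plan is to linearize the estimation error. Starting from the integral representations \eqref{eq:eta_contour_integral} and \eqref{eq:subspace_estimator}, I would write
\begin{align}
\hat{\eta}_N - \eta_N
= -\frac{1}{2\pi\irm}\oint_{\partial \Rcal}
\d_{1,N}^* \left[ \Q_N(z)\,\tfrac{\hat{w}'_N(z)}{1+\sigma^2 c_N \hat{m}_N(z)}
- \T_N(z)\,\tfrac{w'_N(z)}{1+\sigma^2 c_N m_N(z)}\right] \d_{2,N}\,\drm z. \notag
\end{align}
A first-order Taylor expansion of the scalar weight $(1+\sigma^2 c_N \hat m_N(z))^{-1}\hat w'_N(z)$ around its deterministic equivalent, combined with the $\Ocal_{\Pbb}(1/N)$ fluctuation bound for $\hat m_N(z)-m_N(z)$ uniformly on $\partial\Rcal$, reduces the error to the integral of $\d_{1,N}^*(\Q_N(z)-\T_N(z))\d_{2,N}$ against a deterministic kernel, plus a remainder that is $o_{\Pbb}(1/\sqrt{N})$. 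This reduction exploits the bound \eqref{eq:uniform_conv_integrand} quantitatively.

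The core step is then a joint CLT for the random field $z \mapsto \sqrt{N}\,\d_{1,N}^*(\Q_N(z)-\T_N(z))\d_{2,N}$ on $\partial\Rcal$. I would use the standard Gaussian toolbox: the integration-by-parts formula for the $\Ncal_{\Cbb}$ entries of $\W_N$ yields closed recursions for the first and second moments of bilinear resolvent forms, and the Poincar\'e--Nash inequality delivers the $\Ocal(1/N)$ control on variances and remainders. Direct computation of the limiting covariance at $(z_1,z_2)$ produces an expression whose building blocks are precisely $u_N(z_1,z_2)$, $v_N(z_1,z_2)$, $\tilde v_N(z_1,z_2)$ and $\Delta_N(z_1,z_2)$, matching the numerator/denominator structure of $\theta_N^{(k,\ell)}$ in \eqref{eq:def_theta}. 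Pointwise Gaussian convergence together with tightness in $\Ccal(\partial\Rcal)$ (again via Poincar\'e--Nash applied uniformly in $z$) promotes this to weak convergence of the field; continuity of the contour-integration functional then yields \eqref{eq:conv_distrib_eta}, while \eqref{eq:tightness_real_part} follows directly from the variance estimate.

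Writing the spectral decomposition $\T_N(z)=(1+\sigma^2 c_N m_N(z))\sum_k (\lambda_{k,N}-w_N(z))^{-1}\u_{k,N}\u_{k,N}^*$ converts the double contour integral for the variance into the double sum over $k,\ell$ in \eqref{def:Gamma}; the matrix $\Gammabs_N(k,\ell)$ encodes the real/imaginary decomposition of the complex Gaussian second moments $\Ebb[\xi_N^2 \cdot]$ acting on rank-one pieces $\u_{k,N}\u_{k,N}^*$, which is where the projectors $\eta^{(i,j)}_{k,N}$ arise. Boundedness \eqref{eq:bound_var_general} then follows by inserting \eqref{eq:bound_norm_T_Ttilde}, \eqref{eq:bound_m_half}, \eqref{eq:bound_dist_w_lambda} and \eqref{eq:bound_det} into the explicit integrand. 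The lower bound \eqref{eq:lb_var_nonspike} in the bulk case would be a compactness and nondegeneracy argument: the integrand is a rational function in $\lambda_{k,N},\lambda_{\ell,N}$ which cannot vanish when both eigenvalues stay bounded away from the spiked critical threshold. The spiked formula \eqref{eq:var_spike} is obtained by substituting the Marchenko--Pastur limits $w,m,\Delta$ from Section~\ref{section:spiked_model}, computing the double residue at $\lambda_{k,N},\lambda_{\ell,N}$ using the explicit $\phi(w)=(w+\sigma^2 c)(w+\sigma^2)/w$, and observing that when both $k,\ell>K$ the residue vanishes because the noise eigenvalue $0$ is not enclosed by $w(\partial\Rcal)$, which produces the indicator factor.

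The principal obstacle will be uniformity with respect to a \emph{growing} rank $K=K(N)$, which is the novelty over \cite{hachem2012subspace}. The double sum defining $\Gammabs_N$ has up to $\Ocal(M^2)$ terms, so every estimate---the $1/N$-rate of $\hat m_N-m_N$ uniformly on $\partial\Rcal$, the CLT for bilinear resolvent forms, tightness on the contour, and the residue manipulations---must be made uniform in $K$. This forces a careful treatment of variance bounds for $\Tr(\A\Q_N)$ when $\A$ may have rank of order $K$, and repeated reliance on Lemma~\ref{lemma:properties_determinant} to keep $\Delta_N(z_1,z_2)$ uniformly bounded below on $\partial\Rcal\times\partial\Rcal$. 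Everything else is, in principle, a (lengthy) but routine application of the Gaussian calculus and residue theorem; the genuinely hard part is maintaining these rates in the non-spiked regime.
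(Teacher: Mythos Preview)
Your reduction step and the Gaussian-calculus toolbox are on target, and the paper uses essentially the same ingredients (regularization, integration by parts, Poincar\'e--Nash). There are, however, two points worth flagging.

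\textbf{Different CLT mechanism.} You propose a functional CLT on $\partial\Rcal$: finite-dimensional convergence of $z\mapsto \sqrt{N}\,\d_{1,N}^*(\Q_N(z)-\T_N(z))\d_{2,N}$, tightness in $\Ccal(\partial\Rcal)$, then continuous mapping. The paper instead works directly with the characteristic function $\psi_N(u)=\Ebb[\exp(\irm u\sqrt{N}\Re(\xi_N\hat{\gamma}_N))]$ and derives, via Gaussian integration by parts applied \emph{inside} the contour integral, an approximate first-order ODE $\psi_N'(u)=(\irm\sqrt{N}\Re(\ldots)-u\,\xibs_N^T\Gammabs_N\xibs_N)\psi_N(u)+\epsilon_N(u)/\sqrt{N}$. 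This Lytova--Pastur route bypasses tightness entirely and handles the $N$-dependent covariance $\Gammabs_N$ without ever identifying a limiting process; your field approach would have to work with triangular arrays and an $N$-dependent centered Gaussian family, which is doable but heavier. Both lead to the same $\vartheta_N(k,\ell)$ after the spectral decomposition of $\T_N$.

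\textbf{A real gap: nonnegativity and the strict lower bound.} Your argument for $\vartheta_N(k,\ell)\geq 0$ and for \eqref{eq:lb_var_nonspike} is where the proposal is thin. The double contour integral \eqref{eq:def_vartheta} is not manifestly nonnegative, and ``the integrand is a rational function in $\lambda_{k,N},\lambda_{\ell,N}$ which cannot vanish'' is not a proof: $\vartheta_N(k,\ell)$ involves $\Delta_N(z_1,z_2)^{-1}$, which couples $z_1,z_2$ non-separably. The paper's device is to use Lemma~\ref{lemma:properties_determinant}, specifically \eqref{eq:bound_u} and \eqref{eq:bound_det_serie}, to expand
\[
\Delta_N(z_1,z_2)^{-1}=\sum_{n\geq 0}\frac{(z_1z_2 v_N\tilde v_N)^n}{(1-u_N)^{2n+2}}
=\sum_{n\geq 0}\ \sum_{l_1,\ldots,l_{2n+2}\geq 0}(z_1z_2 v_N\tilde v_N)^n u_N^{l_1+\cdots+l_{2n+2}},
\]
uniformly on $\partial\Rcal\times\partial\Rcal$. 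Every factor $u_N,v_N,\tilde v_N$ is itself a sum of products $g(z_1)g(z_2)$, so the whole integrand becomes a convergent series of symmetric separable terms; the double integral is then a sum of squares $\sum_p\bigl(\tfrac{1}{2\pi\irm}\oint g_N^{(p)}(z)\,\drm z\bigr)^2\geq 0$. The strict lower bound \eqref{eq:lb_var_nonspike} is obtained by retaining a single explicit term of this series (e.g.\ the $n=0$ term with the $v_N$-piece of $\theta_N^{(k,\ell)}$) and computing it by residues to get, for instance, $\vartheta_N(k,\ell)\geq \tfrac{M-K}{N}\tfrac{\sigma^4}{2\lambda_{k,N}\lambda_{\ell,N}}$ when $k,\ell\leq K$. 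Without this series-to-sum-of-squares step, neither the nonnegativity in \eqref{eq:bound_var_general} nor the uniform positive lower bound in the non-spiked regime goes through.
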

The proof of theorem \eqref{theorem:CLT_subspace} is defered to section \ref{section:proof_clt}.

		\subsubsection{Discussions and numerical examples}

In this section, we discuss the consequences of theorem \ref{theorem:CLT_subspace} and provide numerical examples illustrating the results.

We first remark that in the statement of theorem \eqref{theorem:CLT_subspace}, the purpose of the constraint $\liminf_N \xibs_N^T \Gammabs_N \xibs_N > 0$ is 
to ensure that the fluctuations of $\hat{\eta}_N - \eta_N$ are $\Ocal\left(N^{-1/2}\right)$. 
Indeed, there exist several situations where the fluctuations can be faster than $\Ocal\left(N^{-1/2}\right)$. For example, in the case where $K$ is independent of $N$ 
and $\d_{1,N}=\d_{2,N}=\u_{M,N}$, then we see from \eqref{eq:var_spike} that $\e_1^T \Gammabs_N \e_1=o(1)$ and thus 
\begin{align}
	\Re\left(\hat{\eta}_N - \eta_N\right) = o_{\Pbb}\left(\frac{1}{\sqrt{N}}\right).
	\notag
\end{align}
The result of theorem \eqref{theorem:CLT_subspace} can be rephrased in a more precise way, by considering the fluctuations of the random vector 
$\left[\Re\left(\hat{\eta}_N - \eta_N\right), \Im\left(\hat{\eta}_N - \eta_N\right)\right]^T$. However, in this case, we have to take into account the possible "degenerate" situations, 
when the covariance matrix $\Gammabs_N$ is asymptotically singular. Since it is difficult to state general results in this case, the next corollary focuses on one important special case where
$\d_{1,N}=\d_{2,N}$, i.e. the case of quadratic forms.
\begin{corollary}[Quadratic forms]
	\label{corollary:quad_form}
	Assume that $\d_{1,N}=\d_{2,N}=\d_N$, where $(\d_N)$ is a sequence of deterministic vectors such that $\limsup_N \|\d_N\| < \infty$. 
	Under the assumptions of theorem \ref{theorem:CLT_subspace},
	\begin{align}
		\frac{\sqrt{N} \left(\hat{\eta}_N - \eta_N\right)}
		{\sqrt{2\sum_{k=1}^M\sum_{\ell=1}^M \vartheta_{N}(k,\ell) \d_N^* \u_{k,N}\u_{k,N}^* \d_N \d_N^* \u_{\ell,N}\u_{\ell,N}^* \d_N}}
		\xrightarrow[N \to \infty]{\Dcal} \Ncal_{\Rbb}\left(0, 1\right).	
		\notag
	\end{align}
\end{corollary}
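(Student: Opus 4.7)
The plan is to reduce this corollary to a direct application of Theorem~\ref{theorem:CLT_subspace}, using the observation that the quadratic-form case forces $\hat{\eta}_N - \eta_N$ to be real.

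First I would check that when $\d_{1,N}=\d_{2,N}=\d_N$, both $\eta_N$ in \eqref{eq:eta} and $\hat{\eta}_N$ in \eqref{eq:subspace_estimator} are real-valued: each is of the form $\d_N^{*}\mathbf{A}\d_N$ with $\mathbf{A}$ Hermitian (the true or sample noise-subspace projection, the Hermiticity of the latter being visible on the residue-theorem formula discussed in Remark~\ref{remark:subspace_estimator}). Hence $\Im(\hat{\eta}_N - \eta_N)=0$, and only the real part needs to be controlled.

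Then I would specialize Theorem~\ref{theorem:CLT_subspace} to the deterministic bounded sequence $\xi_N = 1$, giving $\xibs_N = [1,0]^{T}$ and therefore $\xibs_N^{T}\Gammabs_N\xibs_N = [\Gammabs_N]_{1,1}$. Using the definition of $\Gammabs_N(k,\ell)$, I would then exploit the key collapse $\eta^{(i,j)}_{k,N} = |\d_N^{*}\u_{k,N}|^{2}$ for every $(i,j)\in\{1,2\}^{2}$: this quantity is real and non-negative, so each of the three bilinear terms in the upper-left entry of $\Gammabs_N(k,\ell)$ becomes the same real product $|\d_N^{*}\u_{k,N}|^{2}|\d_N^{*}\u_{\ell,N}|^{2}$, and straightforward bookkeeping yields
\begin{align}
\xibs_N^{T}\Gammabs_N\xibs_N = 2\sum_{k=1}^{M}\sum_{\ell=1}^{M} \vartheta_N(k,\ell)\, \d_N^{*}\u_{k,N}\u_{k,N}^{*}\d_N\, \d_N^{*}\u_{\ell,N}\u_{\ell,N}^{*}\d_N,
\notag
\end{align}
which is exactly the square of the denominator appearing in the corollary.

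Applying \eqref{eq:conv_distrib_eta} with this choice of $\xi_N$ then gives the claimed convergence, under the implicit non-degeneracy condition $\liminf_N \xibs_N^{T}\Gammabs_N\xibs_N > 0$ required for the CLT to be non-vacuous (otherwise only the tightness bound \eqref{eq:tightness_real_part} applies, and the normalized ratio is ill-defined). I do not expect a serious obstacle here: the entire argument is bookkeeping on $\Gammabs_N$ combined with the Hermitian symmetry of the projector. The only mildly delicate point is double-checking that the three bracketed products in the $(1,1)$-entry of $\Gammabs_N(k,\ell)$ combine to twice $|\d_N^{*}\u_{k,N}|^{2}|\d_N^{*}\u_{\ell,N}|^{2}$ once the $\eta^{(i,j)}_{k,N}$ are recognized as coinciding.
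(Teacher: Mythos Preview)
Your proposal is correct and is exactly the intended derivation: the paper states Corollary~\ref{corollary:quad_form} without proof precisely because it is the immediate specialization of Theorem~\ref{theorem:CLT_subspace} with $\xi_N=1$, using that $\eta^{(i,j)}_{k,N}=|\d_N^*\u_{k,N}|^2$ collapses $[\Gammabs_N(k,\ell)]_{1,1}$ to $2|\d_N^*\u_{k,N}|^2|\d_N^*\u_{\ell,N}|^2$. Your flag that the non-degeneracy condition $\liminf_N \xibs_N^T\Gammabs_N\xibs_N>0$ is implicitly assumed (otherwise the ratio is ill-defined) is also the correct reading.
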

The result of corollary \ref{corollary:quad_form}  is illustrated in figure \ref{figure:quad_form_example} by comparing the empirical distribution of the quadratic form
$\sqrt{N}\left(\hat{\eta}_N-\eta_N\right)$ ($10^5$ trials), where $\d_{1,N}=\d_{2,N}=\e_M$, with the normal distribution $\Ncal_{\Rbb}\left(0,2 \vartheta_N(M,M)\right)$. 
The parameters are $M=20$, $N=40$, $\sigma=1$ and the matrix $\B_N\B_N^*$ is diagonal with non-zero eigenvalues at $5$ and $6$.
\begin{figure}[h]
	\centering
	\includegraphics[scale=0.4]{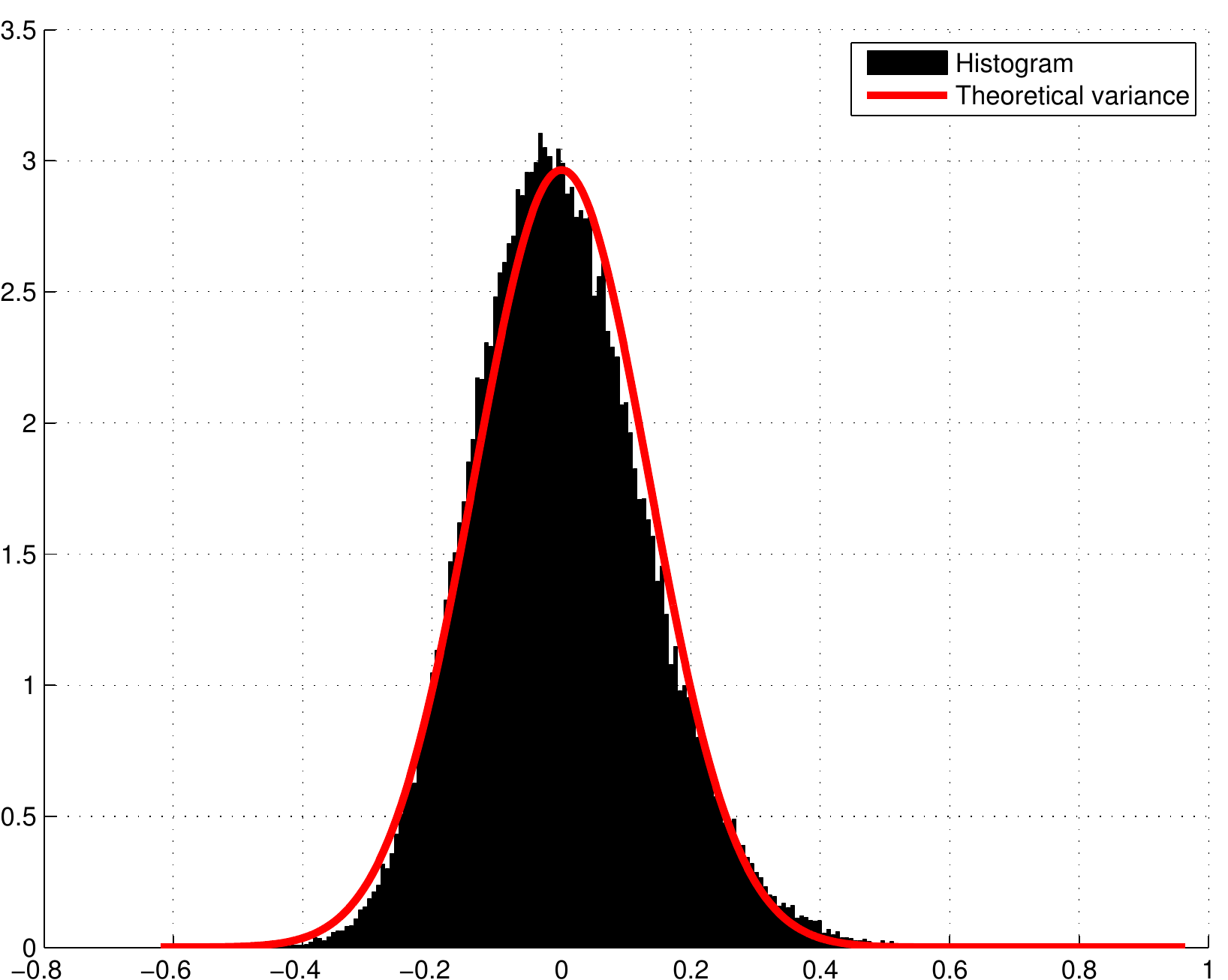}
	\caption
	{
		Empirical distribution of $\sqrt{N}\left(\hat{\eta}_N-\eta_N\right)$ (quadratic form)
	}
	\label{figure:quad_form_example}
\end{figure}

In the "non-degenerate" case, we have to ensure that $\Gammabs_N$ is asymptotically non-singular. By computing the smallest eigenvalue of $\Gammabs_N$, this is equivalent to
\begin{align}
	\liminf_{N \to \infty} 
	\left(
		\sum_{k,\ell} \left(\vartheta_N(k,\ell) \eta^{(1,1)}_{k,N}\eta^{(2,2)}_{\ell,N}\right) - \left|\sum_{k,\ell} \vartheta_N(k,\ell) \eta^{(1,2)}_{k,N}\eta^{(1,2)}_{\ell,N}\right|
	\right)
	> 0,
	\label{assumption:non_singular_cov}
\end{align}
We therefore have the following result, by using the fact that 
\begin{align}
	\Re\left(\xi_N (\hat{\eta}_N - \eta_N)\right) = 
	\begin{bmatrix}
		\Re(\xi_N) & \Im(\xi_N)
	\end{bmatrix}
	\begin{bmatrix}
		\Re\left(\hat{\eta}_N - \eta_N\right) \\ -\Im\left(\hat{\eta}_N - \eta_N\right)
	\end{bmatrix}	
\end{align}
\begin{corollary}[Non-degenerate case]
	\label{corollary:non_degenerate}
	Under the assumptions of theorem \ref{theorem:CLT_subspace} and if \eqref{assumption:non_singular_cov} holds, then
	\begin{align}
		\sqrt{N}\ 
		\Gammabs_N^{-1/2}
		\begin{bmatrix}
			\Re\left(\hat{\eta}_N - \eta_N\right)
			\\
			-\Im\left(\hat{\eta}_N - \eta_N\right)
		\end{bmatrix}
		\xrightarrow[N \to \infty]{\Dcal} \Ncal_{\Rbb^2}\left(\mathbf{0}, \I\right).	
		\notag
	\end{align}
\end{corollary}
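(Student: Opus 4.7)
The plan is to reduce this multivariate statement to the scalar CLT \eqref{eq:conv_distrib_eta} of Theorem \ref{theorem:CLT_subspace} via the Cramér-Wold device. Fix an arbitrary $\mathbf{a}=[a_1,a_2]^T \in \Rbb^2$; it suffices to prove that
\begin{align}
\sqrt{N}\, \mathbf{a}^T \Gammabs_N^{-1/2}
\begin{bmatrix} \Re(\hat{\eta}_N - \eta_N) \\ -\Im(\hat{\eta}_N - \eta_N) \end{bmatrix}
\xrightarrow[N \to \infty]{\Dcal} \Ncal_{\Rbb}\bigl(0, \|\mathbf{a}\|^2\bigr). \notag
\end{align}
The case $\mathbf{a} = 0$ is vacuous, so assume $\mathbf{a} \neq 0$. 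I would introduce $\mathbf{b}_N := \Gammabs_N^{-1/2} \mathbf{a} = [b_{1,N},b_{2,N}]^T$ and the complex deterministic sequence $\xi_N := b_{1,N} + \irm b_{2,N}$. The identity $b_{1,N}\Re(\hat{\eta}_N-\eta_N) - b_{2,N}\Im(\hat{\eta}_N-\eta_N) = \Re\bigl(\xi_N(\hat{\eta}_N-\eta_N)\bigr)$ shows that the quantity above coincides with $\sqrt{N}\,\Re(\xi_N(\hat{\eta}_N-\eta_N))$, and with $\xibs_N = \mathbf{b}_N$ one obtains $\xibs_N^T \Gammabs_N \xibs_N = \mathbf{a}^T \Gammabs_N^{-1/2}\Gammabs_N\Gammabs_N^{-1/2}\mathbf{a} = \|\mathbf{a}\|^2$. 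Hence the condition $\liminf_N \xibs_N^T \Gammabs_N \xibs_N > 0$ required by Theorem \ref{theorem:CLT_subspace} is automatic, and only the boundedness of $(\xi_N)$ remains to be verified.

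The main point of the argument lies in this last verification: since $|\xi_N| = \|\mathbf{b}_N\| \leq \|\Gammabs_N^{-1/2}\|\,\|\mathbf{a}\|$, I need to show that $\|\Gammabs_N^{-1/2}\|$ stays uniformly bounded, or equivalently that the smallest eigenvalue of $\Gammabs_N$ is bounded away from $0$. This is where assumption \eqref{assumption:non_singular_cov} enters, via a direct $2\times 2$ diagonalization. Using the symmetry $\vartheta_N(k,\ell) = \vartheta_N(\ell,k)$ (clear from the definition \eqref{eq:def_vartheta}--\eqref{eq:def_theta}), one rewrites $\Gammabs_N$ as
\begin{align}
\Gammabs_N = \begin{bmatrix} A_N+B_N & C_N \\ C_N & -A_N+B_N \end{bmatrix}, \notag
\end{align}
with $A_N = \sum_{k,\ell} \vartheta_N(k,\ell)\Re(\eta^{(1,2)}_{k,N}\eta^{(1,2)}_{\ell,N})$, $C_N = -\sum_{k,\ell} \vartheta_N(k,\ell)\Im(\eta^{(1,2)}_{k,N}\eta^{(1,2)}_{\ell,N})$, and $B_N = \sum_{k,\ell}\vartheta_N(k,\ell)\eta^{(1,1)}_{k,N}\eta^{(2,2)}_{\ell,N}$. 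Its eigenvalues are then $B_N \pm \sqrt{A_N^2 + C_N^2}$, and recognising $\sqrt{A_N^2+C_N^2} = |A_N - \irm C_N| = \bigl|\sum_{k,\ell}\vartheta_N(k,\ell)\eta^{(1,2)}_{k,N}\eta^{(1,2)}_{\ell,N}\bigr|$, the smallest eigenvalue of $\Gammabs_N$ is exactly the quantity whose asymptotic positivity is postulated by \eqref{assumption:non_singular_cov}.

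With both hypotheses verified, Theorem \ref{theorem:CLT_subspace} applied to this $\xi_N$ yields $\sqrt{N}\,\Re(\xi_N(\hat{\eta}_N-\eta_N))/\|\mathbf{a}\| \xrightarrow[]{\Dcal} \Ncal_{\Rbb}(0,1)$, which is precisely the desired convergence for the linear combination indexed by $\mathbf{a}$; the Cramér-Wold device then concludes. In short, the corollary is a packaging of the scalar CLT, the only genuinely new ingredient being the elementary $2\times 2$ eigenvalue identity that translates assumption \eqref{assumption:non_singular_cov} into uniform boundedness of $\Gammabs_N^{-1/2}$.
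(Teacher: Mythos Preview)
Your proof is correct and follows the same route as the paper: the paper derives the corollary directly from Theorem \ref{theorem:CLT_subspace} via the identity $\Re(\xi_N(\hat{\eta}_N-\eta_N)) = [\Re(\xi_N),\Im(\xi_N)]\,[\Re(\hat{\eta}_N-\eta_N),-\Im(\hat{\eta}_N-\eta_N)]^T$ and the remark that the smallest eigenvalue of $\Gammabs_N$ equals the quantity in \eqref{assumption:non_singular_cov}. Your argument makes the Cram\'er--Wold reduction and the $2\times 2$ eigenvalue computation explicit, but the substance is identical.
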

An illustration of corollary \eqref{corollary:non_degenerate}  is given in figure \ref{figure:bil_form_example} where we have compared the empirical distribution of the bilinear form
$\sqrt{N}\Re\left(\hat{\eta}_N-\eta_N\right)$ ($10^5$ trials), where $\d_{1,N}=\e_M$ and $\d_{2,N}=\e_{M-1}$, 
with $\Ncal_{\Rbb}\left(0,\e_1^T \Gammabs_N \e_1\right)$.
The parameters are $M=20$, $N=40$, $\sigma=1$ and the matrix $\B_N\B_N^*$ is diagonal with non-zero eigenvalues at $5$ and $6$ ($K=2$).
\begin{figure}[h]
	\centering
	\includegraphics[scale=0.4]{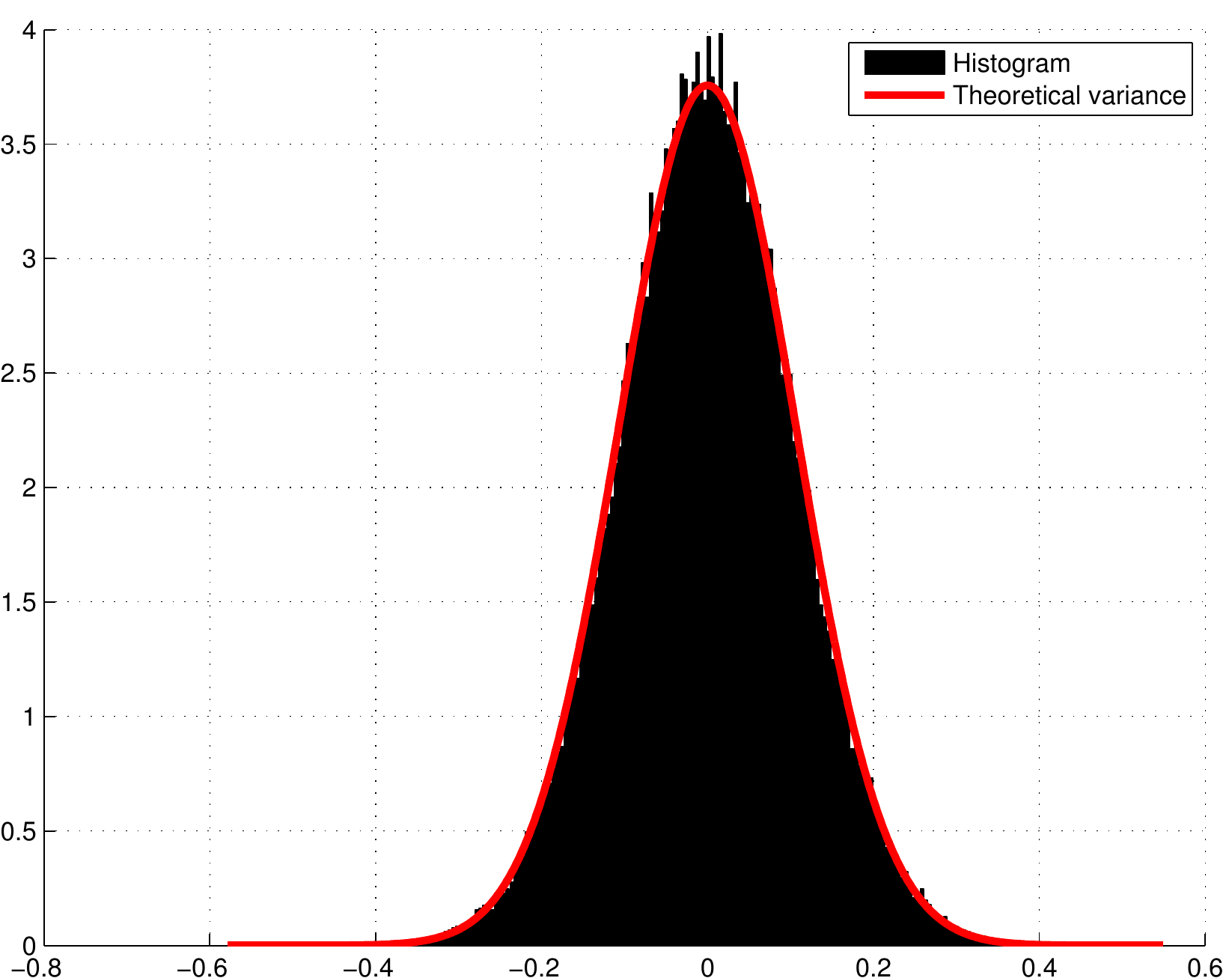}
	\caption
	{
		Empirical distribution of $\sqrt{N}\Re\left(\hat{\eta}_N-\eta_N\right)$ (bilinear form).
	}
	\label{figure:bil_form_example}
\end{figure}

\subsubsection{CLT for the traditional noise subspace estimate}

To conclude section \ref{section:CLT_bilinear_form}, we provide a CLT for the traditional noise subspace estimate, defined in \eqref{eq:traditional_localization_function_estimator}
by
\begin{align}
	\hat{\eta}_N^{(t)} = 
	\d_{1,N}^* \left(\I - \sum_{k=1}^{K} \hat{\u}_{k,N} \hat{\u}_{k,N}^*\right) \d_{2,N}.
	\notag
\end{align}	
From \eqref{eq:separation_lambdahat}, almost surely for $N$ large enough, the $K$ largest
eigenvalues $\hat{\lambda}_{1,N},\ldots,\hat{\lambda}_{K,N}$ are located inside the rectangle $\Rcal$ defined in \eqref{def:rectangle}, while the smallest $M-K$ remain
outside $\Rcal$. Therefore, for $N$ large enough, almost surely,
\begin{align}
	\hat{\eta}_N^{(t)}
	= 
	\d_{1,N}^*
	\left(
		\I - \frac{1}{2 \pi \irm}\oint_{\partial \Rcal}  \Q_N(z) \drm z
	\right) 
	\d_{2,N}.
	\notag
\end{align}
As for \eqref{eq:uniform_conv_integrand}, we have
\begin{align}
	\sup_{z \in \partial \Rcal} 
	\left| \d_{1,N}^* \left(\Q_N(z) - \T_N(z) \right) \d_{2,N} \right|
	\xrightarrow[N \to \infty]{a.s.} 0.
	\notag
\end{align}
which immediately implies that 
\begin{align}
	\hat{\eta}_N^{(t)} - \eta_N^{(t)} \xrightarrow[N\to\infty]{a.s.} 0,
	\notag	
\end{align}
where 
\begin{align}
	\eta_N^{(t)} = 
	\frac{1}{2 \pi \irm} \oint_{\partial\Rcal} \d_{1,N}^* \T_N(z) \d_{2,N} \drm z.
	\notag
\end{align}
Define, as for \eqref{eq:def_vartheta},
\begin{align}
	&\vartheta_N^{(t)}(k,\ell) = 
	\notag\\
	&
	\left(\frac{1}{2 \pi \irm}\right)^2 \oint_{\partial \Rcal}\oint_{\partial \Rcal}
	\frac
	{
		\frac{\sigma^2}{2} \theta_{N}^{(k,\ell)}(z_1,z_2) 
		\left(1+\sigma^2 c_N m_N(z_1)\right)\left(1+\sigma^2 c_N m_N(z_2)\right)}
	{
		\left(\lambda_{k,N} - w_N(z_1)\right)
		\left(\lambda_{\ell,N} - w_N(z_1)\right)
		\left(\lambda_{k,N} - w_N(z_2)\right)
		\left(\lambda_{\ell,N} - w_N(z_2)\right)
		\Delta_N(z_1,z_2)
	}
	 \drm z_1 \drm z_2,
	 \label{eq:def_vartheta_trad}
\end{align}
with $\Delta_N(z_1,z_2)$ and $\theta_{N}^{(k,\ell)}(z_1,z_2)$ defined respectively by \eqref{def:determinant} and \eqref{eq:def_theta}.
We define also
\begin{align}
	\Gammabs_N^{(t)} = 
	\sum_{k=1}^M \sum_{\ell=1}^M \vartheta_N^{(t)}(k,\ell) \Gammabs_N(k,\ell),
	\label{def:Gamma_trad}
\end{align}	
as for \eqref{def:Gamma}. Then we have the following result :
\begin{theorem}
	\label{theorem:CLT_trad_subspace}		
	Assume the separation conditions {\bf A-\ref{assumption:subspace1}} 
	and {\bf A-\ref{assumption:subspace2}} hold, and that $K$ is fixed with respect to $N$. 
	Then $\max_{k,\ell \geq K+1} \vartheta_N^{(t)}(k,\ell) \to_N 0$,
	\begin{align}
		\vartheta_N^{(t)}(k,\ell)= 
		\frac
		{
			\sigma^2\left(\lambda_{k,N} + \sigma^2\right)
			\left(\lambda_{k,N}^2 - \sigma^4 c_N\right)
		}
		{2 \lambda_{k,N}^2 \left(\lambda_{k,N} + \sigma^2 c_N\right)^2}
		+ \epsilon_{N}(k,\ell)	
	\notag
\end{align}
for $k \leq K, \ell \geq K+1$, 
with $\max_{k \leq K, \ell \geq K+1} |\epsilon_{N}(k,\ell)| \to_N 0$, and 
	\begin{align}
		\vartheta_N^{(t)}(k,\ell)=
		\frac
		{
			\sigma^4 c_N \chi_N^{(t)}(k,\ell)
		}
		{
			2 \lambda_{k,N}\lambda_{\ell,N}
			\left(\lambda_{k,N} + \sigma^2 c_N\right)^2
			\left(\lambda_{\ell,N} + \sigma^2 c_N\right)^2
			\left(\lambda_{k,N} \lambda_{\ell,N} - \sigma^4 c_N\right)
		}
		+ o(1)
		\notag
	\end{align}
	for $1 \leq k,\ell \leq K$, where $\chi_N^{(t)}(k,\ell)$ is defined by
	\begin{align}
		&\chi_{N}^{(t)}(k,\ell)
		=
		\notag\\ 
	&	\lambda_{k,N} \lambda_{\ell,N} 
		\left(
			\lambda_{k,N} \lambda_{\ell,N} + \sigma^2 (\lambda_{k,N} + \lambda_{\ell,N}) 
			+  \sigma^4
		\right)
		\left(
			(1+c_N) (\lambda_{k,N}\lambda_{\ell,N}+\sigma^4 c_N) 
			+ 2 \sigma^2 c_N (\lambda_{k,N} + \lambda_{\ell,N})
		\right)
		\notag\\
		&
		- c 
		\left(
			\lambda_{k,N}\lambda_{\ell,N}
			-\sigma^4 c_N
		\right)
		\left(
			\lambda_{k,N}\lambda_{\ell,N} 
			+ \sigma^2 (\lambda_{k,N}+\lambda_{\ell,N}) 
			+ \sigma^4 c_N
		\right)^2.
	\notag
\end{align}
	Finally, let $(\xi_N)$ be a deterministic bounded sequence and denote $\xibs_N=\left[\Re\left(\xi_N\right), \Im\left(\xi_N\right)\right]^T$. 
	Then, if $\liminf_N \xibs_N^T \Gammabs_N^{(t)} \xibs_N > 0$, it holds that
	\begin{align}
		\sqrt{N} 
		\frac
		{
			\Re\left(\xi_N\left(\hat{\eta}_N^{(t)} - \eta_N^{(t)}\right)\right)
		}
		{\sqrt{\xibs_N^T \Gammabs_N^{(t)} \xibs_N}} 
		\xrightarrow[N \to \infty]{\Dcal} \Ncal_{\Rbb}\left(0, 1\right).	
		\notag
	\end{align}
\end{theorem}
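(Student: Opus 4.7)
The plan is to adapt the proof of Theorem~\ref{theorem:CLT_subspace} to the simpler integrand $\d_{1,N}^*\Q_N(z)\d_{2,N}$, which differs from the integrand underlying $\hat{\eta}_N$ only by the absence of the weight $\hat{w}_N'(z)/(1+\sigma^2 c_N\hat{m}_N(z))$. Starting from the identity
\begin{align}
\hat{\eta}_N^{(t)} - \eta_N^{(t)} = -\frac{1}{2\pi\irm}\oint_{\partial\Rcal} \d_{1,N}^*\bigl(\Q_N(z)-\T_N(z)\bigr)\d_{2,N}\,\drm z,
\notag
\end{align}
which holds a.s.\ for large $N$ by \eqref{eq:separation_lambdahat}, the CLT follows from the Gaussian approximation of the process $z \mapsto \sqrt{N}\,\d_{1,N}^*(\Q_N(z)-\T_N(z))\d_{2,N}$ on $\partial\Rcal$, together with its asymptotic covariance kernel. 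Both ingredients are produced in the course of proving Theorem~\ref{theorem:CLT_subspace} in Section~\ref{section:proof_clt} and can be reused without modification.

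The expression \eqref{eq:def_vartheta_trad} for $\vartheta_N^{(t)}(k,\ell)$ is then obtained by writing out the double contour integral of the covariance kernel in $(z_1,z_2)$ and expanding $\T_N$ using the spectral representation $\T_N(z) = (1+\sigma^2 c_N m_N(z))\sum_{k=1}^M (\lambda_{k,N}-w_N(z))^{-1}\u_{k,N}\u_{k,N}^*$, which follows from $\T_N(z)^{-1}(1+\sigma^2 c_N m_N(z)) = \B_N\B_N^* - w_N(z)\I$. The extra factors $(1+\sigma^2 c_N m_N(z_i))$ in the numerator of \eqref{eq:def_vartheta_trad}, relative to \eqref{eq:def_vartheta}, arise precisely because the weight $1/(1+\sigma^2 c_N m_N(z))$ present in \eqref{eq:subspace_estimator} is absent here and does not cancel them.

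Next I would evaluate $\vartheta_N^{(t)}(k,\ell)$ in the spiked regime by residues. Since $K$ is fixed, \eqref{eq:conv_MP2} permits replacing $m_N, w_N, \Delta_N$ on $\partial\Rcal$ by their Marchenko-Pastur limits $m,w,\Delta$ with $o(1)$ error uniform in $k,\ell$. The potential poles of the double integrand inside $\Rcal \times \Rcal$ lie at $z_i \in \{\phi(\lambda_{k,N}),\phi(\lambda_{\ell,N})\}$, while the apparent singularity of $1/\Delta$ at $z_1=z_2$ is removable by \eqref{eq:link_Delta_w}. For $k,\ell \geq K+1$ both eigenvalues are zero and $\phi(0)=\infty$ lies outside $\Rcal$, so the integrand is holomorphic on $\Rcal \times \Rcal$ and $\vartheta_N^{(t)}(k,\ell) \to 0$ uniformly. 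For $k \leq K < \ell$, only the pole $z_i = \phi(\lambda_{k,N})$ contributes (since $\lambda_{\ell,N}=0$ collapses $\theta_N^{(k,\ell)}$), yielding the claimed single-residue formula. For $1 \leq k,\ell \leq K$, both families of poles contribute and a genuine double-residue must be computed.

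The main obstacle is the algebraic simplification of the double-residue expression in the case $1 \leq k,\ell \leq K$, into the stated rational form with numerator $\sigma^4 c_N \chi_N^{(t)}(k,\ell)$. The manipulation relies on the Marchenko-Pastur identities $w(\phi(\lambda))=\lambda$, $1+\sigma^2 c m(\phi(\lambda))=\lambda/(\lambda+\sigma^2 c)$, $\phi'(\lambda)=1-\sigma^4 c/\lambda^2$, and $\Delta(\phi(\lambda_k),\phi(\lambda_\ell))=1-\sigma^4 c/(\lambda_k\lambda_\ell)$; the denominator of the final expression emerges as the product of $\phi'(\lambda_{k,N})\phi'(\lambda_{\ell,N})\Delta(\phi(\lambda_{k,N}),\phi(\lambda_{\ell,N}))$ with $\lambda_{k,N}\lambda_{\ell,N}(\lambda_{k,N}+\sigma^2 c_N)^2(\lambda_{\ell,N}+\sigma^2 c_N)^2$. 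Once the $\vartheta_N^{(t)}$ formulas are secured, the CLT itself follows directly from the Gaussian approximation together with a tightness argument parallel to the one underlying \eqref{eq:tightness_real_part}.
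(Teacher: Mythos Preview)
Your proposal is correct and follows the paper's own strategy: the paper explicitly omits the proof, stating only that it ``follows step by step the proof of Theorem~\ref{theorem:CLT_subspace}'', and you have identified precisely which steps carry over and where the modification (the extra $(1+\sigma^2 c_N m_N(z_i))$ factors replacing the $w_N'/(1+\sigma^2 c_N m_N)$ weight) enters.

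One small correction: in the mixed case $k\le K<\ell$, your parenthetical ``since $\lambda_{\ell,N}=0$ collapses $\theta_N^{(k,\ell)}$'' is not the right reason. Setting $\lambda_{\ell,N}=0$ kills only the $\lambda_k\lambda_\ell v_N$ term in $\theta_N^{(k,\ell)}$; the $\tilde v_N$ and $(1-u_N)$ terms survive. The correct justification is that the factor $(\lambda_{\ell,N}-w(z_i))^{-1}=-w(z_i)^{-1}$ has no pole inside $\Rcal$, because $w$ maps $\Rcal$ (which lies to the right of the Marchenko--Pastur support) into $(\sigma^2\sqrt{c},\infty)$ and hence never vanishes there. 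Likewise you should note that $1/\Delta(z_1,z_2)=w(z_1)w(z_2)/(w(z_1)w(z_2)-\sigma^4 c)$ contributes no additional pole in $\Rcal$, since $w(z_1)w(z_2)=\sigma^4 c$ would force one of the $w(z_i)$ below $\sigma^2\sqrt{c}$. With these points made explicit, your direct iterated-residue computation in the $z$ variable is a valid alternative to the series expansion of $\Delta^{-1}$ used in Section~\ref{section:asymptotic_variance_K_fixed}; both yield the same closed forms.
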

The proof of Theorem \ref{theorem:CLT_trad_subspace}, which follows step by step the proof
of Theorem \ref{theorem:CLT_subspace}, is omitted.

\section{Proof of theorem \ref{theorem:CLT_subspace}}
\label{section:proof_clt}

This section is dedicated to prove theorem \ref{theorem:CLT_subspace}. Several long computations will be defered to the appendix.

	\subsection{Regularization and confinement of the eigenvalues}
	\label{section:regularization}

To prove theorem \ref{theorem:CLT_subspace}, we will use the usual Levy's theorem and prove the convergence of the characteristic function.
Since the moments of $\hat{\eta}_N$ may not be defined, due to the poles in the integrand of \eqref{eq:subspace_estimator} (see remark \ref{remark:subspace_estimator}), 
we first use a trick from \cite{hachem2012large}, to force these poles to be away from the contour, and which does not modify the asymptotic distribution of $\hat{\eta}_N$.

Let $\varphi \in \Ccal_c^\infty(\Rbb,[0,1])$ s.t.
\begin{align}
	\varphi(\lambda) = 
	\begin{cases}
		1 & \text{for } \lambda \in [t_1^- - \frac{\epsilon}{3}, t_1^+ + \frac{\epsilon}{3}] \cup [t_2^- - \frac{\epsilon}{3}, t_2^+ + \frac{\epsilon}{3}]
		\\
		0 & \text{for } \lambda \in \Rbb \backslash \left([t_1^- - \frac{2 \epsilon}{3}, t_1^+ + \frac{2 \epsilon}{3}] \cup [t_2^- - \frac{2\epsilon}{3}, t_2^+ + \frac{2\epsilon}{3}]\right),
	\end{cases}
	\label{def:reg_phi}
\end{align}
where $\epsilon$ is given in \eqref{def:rectangle}, and define the regularization coefficient 
\begin{align}
	\chi_N = \det\ \varphi\left(\Sigmabs_N\Sigmabs_N^*\right) \det\ \varphi\left(\hat{\Omegabs}_N\right)
	\label{def:reg_chi}
\end{align}
(see remark \ref{remark:coeff_xi} for the definition of $\hat{\Omegabs}_N$).
From \eqref{eq:separation_lambdahat} and \eqref{eq:separation_omegahat}, we have $\chi_N = 1$ w.p.1 for $N$ large and thus, for all $p \in \Nbb$, we get
\begin{align}
	\hat{\eta}_N \chi_N^2 = \hat{\eta}_N + \Ocal_{\Pbb}\left(\frac{1}{N^p}\right).
	\notag
\end{align}
Therefore, to obtain a CLT for $\hat{\eta}_N$, we only need to prove a CLT for $\hat{\eta}_N \chi_N^2$.
Moreover, it is proved in \cite{hachem2012large} that
\begin{align}
	\sup_{z \in \partial \Rcal} \Ebb\left[\left|\d_{1,N}^* \left(\Q_N(z) - \T_N(z)\right) \d_{2,N}\right|^2 \chi_N\right] &= \Ocal\left(\frac{1}{N}\right),
	\label{eq:secordquad} \\
	\sup_{z \in \partial \Rcal} 
	\Ebb\left[\left|\frac{\hat{w}'_N(z)}{1+\sigma^2 c_N \hat{m}_N(z)} - \frac{w'_N(z)}{1+\sigma^2 c_N m_N(z)}\right|^2 \chi_N\right] 
	&= \Ocal\left(\frac{1}{N^2}\right).
	\label{eq:secordrem}
\end{align}
Since $\frac{\hat{w}'_N(z)}{1+\sigma^2 c_N \hat{m}_N(z)} \chi_N$ fluctuates less than the quadratic form $\d_{1,N}^* \Q_N(z)\d_{2,N} \chi_N$, we can replace it with 
$\frac{w'_N(z)}{1+\sigma^2 c_N m_N(z)}$ without modifying any asymptotic second order results. 
Indeed, it is easy to see from \eqref{eq:secordrem} that
\begin{align}
	\hat{\eta}_N \chi_N^2 
	= \d_{1,N}^* \left(\I - \frac{1}{2 \pi \irm} \oint_{\partial \Rcal} \Q_N(z) \chi_N \frac{w'_N(z)}{1+\sigma^2 c_N m_N(z)} \drm z\right) \d_{2,N}
	+ \Ocal_{\Pbb}\left(\frac{1}{N}\right),
	\notag
\end{align}
and the problem reduces finally to obtain the asymptotic distribution of 
\begin{align}
	\hat{\gamma}_N = \frac{1}{2 \pi \irm} \oint_{\partial \Rcal} \d_{1,N}^* \Q_N(z) \d_{2,N} \chi_N \frac{w'_N(z)}{1+\sigma^2 c_N m_N(z)} \drm z.
	\label{def:gammahat}
\end{align}
\begin{remark}
	If $\sqrt{N}\left(c_N - c\right) \to 0$, it can be proved that 
	\begin{align}
		\sup_{z \in \Rcal} \left|\frac{w'_N(z)}{1+\sigma^2 c_N m_N(z)} - \frac{w'(z)}{1+\sigma^2 c m(z)}\right| = o \left(\frac{1}{\sqrt{N}}\right),
		\notag
	\end{align}
	and the simplified estimator \eqref{eq:simplified_eta} $\hat{\eta}_N^{(s)}$ derived in \cite{hachem2012subspace} will have the same asymptotic
	fluctuations as $\hat{\eta}_N$.
\end{remark}
In the remainder, we denote by $\psi_N(u)$ the characteristic function defined on $\Rbb$ by
\begin{align}
	\psi_N(u) = \Ebb\left[\mathrm{exp}\left(\irm u \sqrt{N} \Re\left(\xi_N \hat{\gamma}_N \right)\right)\right],
	\notag
\end{align}
where $(\xi_N)$ is a deterministic sequence such that $\limsup_N \left|\xi_N\right| < \infty$.

Finally, we recall two useful properties from \cite[Prop. 3.3]{hachem2012large}:
\begin{align}
	\Ebb\left[\hat{\gamma}_N - \left(\d_{1,N}^*\d_{2,N}-\eta_N\right)\right] = \Ocal\left(\frac{1}{N^{3/2}}\right)
	\text{ and }
	\Ebb\left|\hat{\gamma}_N - \left(\d_{1,N}^*\d_{2,N}-\eta_N\right)\right|^2 = \Ocal\left(\frac{1}{N}\right).
	\label{eq:bias_and_variance}
\end{align}

	\subsection{The differential equation}
	\label{section:diff_equ}
	
We first prove that the characteristic function $\psi_N(u)$ satisfies the differential equation	of a Gaussian characteristic function, up to an error term.
\begin{remark}
	\label{remark:coeff_xi}
	Note that in the expression of $\psi_N(u)$, we can assume for ease of reading and without loss of generality that $\xi_N=1$ (by considering vectors $\d_{1,N} \overline{\xi_N}^{1/2}$ 
	and $\d_{2,N} \xi_N^{1/2}$ in the bilinear form).
\end{remark}
In the following, $\epsilon_N(u,z_1,z_2)$ will denote a complex generic continuous function defined on $\Rbb \times \partial \Rcal \times \partial \Rcal$, 
such that $u \mapsto \epsilon_N(u,z_1,z_2)$ is continuously differentiable, and
\begin{align}
	\limsup_{N\to\infty} \sup_{(z_1,z_2) \in \partial \Rcal \times \partial \Rcal} 
	\left\{\left|\epsilon_N(u,z_1,z_2)\right|, \left|\frac{\partial \epsilon_N(u,z_1,z_2)}{\partial u}\right|\right\} < \Prm(u),
	\notag
\end{align}
with $\Prm(u)$ a polynomial with positive coefficients. $\epsilon_N(u,z_1,z_2)$ may take different values from one line to another.
We will also keep the notation $\epsilon_N(u)$, $\epsilon_N(z_1,z_2)$, $\epsilon_N(u,z_1)$ if $\epsilon_N(u,z_1,z_2)$ does not depend on $(z_1,z_2)$, $u$ or $z_2$.

Using dominated convergence and Fubini's theorem, the derivative $\psi_N'(u)$ writes
\begin{align}
	\psi_N'(u) = 
	\frac{\irm \sqrt{N}}{2} 
	\frac{1}{2 \pi \irm} \oint_{\partial \Rcal} 
	\Ebb\left[\left(\d_{1,N}^* \Q_N(z) \d_{2,N} + \d_{2,N}^* \Q_N(z) \d_{1,N}\right)\chi_N \erm^{\irm u \sqrt{N} \Re\left(\hat{\gamma}_N\right)}\right]
	\frac{w'_N(z)}{1+\sigma^2 c_N m_N(z)} \drm z,
	\label{eq:derivative_psi}
\end{align}
so that we need to develop the term $\Ebb\left[\d_{1,N}^* \Q_N(z) \d_{2,N} \chi_N \erm^{\irm u \Re\left(\hat{\gamma}_N\right)}\right]$.
By standard computations defered to appendix \ref{appendix:gaussiancomp1}, we obtain
\begin{align}
	&\psi_N'(u) = 
	\notag\\
	&\left(
	\irm \sqrt{N} \Re\left(\d_{1,N}^*\d_{2,N} -\eta_N\right)
	- \frac{u \sigma^2}{4}
	\left(\frac{1}{2 \pi \irm}\right)^2 \oint_{\partial \Rcal} \oint_{\partial \Rcal} 
	\frac{\left(\mu_{N}(z_1,z_2)+\tilde{\mu}_N(z_1,z_2)\right) w'_N(z_1) w'_N(z_2)}
	{\left(1+\sigma^2 c_N m_N(z_1)\right)^2\left(1+\sigma^2 c_N m_N(z_2)\right)^2} \drm z_1 \drm z_2 
	\right)
	\psi_N(u)
	\notag\\
	&
	+ \frac{\epsilon_N(u)}{\sqrt{N}},
	\label{eq:charac_fun_1}
\end{align}
where the quantity $\mu_{N}(z_1,z_2)$ is given by
\begin{align}
\label{eq:def_mu}
\begin{split}
	&\mu_{N}(z_1,z_2) = 
	\\
	&
	\sum_{\substack{i,j=1 \\ i \neq j}}^2
	\Biggl[
		\d_{i,N}^* \T_N(z_1) \T_N(z_2) \d_{j,N} \Ebb\left[\d_{i,N}^* \Q_N(z_1) \B_N\B_N^*\Q_N(z_2)\d_{j,N} \chi_N \right] 
		\\
		&
		+
		\d_{i,N}^* \T_N(z_1) \B_N\B_N^* \T_N(z_2) \d_{j,N} \Ebb\left[\d_{i,N}^* \Q_N(z_1) \Q_N(z_2)\d_{j,N} \chi_N \right] 
		\\
		&
		+
		s_N(z_1,z_2) \left(1+\sigma^2 c_N m_N(z_1)\right)\left(1+\sigma^2 c_N m_N(z_2)\right)  
		\d_{i,N}^* \T_N(z_1) \T_N(z_2) \d_{j,N} \Ebb\left[\d_{i,N}^* \Q_N(z_1) \Q_N(z_2)\d_{j,N} \chi_N \right]
	\Biggr],
\end{split}
\end{align}
and $\tilde{\mu}_N(z_1,z_2)$ is given by
\begin{align}
\label{eq:def_mutilde}
\begin{split}
	&\tilde{\mu}_{N}(z_1,z_2) = 
	\\
	&
	\sum_{\substack{i,j=1 \\ i \neq j}}^2
	\Biggl[
		\d_{i,N}^* \T_N(z_1) \T_N(z_2) \d_{i,N} \Ebb\left[\d_{j,N}^* \Q_N(z_1) \B_N\B_N^*\Q_N(z_2)\d_{j,N} \chi_N \right] 
		\\
		&
		+
		\d_{i,N}^* \T_N(z_1) \B_N\B_N^* \T_N(z_2) \d_{i,N} \Ebb\left[\d_{j,N}^* \Q_N(z_1) \Q_N(z_2)\d_{j,N} \chi_N \right] 
		\\
		&
		+
		s_N(z_1,z_2) \left(1+\sigma^2 c_N m_N(z_1)\right)\left(1+\sigma^2 c_N m_N(z_2)\right)  
		\d_{i,N}^* \T_N(z_1) \T_N(z_2) \d_{i,N} \Ebb\left[\d_{j,N}^* \Q_N(z_1) \Q_N(z_2)\d_{j,N} \chi_N \right]
	\Biggr],
\end{split}
\end{align}
where $s_N(z_1,z_2)$ defined as
\begin{align}
	s_N(z_1,z_2) =  \frac{\sigma^2}{N} \Tr 
	\frac{\I - \frac{\B_N^*\T_N(z_1)\B_N}{1+\sigma^2 c_N m_N(z_1)} - \frac{\B_N^*\T_N(z_2)\B_N}{1+\sigma^2 c_N m_N(z_2)}}
	{\left(1+\sigma^2 c_N m_N(z_1)\right)\left(1+\sigma^2 c_N m_N(z_2)\right)}.
	\label{def:s}
\end{align}
It now remains to approximate $\Ebb\left[\d_{i,N}^* \Q_N(z_1) \B_N\B_N^*\Q_N(z_2)\d_{j,N} \chi_N \right]$ and $\Ebb\left[\d_{i,N}^* \Q_N(z_1) \Q_N(z_2)\d_{j,N} \chi_N \right]$,
and we introduce for that purpose the following quantity
\begin{align}
	r_N(z_1,z_2) = \frac{\sigma^2}{N} \Tr \frac{\T_N(z_1) \B_N\B_N^* \T_N(z_2)\B_N\B_N^*}{\left((1+\sigma^2 c_N m_N(z_1)\right)^2\left((1+\sigma^2 c_N m_N(z_2)\right)^2}.
	\label{def:r}
\end{align}
Recall moreover the definitions of $u_{N}(z_1,z_2)$, $v_N(z_1,z_2)$ and $\tilde{v}_N(z_1,z_2)$ given by \eqref{def:u} and \eqref{def:v_vtilde}.
\begin{proposition}
	\label{proposition:bil_form}
	For all $z_1,z_2 \in \partial\Rcal$, we have $s_N(z_1,z_2) + r_N(z_1,z_2) = z_1 z_2 \tilde{v}_N(z_1,z_2)$,
	\begin{align}
			&\Ebb\left[\d_{1,N}^* \Q_N(z_1) \Q_N(z_2) \d_{2,N} \chi_N\right] = 
			\notag\\
			&
			\frac{1-u_N(z_1,z_2)}{\Delta_N(z_1,z_2)}  \d_{1,N}^* \T_N(z_1) \T_N(z_2) \d_{2,N} 
			+ \frac{v_N(z_1,z_2)}{\Delta_N(z_1,z_2)} \frac{\d_{1,N}^* \T_N(z_1) \B_N\B_N^*\T_N(z_2) \d_{2,N}}{\left(1+\sigma^2 c_N m_N(z_1)\right)\left(1+\sigma^2 c_N m_N(z_2)\right)}
			+ \frac{\epsilon_N(z_1,z_2)}{N},
			\label{eq:approx_doubleQ1}
	\end{align}
	and
	\begin{align}
			&\Ebb\left[\d_{1,N}^* \Q_N(z_1) \B_N\B_N^* \Q_N(z_2) \d_{2,N} \chi_N\right] = 
			\notag\\
			&\frac{\left(1+\sigma^2 c_N m_N(z_1)\right)\left(1+\sigma^2 c_N m_N(z_1)\right)\left(u_N(z_1,z_2)s_N(z_1,z_2)+r_N(z_1,z_2)\right)}{\Delta_N(z_1,z_2)}
			 \d_{1,N}^* \T_N(z_1)\T_N(z_2)\d_{2,N}
			 \notag\\
			&+\frac{1-s_N(z_1,z_2)v_N(z_1,z_2)-u_N(z_1,z_2)}{\Delta_N(z_1,z_2)}
			\d_{1,N}^* \T_N(z_1) \B_N\B_N^*\T_N(z_2) \d_{2,N}
			+ \frac{\epsilon_N(z_1,z_2)}{N}.
			\label{eq:approx_doubleQ2}
	\end{align}
\end{proposition}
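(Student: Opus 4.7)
The strategy is to first establish the algebraic identity $s_N + r_N = z_1 z_2 \tilde{v}_N$, then set up (via Gaussian integration by parts) a coupled linear system for the two target expectations, and finally invert this system using the determinant $\Delta_N(z_1,z_2)$ from Lemma \ref{lemma:properties_determinant}.

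\textbf{Algebraic identity.} From the definition of $\T_N$, one gets directly $\T_N(z)\B_N\B_N^* = (1+\sigma^2 c_N m_N(z))\I + w_N(z)\T_N(z)$, and a dual identity holds for $\tilde{\T}_N$ involving $\B_N^*\B_N$, namely $\B_N^*\B_N \tilde{\T}_N(z) = (1+\sigma^2\tilde{m}_N(z))\I + z(1+\sigma^2 c_N m_N(z))(1+\sigma^2\tilde{m}_N(z))\tilde{\T}_N(z)$. Substituting these into the trace defining $r_N$ to eliminate the factors $\B_N\B_N^*$ yields, after regrouping, that $r_N + s_N$ collapses to $z_1 z_2 \cdot \frac{\sigma^2}{N}\Tr \tilde{\T}_N(z_1)\tilde{\T}_N(z_2) = z_1 z_2 \tilde{v}_N$. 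This step is purely algebraic and uses no probability.

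\textbf{Linear system via Gaussian IBP.} Let $A_N(z_1,z_2) = \Ebb[\d_{1,N}^*\Q_N(z_1)\Q_N(z_2)\d_{2,N}\chi_N]$ and $B_N(z_1,z_2) = \Ebb[\d_{1,N}^*\Q_N(z_1)\B_N\B_N^*\Q_N(z_2)\d_{2,N}\chi_N]$. Starting from the resolvent identity $\Q_N(z_1)\W_N\W_N^* = \I + z_1\Q_N(z_1) - \Q_N(z_1)\B_N\B_N^* - \Q_N(z_1)\B_N\W_N^* - \Q_N(z_1)\W_N\B_N^*$, I reexpress $A_N$ and $B_N$ in terms of expectations containing one entry of $\W_N$, and then apply the complex Gaussian IBP formula $\Ebb[W_{ij} F(\W_N)] = \frac{\sigma^2}{N}\Ebb[\partial F/\partial W_{ij}^*]$. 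Each normalized random trace $\frac{1}{N}\Tr\M\Q_N(z)$ that pops out is replaced by its deterministic equivalent $\frac{1}{N}\Tr\M\T_N(z)$, incurring an $O(1/N)$ error controlled by the Poincar\'e--Nash inequality. Regrouping the resulting terms produces a coupled $2\times 2$ linear system for $A_N$ and $B_N$, whose coefficients involve only $u_N, v_N, \tilde{v}_N, s_N, r_N$ and whose right-hand side consists of the two deterministic bilinear forms $\d_{1,N}^*\T_N(z_1)\T_N(z_2)\d_{2,N}$ and $\d_{1,N}^*\T_N(z_1)\B_N\B_N^*\T_N(z_2)\d_{2,N}$. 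The regularization $\chi_N$ keeps all moments finite and, being equal to $1$ on an event of overwhelming probability, contributes only negligible boundary terms when differentiated.

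\textbf{Inversion and error control.} Using the identity $s_N + r_N = z_1 z_2 \tilde{v}_N$ to simplify the coefficient matrix, its determinant equals exactly $(1-u_N)^2 - z_1 z_2 v_N \tilde{v}_N = \Delta_N(z_1,z_2)$, which is uniformly bounded away from $0$ on $\partial\Rcal\times\partial\Rcal$ by Lemma \ref{lemma:properties_determinant}. Cramer's rule then delivers the explicit formulas \eqref{eq:approx_doubleQ1}--\eqref{eq:approx_doubleQ2}. Uniform $O(1/N)$ control on the remainder follows from the Nash--Poincar\'e variance bounds on quadratic and bilinear forms in $\Q_N$ combined with the deterministic bounds \eqref{eq:bound_norm_T_Ttilde}--\eqref{eq:bound_dist_w_lambda}.

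\textbf{Main obstacle.} The principal difficulty is the combinatorics of the IBP: differentiating the product $\Q_N(z_1)\cdots\Q_N(z_2)$ with respect to $W_{ij}$ generates many cross terms in which $A_N$ and $B_N$ become intertwined. Disentangling these to exhibit a clean $2\times 2$ structure --- and in particular absorbing the many accessory traces through the single relation $s_N + r_N = z_1 z_2 \tilde{v}_N$ --- is the delicate bookkeeping step. A secondary difficulty is ensuring uniformity in $(z_1,z_2)\in\partial\Rcal\times\partial\Rcal$ and, crucially, independence from $K$, since bilinear forms in $\B_N$ can have norm growing with the (possibly unbounded) rank; this requires slightly sharper norm estimates than in the fixed-$K$ analysis of \cite{hachem2012subspace}.
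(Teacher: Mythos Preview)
Your proposal is correct and follows essentially the same route as the paper: Gaussian integration by parts together with the Poincar\'e--Nash inequality to produce a coupled $2\times 2$ linear system, the algebraic identity $s_N+r_N=z_1 z_2\tilde v_N$ obtained by relating $\T_N$ and $\tilde\T_N$ (the paper uses the equivalent form $\frac{\B_N^*\T_N(z)\B_N}{1+\sigma^2 c_N m_N(z)}=\I+z(1+\sigma^2 c_N m_N(z))\tilde\T_N(z)$), and inversion via $\Delta_N$ bounded away from zero by Lemma~\ref{lemma:properties_determinant}.

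The only organisational difference is that the paper works first with $\Ebb[\Q_N(z_1)\M_N\Q_N(z_2)\chi_N]$ for an arbitrary bounded deterministic $\M_N$, takes traces to obtain a $2\times 2$ system for the two \emph{trace} functionals (with $O(N^{-2})$ errors), solves, and then specialises $\M_N=\d_{1,N}\d_{2,N}^*$ at the end; you instead set up the system directly for the bilinear forms $A_N,B_N$. The two are equivalent (specialising a trace with a rank-one $\M_N$ recovers the bilinear form up to the factor $\sigma^2/N$, which converts the $O(N^{-2})$ trace error into your claimed $O(N^{-1})$), and the paper's route has the mild advantage that the ``coefficient'' traces and the ``unknown'' traces are treated symmetrically in a single computation, so the decorrelation step is slightly cleaner.
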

Note that the inverse of $\Delta_N(z_1,z_2)$ is well defined thanks to \eqref{eq:bound_det} in lemma \ref{lemma:properties_determinant}.
The proof of proposition \ref{proposition:bil_form} is given in appendix \ref{appendix:bil_form}.
Using the expression of $\Ebb\left[\d_{1,N}^*\Q_N(z_1)\B_N\B_N^*\Q_N(z_2)\d_{2,N}\chi_N\right]$ and $\Ebb\left[\d_{1,N}^*\Q_N(z_1)\Q_N(z_2)\d_{2,N}\chi_N\right]$ in 
proposition \ref{proposition:bil_form} and the fact that $s_N(z_1,z_2) + r_N(z_1,z_2) = z_1 z_2 \tilde{v}_N(z_1,z_2)$, we further obtain
\begin{align}
\begin{split}	
	&\mu_N(z_1,z_2)=
	\\
	&
	\sum_{\substack{i,j=1 \\ i \neq j}}^2
	\Biggl(
		\frac{z_1 z_2 \left(1+\sigma^2 c_N m_N(z_1)\right)\left(1+\sigma^2 c_N m_N(z_2)\right) \tilde{v}_N(z_1,z_2)}{\Delta_N(z_1,z_2)}
		\left(\d_{i,N}^* \T_N(z_1)\T_N(z_2)\d_{j,N}\right)^2
		\\
		&
		+\frac{v_N(z_1,z_2) \left(\d_{i,N}^* \T_N(z_1)\B_N\B_N^*\T_N(z_2)\d_{j,N}\right)^2}
		{\left(1+\sigma^2 c_N m_N(z_1)\right)\left(1+\sigma^2 c_N m_N(z_2)\right)\Delta_N(z_1,z_2)}
		\\
		&
		+\frac{2\left(1-u_N(z_1,z_2)\right)}{\Delta_N(z_1,z_2)}\d_{i,N}^* \T_N(z_1)\T_N(z_2)\d_{j,N}\d_{i,N}^* \T_N(z_1)\B_N\B_N^*\T_N(z_2)\d_{j,N}
	\Biggr) +\frac{\epsilon_N(z_1,z_2)}{N},
\end{split}
\notag
\end{align}
and
\begin{equation}
\begin{split}	
	&\tilde{\mu}_N(z_1,z_2)=
	\\
	&
	\sum_{\substack{i,j=1 \\ i \neq j}}^2
	\Biggl(
		\frac{z_1 z_2 \left(1+\sigma^2 c_N m_N(z_1)\right)\left(1+\sigma^2 c_N m_N(z_2)\right)\tilde{v}_N(z_1,z_2)}{\Delta_N(z_1,z_2)}
		\d_{i,N}^* \T_N(z_1)\T_N(z_2)\d_{i,N}\d_{j,N}^* \T_N(z_1)\T_N(z_2)\d_{j,N}
		\\
		&
		+\frac{v_N(z_1,z_2) \d_{i,N}^* \T_N(z_1)\B_N\B_N^*\T_N(z_2)\d_{i,N} \d_{j,N}^* \T_N(z_1)\B_N\B_N^*\T_N(z_2)\d_{j,N}}
		{\left(1+\sigma^2 c_N m_N(z_1)\right)\left(1+\sigma^2 c_N m_N(z_2)\right)\Delta_N(z_1,z_2)}
		\\
		&
		+\frac{2\left(1-u_N(z_1,z_2)\right)}{\Delta_N(z_1,z_2)}\d_{i,N}^* \T_N(z_1)\T_N(z_2)\d_{i,N}\d_{j,N}^* \T_N(z_1)\B_N\B_N^*\T_N(z_2)\d_{j,N}
	\Biggr) + \frac{\epsilon_N(z_1,z_2)}{N},	
\end{split}
\notag
\end{equation}
Going back to \eqref{eq:derivative_psi} and introducing again the deterministic sequence $(\xi_N)$ (see remark \ref{remark:coeff_xi}), we finally obtain
\begin{align}
	&\psi'_N(u) =  
	\notag\\
	&
	\left(
		\irm \sqrt{N} \Re\left(\xi_N (\d_{1,N}^*\d_{2,N} - \eta_N)\right) 
		- u \sum_{k=1}^M \sum_{\ell=1}^M \vartheta_N(k,\ell) 
		\left(
			\Re\left(\xi_N^2 \eta_{k,N}^{(1,2)} \eta_{\ell,N}^{(1,2)}\right) 
			+ 
			\frac{|\xi_N|^2}{2}\left(\eta_{k,N}^{(1,1)}\eta_{\ell,N}^{(2,2)} + \eta_{k,N}^{(2,2)}\eta_{\ell,N}^{(1,1)}\right)
		\right)
	\right) \psi_N(u) 
	\notag\\
	&+ \frac{\epsilon_N(u)}{\sqrt{N}},
	\label{eq:diff_eq}
\end{align}
where we recall that $\eta_{k,N}^{(i,j)} = \d_{i,N}^* \u_{k,N}\u_{k,N}^* \d_{j,N}$ and where $\vartheta_N(k,\ell)$ is given by
\begin{align}
	&\vartheta_N(k,\ell) =
	\notag\\
	& 
	\frac{\sigma^2}{2}\left(\frac{1}{2 \pi \irm}\right)^2 \oint_{\partial \Rcal}\oint_{\partial \Rcal}
	\frac{\theta_{N}^{(k,\ell)}(z_1,z_2) w'_N(z_1)w'_N(z_2)}
	{
		\left(\lambda_{k,N} - w_N(z_1)\right)\left(\lambda_{\ell,N} - w_N(z_1)\right)\left(\lambda_{k,N} - w_N(z_2)\right)\left(\lambda_{\ell,N} - w_N(z_2)\right)
		\Delta_N(z_1,z_2)
	}
	 \drm z_1 \drm z_2,
\end{align}
with
\begin{align}
	&\theta_{N}^{(k,\ell)}(z_1,z_2) = 
	\notag\\
	&
	z_1 z_2  \left(1+\sigma^2 c_N m_N(z_1)\right)\left(1+\sigma^2 c_N m_N(z_2)\right) \tilde{v}_N(z_1,z_2)
	\notag\\
	&
	+\frac{\lambda_{k,N}\lambda_{\ell,N} v_N(z_1,z_2)}{\left(1+\sigma^2 c_N m_N(z_1)\right)\left(1+\sigma^2 c_N m_N(z_2)\right)}
	+(\lambda_{k,N}+\lambda_{\ell,N}) \left(1-u_N(z_1,z_2)\right).
\end{align}
We can check easily that $\vartheta_{N}(k,\ell) \in \Rbb$. 
By letting $\xibs_N=[\Re(\xi_N),\Im(\xi_N)]^T$, we obtain
\begin{align}
	\psi'_N(u) =  
	\left(
		\irm \sqrt{N} \Re\left(\xi_N (\d_{1,N}^*\d_{2,N}-\eta_N)\right) 
		- u \xibs_N^T \Gammabs_N \xibs_N
	\right) \psi_N(u) 
	+ \frac{\epsilon_N(u)}{\sqrt{N}},
	\label{eq:diff_eq2}
\end{align}
where 
\begin{align}
	\Gammabs_N = \sum_{k=1}^M \sum_{\ell=1}^M \vartheta_N(k,\ell) \Gammabs_N(k,\ell).
	\notag
\end{align}	
with the $2 \times 2 $ matrix $\Gammabs_N(k,\ell)$ given by
\begin{align}
	&\Gammabs_N(k,\ell)=
	\notag\\
	&	
	\begin{bmatrix}
		\Re\left(\eta_{k,N}^{(1,2)} \eta_{\ell,N}^{(1,2)}\right) 
		+ \frac{1}{2}\left(\eta_{k,N}^{(1,1)}\eta_{\ell,N}^{(2,2)} + \eta_{k,N}^{(2,2)}\eta_{\ell,N}^{(1,1)}\right)
		&
		-\Im\left(\eta_{k,N}^{(1,2)} \eta_{\ell,N}^{(1,2)}\right).
		\\
		-\Im\left(\eta_{k,N}^{(1,2)} \eta_{\ell,N}^{(1,2)}\right).
		&
		-\Re\left(\eta_{k,N}^{(1,2)} \eta_{\ell,N}^{(1,2)}\right) + \frac{1}{2}\left(\eta_{k,N}^{(1,1)}\eta_{\ell,N}^{(2,2)} 
		+ \eta_{k,N}^{(2,2)}\eta_{\ell,N}^{(1,1)}\right)
	\end{bmatrix}.
	\label{eq:def_gamma_kl}
\end{align}
From the trivial inequality $\left|\Re(z_1 z_2)\right| \leq \frac{1}{2}\left(|z_1|^2 + |z_2|^2\right)$ for $z_1,z_2\in \Cbb$, we have
\begin{align}
	\left|\Re\left(\xi_N^2\eta_{k,N}^{(1,2)} \eta_{\ell,N}^{(1,2)}\right)\right| 
	\leq
	\frac{|\xi_N|^2}{2}\left(\eta_{k,N}^{(1,1)}\eta_{\ell,N}^{(2,2)} + \eta_{k,N}^{(2,2)}\eta_{\ell,N}^{(1,1)}\right),
	\notag
\end{align}
which implies
\begin{align}
	\xibs_N^T \Gammabs_N(k,\ell) \xibs_N
	=
	\Re\left(\xi_N^2 \eta_{k,N}^{(1,2)} \eta_{\ell,N}^{(1,2)}\right) + \frac{|\xi_N|^2}{2}\left(\eta_{k,N}^{(1,1)}\eta_{\ell,N}^{(2,2)} + \eta_{k,N}^{(2,2)}\eta_{\ell,N}^{(1,1)}\right)
	\geq 0,
\end{align}
in other words that $\Gammabs_N(k,\ell)$ is non-negative definite.

		\subsection{Asymptotics of $\vartheta_{N}(k,\ell)$}
		\label{section:asymptotic_variance}

The purpose of this section is to prove \eqref{eq:bound_var_general}, \eqref{eq:lb_var_nonspike} and \eqref{eq:var_spike} for the coefficients 
$\vartheta_N(k,\ell)$.
	
Using the bounds \eqref{eq:bound_norm_T_Ttilde}, \eqref{eq:bound_m_half} and the fact that $\left|m'_N(z)\right| \leq \drm(z,\supp(\mu_N))^{-2}$, 
it is easily shown that $\theta^{(k,\ell)}_N(z_1,z_2)$, defined in \eqref{eq:def_theta}, satisfies
\begin{align}
	\limsup_{N \to \infty} \max_{k,\ell} \sup_{z_1,z_2 \in \partial \Rcal} \left|\theta^{(k,\ell)}_N(z_1,z_2) w'_N(z_1) w'_N(z_2)\right| < \infty.
	\notag
\end{align}
Moreover, from \eqref{eq:bound_dist_w_lambda} and \eqref{eq:bound_det}, we also have
\begin{align}
	\limsup_{N \to \infty} \max_{k,\ell} \sup_{z_1,z_2 \in \partial \Rcal}
	\left|
		\frac{\Delta_N(z_1,z_2)^{-1}}
		{\left(\lambda_{k,N} - w_N(z_1)\right)\left(\lambda_{\ell,N} - w_N(z_1)\right)\left(\lambda_{k,N} - w_N(z_2)\right)\left(\lambda_{\ell,N} - w_N(z_2)\right)}
	\right|
	< \infty.
	\notag
\end{align}
Therefore, these bounds readily imply
\begin{align}
	\limsup_{N \to \infty} \max_{k,\ell} \vartheta_N(k,\ell) < \infty.
	\label{eq:upper_bound_vartheta}
\end{align}
We now express the integrand of $\vartheta_{N}(k,\ell)$ as a series of functions which are separable and symetric in $z_1,z_2$ 
(i.e. a function $g(z_1,z_2)$ is symetric separable if it can be written as $g(z_1,z_2)=\tilde{g}(z_1)\tilde{g}(z_2)$).
Notice that, except for $\Delta_N(z_1,z_2)$, all the functions appearing in the integrand in the definition of $\vartheta_N(k,l)$ are trivially sums of separable functions from their very definition.
From lemma \ref{lemma:properties_determinant}, we have
\begin{align}
	\left|\frac{\Delta_N(z_1,z_2)}{\left(1-u_N(z_1,z_2)\right)^2} - 1\right| < 1,
	\notag
\end{align}
and by writing
\begin{align}
	\Delta_N(z_1,z_2)^{-1} 
	= 
	\frac{1}{\left(1 - u_N(z_1,z_2)\right)^2 \left(1 - \frac{z_1 z_2 v_N(z_1,z_2) \tilde{v}_N(z_1,z_2)}{\left(1 - u_N(z_1,z_2)\right)^2}\right)},
	\notag
\end{align}
we obtain
\begin{align}	
	\Delta_N(z_1,z_2)^{-1} =
	\sum_{k \in \Nbb} \frac{\left(z_1 z_2 v_N(z_1,z_2) \tilde{v}_N(z_1,z_2)\right)^{k}}{\left(1 - u_N(z_1,z_2)\right)^{2k+2}}.
	\label{eq:det_serie_pre}
\end{align}
Using this time the fact that $\left|u_N(z_1,z_2)\right| < 1$ (see lemma \ref{lemma:properties_determinant}), we can further write
\begin{align}	
	\Delta_N(z_1,z_2)^{-1} =
	\sum_{k \in \Nbb} \ \sum_{l_1,\ldots,l_{2k+2} \in \Nbb} \left(z_1 z_2 v_N(z_1,z_2) \tilde{v}_N(z_1,z_2)\right)^{k} u_N(z_1,z_2)^{l_1+\ldots+l_{2k+2}}.
	\label{eq:det_serie}
\end{align}
Since the functions $u_N$, $v_N$ and $\tilde{v}_N$ are continuous on the compact set $\partial \Rcal \times \partial \Rcal$, the bound previously derived shows that 
the series of functions defining \eqref{eq:det_serie} is uniformly convergent on $\partial \Rcal \times \partial \Rcal$.
Consequently, we can rewrite the coefficients $\vartheta_{N}(k,\ell)$, defined in \eqref{eq:def_vartheta}, as
\begin{align}
	&\vartheta_{N}(k,\ell) = 
	\notag\\
	& 
	\sum_{n \in \Nbb} \ \sum_{l_1,\ldots,l_{2n+2} \in \Nbb}
	\left(\frac{1}{2 \pi \irm}\right)^2 \oint_{\partial \Rcal}\oint_{\partial \Rcal}
	\frac{\sigma^2\theta_N^{(k,\ell)}(z_1,z_2) \left(z_1 z_2 v_N(z_1,z_2) \tilde{v}_N(z_1,z_2)\right)^{n} u_N(z_1,z_2)^{l_1+\ldots+l_{2n+2}} w'_N(z_1) w'_N(z_2)}
	{
		2\left(\lambda_{k,N} - w_N(z_1)\right)\left(\lambda_{\ell,N} - w_N(z_1)\right)\left(\lambda_{k,N} - w_N(z_2)\right)\left(\lambda_{\ell,N} - w_N(z_2)\right)
	}
	 \drm z_1 \drm z_2,
	 \label{eq:vartheta_serie}
\end{align}
where $\theta_N^{(k,l)}(z_1,z_2)$ is defined in \eqref{eq:def_theta}.
In other words, we have written $\vartheta_{N}(k,\ell)$ as a convergent series of integrals of symetric separable functions. 
Consequently, $\vartheta_{N}(k,\ell)$ can be written as a series of squared single integrals, i.e. there exists a sequence of continuous functions $\left(g_N^{(p)}\right)_{p \in \Nbb}$ 
defined on $\partial \Rcal$ such that
\begin{align}
	\vartheta_{N}(k,\ell) = \sum_{p \in \Nbb} \left(\frac{1}{2 \pi \irm}\oint_{\partial \Rcal} g_{N}^{(p)}(z)\drm z\right)^2
	\notag
\end{align}
implying that $\vartheta_{N}(k,\ell) \geq 0$. This proves \eqref{eq:bound_var_general}.

To prove \eqref{eq:lb_var_nonspike}, we rely on the series expansion \eqref{eq:vartheta_serie} introduced above.
Using only one of the three terms in the definition of $\theta_N^{(k,\ell)}(z_1,z_2)$ (see \eqref{eq:def_theta}), and by only considering $n=0$ in the sum of the series in \eqref{eq:vartheta_serie},
we obtain the following lower-bound
\begin{align}
	&\vartheta_{N}(k,\ell) 	
	\geq
	\notag\\
	& 
	\left(\frac{1}{2 \pi \irm}\right)^2 \oint_{\partial \Rcal}\oint_{\partial \Rcal}
	\frac{\sigma^2\lambda_{k,N}\lambda_{\ell,N} v_N(z_1,z_2)\left(1+\sigma^2 c_N m_N(z_1)\right)^{-1}\left(1+\sigma^2 c_N m_N(z_2)\right)^{-1} w'_N(z_1) w'_N(z_2)}
	{
		2\left(\lambda_{k,N} - w_N(z_1)\right)\left(\lambda_{\ell,N} - w_N(z_1)\right)\left(\lambda_{k,N} - w_N(z_2)\right)\left(\lambda_{\ell,N} - w_N(z_2)\right)
	}
	 \drm z_1 \drm z_2,
\end{align}
From the definition of $v_N(z_1,z_2)$ (see \eqref{def:v_vtilde}), we have
\begin{align}
	\frac{v_N(z_1,z_2)}{\left(1+\sigma^2 c_N m_N(z_1)\right)\left(1+\sigma^2 c_N m_N(z_2)\right)} =
	\frac{\sigma^2}{N} \sum_{m=1}^M \frac{1}{\left(\lambda_{m,N}-w_N(z_1)\right)\left(\lambda_{m,N}-w_N(z_2)\right)},
	\notag
\end{align}
and a usual change of variable gives
\begin{align}
	\vartheta_{N}(k,\ell) 
	\geq 
	\lambda_{k,N}\lambda_{\ell,N} \frac{\sigma^4}{2N} \sum_{m=1}^{M}
	\left(
		\frac{1}{2 \pi \irm} \oint_{w_N(\partial\Rcal)}
		\frac{\drm w}{\left(\lambda_{m,N}-w\right)\left(\lambda_{k,N} - w\right)\left(\lambda_{\ell,N} - w\right)}
	\right)^2.
	\notag
\end{align}
Residue's theorem thus implies that for $k,\ell \leq K$,
\begin{align}
	\vartheta_{N}(k,\ell) \geq \frac{M-K}{N} \frac{\sigma^4}{2\lambda_{k,N}\lambda_{\ell,N}}.
	\label{eq:lb_var_nonspike1}
\end{align}
In the same way, for $k,\ell \geq K+1$, we obtain
\begin{align}
	\vartheta_{N}(k,\ell) 
	&\geq 
	\left(\frac{1}{2 \pi \irm}\right)^2 \oint_{\partial \Rcal}\oint_{\partial \Rcal}
	\frac{\sigma^2 z_1 z_2 \left(1+\sigma^2 c_N m_N(z_1)\right)\left(1+\sigma^2 c_N m_N(z_2)\right) \tilde{v}_N(z_1,z_2) w'_N(z_1) w'_N(z_2)}
	{
		2\left(\lambda_{k,N} - w_N(z_1)\right)\left(\lambda_{\ell,N} - w_N(z_1)\right)\left(\lambda_{k,N} - w_N(z_2)\right)\left(\lambda_{\ell,N} - w_N(z_2)\right)
	}
	 \drm z_1 \drm z_2
	\notag\\
	&=
	 \frac{\sigma^4}{2N} \sum_{m = 1}^K \frac{1}{\lambda_{m,N}^2}.
	 \label{eq:lb_var_nonspike2}
\end{align}
For $k \leq K$ and $\ell \geq K+1$, we have
\begin{align}
	\vartheta_{N}(k,\ell) 
	\geq 
	\left(\frac{1}{2 \pi \irm}\right)^2 \oint_{\partial \Rcal}\oint_{\partial \Rcal}
	\frac{\sigma^2\lambda_{k,N}\left(1-u_N(z_1,z_2)\right) \Delta_N(z_1,z_2)^{-1} w'_N(z_1) w'_N(z_2)}
	{
		2 w_N(z_1) w_N(z_2) \left(\lambda_{k,N} - w_N(z_1)\right)\left(\lambda_{k,N} - w_N(z_2)\right)
	}
	 \drm z_1 \drm z_2.
	 \notag
\end{align}
Using \eqref{eq:det_serie_pre} and performing a serie expansion of $\left(1-u_N(z_1,z_2)\right)^{-1}$, we obtain as well
\begin{align}
	\vartheta_{N}(k,\ell) 
	\geq 
	\left(\frac{1}{2 \pi \irm}\right)^2 \oint_{\partial \Rcal}\oint_{\partial \Rcal}
	\frac{\sigma^2\lambda_{k,N} w'_N(z_1) w'_N(z_2)}
	{
		2w_N(z_1) w_N(z_2) \left(\lambda_{k,N} - w_N(z_1)\right)\left(\lambda_{k,N} - w_N(z_2)\right)
	}
	 \drm z_1 \drm z_2
	 =
	 \frac{\sigma^2}{2\lambda_{k,N}}.
	 \label{eq:lb_var_nonspike3}
\end{align}
\eqref{eq:lb_var_nonspike} will follow from \eqref{eq:lb_var_nonspike1}, \eqref{eq:lb_var_nonspike2} and \eqref{eq:lb_var_nonspike3} and lemma \ref{lemma:conseq_sep} 
in section \ref{section:spiked_model}.

	\subsection{Asymptotics of $\vartheta_{N}(k,l)$}
		\label{section:asymptotic_variance_K_fixed}
		
We now show \eqref{eq:var_spike}, by assuming that $K$ is independent of $N$.
In this case, using the results of section \ref{section:spiked_model}, it is not difficult to show that
\begin{align}
	\max_{k,l} \left|\vartheta_{N}(k,\ell) - \check{\vartheta}_N(k,\ell)\right| \xrightarrow[N \to \infty]{} 0,
	\label{eq:conv_var_spike}
\end{align}
where
\begin{align}
	\check{\vartheta}_N(k,\ell) =
	\left(\frac{1}{2 \pi \irm}\right)^2 \oint_{\partial \Rcal}\oint_{\partial \Rcal} 
	\frac
	{
		\frac{\sigma^2}{2}
		\left(\sigma^2 + \frac{\sigma^2 c_N \lambda_{k,N}\lambda_{\ell,N}}{w(z_1)w(z_2)} + (\lambda_{k,N}+\lambda_{\ell,N})\right)
		\Delta(z_1,z_2)^{-1}
		w'(z_1)w'(z_2) 
	}
	{
		\left(\lambda_{k,N} - w(z_1)\right) \left(\lambda_{\ell,N} - w(z_1)\right) 
		\left(\lambda_{k,N}-  w(z_2)\right) \left(\lambda_{\ell,N}- w(z_2)\right)
	}
	\drm z_1 \drm z_2
\end{align}
with $m(z), w(z)$ and $\Delta(z_1,z_2)$ defined in \eqref{eq:def_m_check_N}, \eqref{eq:def_w_check_N} and \eqref{def:determinant_MP}.
From \eqref{eq:det_serie}, it is clear that the following serie expansion also holds,
\begin{align}
	\Delta(z_1,z_2)^{-1} 
	= \left(1 - \frac{\sigma^4 c_N}{w(z_1) w(z_2)}\right)^{-1}
	= \sum_{n \in \Nbb} \left(\frac{\sigma^4 c_N}{w(z_1) w(z_2)}\right)^n,
	\notag
\end{align}
uniformly on $\partial\Rcal \times \partial\Rcal$, and thus
\begin{align}
	&\check{\vartheta}_N(k,\ell) =
	\notag\\
	&
	\sum_{n \in \Nbb}
	\left(\frac{1}{2 \pi \irm}\right)^2 \oint_{\partial \Rcal}\oint_{\partial \Rcal} 
	\frac
	{
		\frac{\sigma^2}{2}
		\left(\sigma^2 + \frac{\sigma^2 c_N \lambda_{k,N}\lambda_{\ell,N}}{w(z_1) w(z_2)} + (\lambda_{k,N}+\lambda_{\ell,N})\right)
		\left(\sigma^4 c_N\right)^n w(z_1)^{-n} w(z_2)^{-n} w'(z_1) w'(z_2)
	}
	{
		\left(\lambda_{k,N} - w(z_1)\right) \left(\lambda_{\ell,N} - w(z_1)\right) 
		\left(\lambda_{k,N}- w(z_2)\right) \left(\lambda_{\ell,N}- w(z_2)\right)
	}
	\drm z_1 \drm z_2.
	 \notag
\end{align}
By expressing the previous expression as square of single integrals, we obtain
\begin{align}
	&\check{\vartheta}_N(k,\ell) =
	\notag\\
	&\qquad
	\frac{\sigma^2}{2}\left(\sigma^2 + (\lambda_{k,N}+\lambda_{\ell,N})\right)
	\sum_{n \in \Nbb}
	\left(\sigma^4 c_N\right)^n
	\left[
		\frac{1}{2 \pi \irm} \oint_{w(\partial \Rcal)}
		\frac
		{
			1
		}
		{
			w^{n}\left(\lambda_{k,N} - w\right) \left(\lambda_{\ell,N} - w\right) 
		}
		\drm w
	\right]^2
	\notag\\
	&\qquad\qquad
	+
	\frac{\sigma^4 c_N}{2} \lambda_{k,N}\lambda_{\ell,N}
	\sum_{n \in \Nbb}
	\left(\sigma^4 c_N\right)^n
	\left[
		\frac{1}{2 \pi \irm} \oint_{w(\partial \Rcal)}
		\frac
		{
			 1
		}
		{
			w^{n+1}\left(\lambda_{k,N} - w\right) \left(\lambda_{\ell,N} - w\right) 
		}
		\drm w
	\right]^2
	.
	 \notag
\end{align}
Using classical residue computation, we eventually end up with \eqref{eq:var_spike},
\begin{align}
	\check{\vartheta}_N(k,\ell) =
		\frac
		{
			\sigma^4 c_N \left(\lambda_{k,N} \lambda_{\ell,N} 
			+ (\lambda_{k,N}+\lambda_{\ell,N})\sigma^2 
			+ \sigma^4\right)\left(\lambda_{k,N} \lambda_{\ell,N} 
			+ \sigma^4 c_N\right)
		}
		{2\left(\lambda_{k,N}^2 - \sigma^4 c_N\right)\left(\lambda_{\ell,N}^2 - \sigma^4 c_N\right)\left(\lambda_{k,N}\lambda_{\ell,N} - \sigma^4 c_N\right)}
		\left(1 - \mathbb{1}_{[K+1,M]}(k) \mathbb{1}_{[K+1,M]}(\ell)\right).
	 \label{eq:var_spike2}
\end{align}
We can easily show that 
\begin{align}
	\liminf_{N \to \infty} \min_{(k,\ell) \not\in \left\{K+1,\ldots,M\right\}^2}  \check{\vartheta}_N(k,\ell) > 0.
	\notag
\end{align}
and we also have the boundedness
\begin{align}
	\limsup_{N \to \infty} \max_{k,\ell} \check{\vartheta}_N(k,\ell) < \infty,
	\notag
\end{align}
which is ensured in the general case ($K$ not necessarily fixed) by \eqref{eq:upper_bound_vartheta}, but which also comes from lemma \ref{lemma:conseq_sep2} in section 
\ref{section:spiked_model}.

		\subsection{Solution to the differential equation}
		\label{section:equ_diff_sol}
	
Recall that the differential equation \eqref{eq:diff_eq2}
\begin{align}
	\psi'_N(u) =  \left(\irm \sqrt{N} \Re\left(\xi_N \left(\d_{1,N}^*\d_{2,N} - \eta_N\right)\right) - u \xibs_N^T \Gammabs_N \xibs_N) \right) \psi_N(u) + \frac{\epsilon_N(u)}{\sqrt{N}},
	\label{eq:diff_eq3}
\end{align}	
where $(\xi_N)$ is any deterministic sequence such that $\limsup_N |\xi_N|<\infty$ and $\xibs_N=[\Re(\xi_N),\Im(\xi_N)]^T$ and $u \mapsto \epsilon_N(u)$ 
a generic continuously differentiable function such that
\begin{align}
	\limsup_{N\to \infty}\left\{\left|\epsilon_N(u)\right|, \left|\epsilon'_N(u)\right|\right\} \leq \Prm(u),
	\notag
\end{align}
with $\Prm$ is a polynomial independent of $N$ with positive coefficient. 
By differentiating \eqref{eq:diff_eq3} with respect to $u$ and using \eqref{eq:bias_and_variance}, one can check that
\begin{align}
	\Ebb\left|\sqrt{N} \Re\left(\xi_N \left(\hat{\gamma}_N - \left(\d_{1,N}^*\d_{2,N} - \eta_N\right)\right)\right)\right|^2 = \xibs_N^T \Gammabs_N \xibs_N + \Ocal\left(\frac{1}{\sqrt{N}}\right),
	\notag
\end{align}
where $\hat{\gamma}_N$ is defined in \eqref{def:gammahat}. This implies
\begin{align}
	\Re\left(\xi_N \left(\hat{\gamma}_N - \left(\d_{1,N}^*\d_{2,N} - \eta_N\right)\right)\right) =
	\begin{cases}
		\Ocal_{\Pbb}\left(\frac{1}{\sqrt{N}}\right) & \text{if } \liminf_{N} \xibs_N^T \Gammabs_N \xibs_N > 0
		\\
		o_{\Pbb}\left(\frac{1}{\sqrt{N}}\right) & \text{otherwise}
	\end{cases}.
	\notag
\end{align}
which shows \eqref{eq:tightness_real_part}.
By assuming that $\liminf_N \xibs_N^T \Gammabs_N \xibs_N > 0$, we can replace $\xi_N$ by $\frac{\xi_N}{\sqrt{\xibs_N^T \Gammabs_N \xibs_N}}$ in \eqref{eq:diff_eq2} 
without modifying the boundedness property of $\epsilon_N(u)$, and thus
\begin{align}
	\psi'_N(u) =  
	\left(\irm \frac{\sqrt{N} \Re\left(\xi_N \left(\d_{1,N}^*\d_{2,N} - \eta_N\right)\right)}{\sqrt{\xibs_N^T \Gammabs_N \xibs_N}} - u\right) \psi_N(u) + \frac{\epsilon_N(u)}{\sqrt{N}},
	\label{eq:diff_eq4}
\end{align}
\eqref{eq:diff_eq4} being a classical nonhomogeneous linear differential equation of the first order, we easily obtain that
\begin{align}
	\Ebb\left[\mathrm{exp}\left(\irm u \frac{\sqrt{N} \Re\left(\xi_N \left(\hat{\gamma}_N - \left(\d_{1,N}^*\d_{2,N} - \eta_N\right)\right)\right)}{\sqrt{\xibs_N^T \Gammabs_N \xibs_N}}\right)\right] =  
	\mathrm{exp}\left(-u^2/2\right) + o(1),
	\notag
\end{align}
which proves \eqref{eq:conv_distrib_eta}.

\section{Appendix}
\label{section:appendix}

	\subsection{Proof of formula \eqref{eq:charac_fun_1} and proposition \ref{proposition:bil_form}}
	
In this section, we prove formula \eqref{eq:charac_fun_1} and proposition \ref{proposition:bil_form} respectively. For that purpose, we use two tools, an integration by part formula and
a Poincar\'e inequality for Gaussian variables, which are well-known in the field of random matrix theory since the work of Pastur \cite{pastur2005simple}.
We first note that every function $f: \Cbb \mapsto \Cbb$ can be written as $f(z)=\tilde{f}\left(\Re(z),\Im(z)\right)$.
If $\tilde{f} \in \Ccal^1(\Rbb^2,\Cbb)$, we define the usual differential operators
\begin{align}
	\frac{\partial f(z)}{\partial z} = 
	\frac{1}{2}
	\left(
		\frac{\partial \tilde{f}(x,y)}{\partial x} 
		- \irm\frac{\partial \tilde{f}(x,y)}{\partial y} 
	\right)
	\quad\text{and}\quad
	\frac{\partial f(z)}{\partial \overline{z}} = 
	\frac{1}{2} 
	\left(
		\frac{\partial \tilde{f}(x,y)}{\partial x} 
		+ \irm\frac{\partial \tilde{f}(x,y)}{\partial y} 
	\right).
	\notag
\end{align}
In this context, we say that $f$ is continuously differentiable if $\tilde{f}$ is continuously differentiable.
The following lemma gives the integration by part formula and the Poincar\'e inequality.
\begin{lemma}
	\label{lemma:IPP_Poincare}
	Let $z_1 = x_1 + \irm y_1, \ldots, z_n= x_n + \irm y_n$ be $n$ i.i.d. $\Ncal_{\Cbb}(0,\rho^2)$ variables and let $f$ a continuously differentiable function defined on $\Cbb^n$ 
	with polynomially bounded partial derivatives. Then, if $\z = (z_1, \ldots, z_n)^{T}$, it holds that
	\begin{align}
		\Ebb\left[z_k f(\z)\right] = \rho^2 \Ebb\left[\frac{\partial f(\z)}{\partial \overline{z_k}}\right]
		\quad\text{and}\quad
		\Ebb\left[\overline{z_k} f(\z)\right] = \rho^2 \Ebb\left[\frac{\partial f(\z)}{\partial z_k}\right].
		\notag
	\end{align}
	Moreover,
        \begin{align}
		\Vbb\left[f(\z)\right] \leq \rho^2 \sum_{k=1}^n 
		\left(
			\Ebb \left|\frac{\partial f(\z)}{\partial z_k}\right|^{2} 
			+ \Ebb \left|\frac{\partial f(\z)}{\partial \overline{z_k}}\right|^{2}
		\right).
		\label{eq:ineq_poin}
        \end{align}
\end{lemma}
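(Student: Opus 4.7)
The proof proposal is to reduce both statements to their real Gaussian counterparts by writing $z_k = x_k + \irm y_k$, where $(x_k, y_k)$ are $2n$ independent real Gaussians, each of variance $\rho^2/2$.

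\textbf{Integration by parts.} I would first recall the one-dimensional real Gaussian IBP: if $X \sim \Ncal_{\Rbb}(0,\sigma^2)$ and $g$ is continuously differentiable with polynomially bounded derivative, then $\Ebb[X g(X)] = \sigma^2 \Ebb[g'(X)]$ (proven by direct integration by parts against the density, with boundary terms vanishing thanks to the Gaussian tail). Applying this separately to $x_k$ and to $y_k$ (with $\sigma^2 = \rho^2/2$) after conditioning on the other $2n-1$ variables gives
\begin{align}
\Ebb[x_k f(\z)] = \tfrac{\rho^2}{2}\Ebb\!\left[\tfrac{\partial f}{\partial x_k}\right], \qquad \Ebb[y_k f(\z)] = \tfrac{\rho^2}{2}\Ebb\!\left[\tfrac{\partial f}{\partial y_k}\right]. \notag
\end{align}
Writing $z_k = x_k + \irm y_k$ and $\overline{z_k} = x_k - \irm y_k$, and using the definition of the Wirtinger operators from the excerpt, I obtain
\begin{align}
\Ebb[z_k f(\z)] = \tfrac{\rho^2}{2}\Ebb\!\left[\tfrac{\partial f}{\partial x_k} + \irm \tfrac{\partial f}{\partial y_k}\right] = \rho^2 \Ebb\!\left[\tfrac{\partial f(\z)}{\partial \overline{z_k}}\right], \notag
\end{align}
and symmetrically for $\Ebb[\overline{z_k} f(\z)]$. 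A small caveat: $f$ is complex-valued, so I would treat real and imaginary parts separately before recombining, which is legitimate since the IBP identity is linear in $g$.

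\textbf{Poincar\'e inequality.} Here I would invoke the standard Gaussian Poincar\'e inequality in $\Rbb^{2n}$: if $\mathbf{u} = (x_1,y_1,\dots,x_n,y_n)^T \sim \Ncal_{\Rbb^{2n}}(0,(\rho^2/2)\I)$ and $F$ is smooth with polynomially bounded gradient, then $\Vbb[F(\mathbf{u})] \leq (\rho^2/2)\,\Ebb\|\nabla F(\mathbf{u})\|^2$. Applied to $\tilde{f}$ this gives
\begin{align}
\Vbb[f(\z)] \leq \tfrac{\rho^2}{2} \sum_{k=1}^n \Ebb\!\left[\left|\tfrac{\partial f}{\partial x_k}\right|^2 + \left|\tfrac{\partial f}{\partial y_k}\right|^2\right]. \notag
\end{align}
The final step is the elementary algebraic identity
\begin{align}
\left|\tfrac{\partial f}{\partial z_k}\right|^2 + \left|\tfrac{\partial f}{\partial \overline{z_k}}\right|^2 = \tfrac{1}{2}\left(\left|\tfrac{\partial f}{\partial x_k}\right|^2 + \left|\tfrac{\partial f}{\partial y_k}\right|^2\right), \notag
\end{align}
obtained by expanding the Wirtinger expressions and using $|a+\irm b|^2 + |a-\irm b|^2 = 2(|a|^2+|b|^2)$. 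Substituting yields exactly \eqref{eq:ineq_poin}. (For complex-valued $F$, one first applies Poincar\'e to $\Re F$ and $\Im F$ separately and uses $\Vbb[F] = \Vbb[\Re F] + \Vbb[\Im F]$.)

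\textbf{Main obstacle.} There is no genuine obstacle here; the result is classical and appears in many references in random matrix theory (e.g.\ Pastur, as cited). The only points requiring a small amount of care are: (i) making sure the polynomial-growth hypothesis on the partial derivatives suffices to justify the vanishing of boundary terms in the real IBP and to apply Fubini when conditioning; and (ii) correctly handling the fact that $f$ is complex-valued rather than real-valued by splitting into real and imaginary parts, which does not affect the identities because all operations involved are $\Cbb$-linear in $f$.
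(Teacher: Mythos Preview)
Your proof is correct and follows the standard route. The paper does not actually prove this lemma: it presents the integration-by-parts formula and the Poincar\'e inequality as well-known tools, attributing them to Pastur, and moves on immediately to their applications. Your reduction to the real Gaussian case via Wirtinger calculus is exactly how this result is customarily established, and the care you take with complex-valued $f$ (splitting into real and imaginary parts) and with the polynomial-growth hypothesis is appropriate.
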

Hereafter and in all the remainder of this appendix, $\epsilon_N(u,z_1,z_2)$ will denote a generic continuous function on $\Rbb \times \partial \Rcal \times \partial \Rcal$ such that 
$u \mapsto \epsilon_N(u,z_1,z_2)$ is continuously differentiable and
\begin{align}
	\limsup_{N \to \infty} \sup_{(z_1,z_2) \in \partial \Rcal \times \partial \Rcal} 
	\left\{\left|\epsilon_N(u,z_1,z_2)\right|, \left|\frac{\partial \epsilon_N(u,z_1,z_2)}{\partial u}\right|\right\} < \Prm(u),
	\notag
\end{align}
with $\Prm(u)$ a polynomial with positive coefficients. $\epsilon_N(u,z_1,z_2)$ may take different values from one line to another.
We will also keep the notation $\epsilon_N(z_1,z_2)$ and $\epsilon_N(u,z_1)$ if $\epsilon_N(u,z_1,z_2)$ does not depend on $u$ or $z_2$.

We recall the quantity $\hat{\gamma}_N$, which is the regularized estimator defined in \eqref{def:gammahat} by
\begin{align}
	\hat{\gamma}_N = \frac{1}{2 \pi \irm} \oint_{\partial \Rcal} \d_{1,N}^* \Q_N(z) \d_{2,N} \chi_N \frac{w'_N(z)}{1+\sigma^2 c_N m_N(z)} \drm z.
	\notag
\end{align}
Using lemma \ref{lemma:IPP_Poincare}, it is not difficult to obtain, as in \cite[Lem. 5.7]{hachem2012large}, the following useful properties.
\begin{corollary}
        \label{corollary:tool1}
	Let $(h_N)_{N   \geq 1}$ be a sequence of continuously differentiable functions defined on $\Cbb^{M(M+N)}$ with polynomially bounded partial derivatives satisfying the condition 
	\begin{align}
	        \limsup_{N \to \infty} \sup_{z \in \partial \Rcal} \left| h_N\left(\Vec\left(\Q_N(z)\right),\Vec(\Sigmabs_N)\right) \chi_N \right| < \infty. 
	\end{align}
	Then, for all $k \in \Nbb^*$, we have
	\begin{align}
		&\Ebb\left[h_N\left(\Vec\left(\Q_N(z)\right),\Vec(\Sigmabs_N)\right) \chi_N \erm^{\irm u \sqrt{N}\Re\left(\hat{\gamma}_N\right)}\right]
		\notag\\
		=
		&\Ebb\left[h_N\left(\Vec\left(\Q_N(z)\right),\Vec(\Sigmabs_N)\right)\chi_N^k \erm^{\irm u \sqrt{N}\Re\left(\hat{\gamma}_N\right)}\right]
		+
		\frac{\epsilon_{N}(u,z)}{N^p}.
		\label{eq:IPPder1}
	\end{align}
	for all $p \in \Nbb$. Moreover, 	
	\begin{align}
		&\Ebb\left[ W_{i,j} h_N\left(\Vec\left(\Q_N(z)\right),\Vec(\Sigmabs_N)\right) \chi_N \erm^{\irm u \sqrt{N}\Re\left(\hat{\gamma}_N\right)}\right]
		=
		\notag\\
		&\qquad\qquad
		\frac{\sigma^2}{N} 
		\Ebb
		\left[
			\frac{\partial}{\partial \overline{W}_{i,j}}
			\left\{h_N\left(\Vec\left(\Q_N(z)\right),\Vec(\Sigmabs_N)\right)\erm^{\irm u \sqrt{N}\Re\left(\hat{\gamma}_N\right)}\right\} 
			\chi_N 
		\right]
		+
		\frac{\epsilon_{N}(u,z)}{N^p},
		\label{eq:IPPder2}
	\end{align}
	and
	\begin{align}
		&\Ebb\left[ \overline{W}_{i,j} h_N\left(\Vec\left(\Q_N(z)\right),\Vec(\Sigmabs_N)\right) \chi_N \erm^{\irm u \sqrt{N}\Re\left(\hat{\gamma}_N\right)}\right]
		=
		\notag\\
		&\qquad\qquad
		\frac{\sigma^2}{N} 
		\Ebb
		\left[
			\frac{\partial}{\partial W_{i,j}}
			\left\{h_N\left(\Vec\left(\Q_N(z)\right),\Vec(\Sigmabs_N)\right)\erm^{\irm u \sqrt{N}\Re\left(\hat{\gamma}_N\right)}\right\} 
			\chi_N 
		\right]
		+
		\frac{\epsilon_{N}(u,z)}{N^p}.
		\label{eq:IPPder3}
	\end{align}
	Finally, we have
	\begin{align}
		\Ebb\left[h_N\left(\Vec\left(\Q_N(z)\right),\Vec(\Sigmabs_N)\right) D_{i,j} \erm^{\irm u \sqrt{N}\Re\left(\hat{\gamma}_N\right)}\right] = \frac{\epsilon_{N}(u,z)}{N^p}
		\notag
	\end{align}
	and
	\begin{align}
		\Ebb\left[h_N\left(\Vec\left(\Q_N(z)\right),\Vec(\Sigmabs_N)\right) D_{i,j}^* \erm^{\irm u \sqrt{N}\Re\left(\hat{\gamma}_N\right)}\right] = \frac{\epsilon_{N}(u,z)}{N^p},
		\notag
	\end{align}
	for all $p \in \Nbb$, with $D_{i,j} = \frac{\partial}{\partial W_{i,j}^*} \left\{\chi_N\right\}$.
\end{corollary}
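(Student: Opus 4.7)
The plan rests on a single key observation: by construction $\chi_N = 1$ outside the event
$\Ecal_N = \{\spec(\Sigmabs_N\Sigmabs_N^*) \not\subset [t_1^-,t_1^+] \cup [t_2^-,t_2^+]\} \cup \{\spec(\hat{\Omegabs}_N) \not\subset [t_1^-,t_1^+] \cup [t_2^-,t_2^+]\}$, and the separation properties \eqref{eq:separation_lambdahat}, \eqref{eq:separation_omegahat} together with the standard Gaussian concentration estimates for the extreme eigenvalues of $\Sigmabs_N\Sigmabs_N^*$ (recalled in \cite{hachem2012large}) in fact give $\Pbb(\Ecal_N) = \Ocal(N^{-p})$ for every $p \in \Nbb$. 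Since $h_N(\cdot)\chi_N$ is uniformly bounded by hypothesis and $|1 - \chi_N^{k-1}| \leq 1$, one has $|\chi_N - \chi_N^k| \leq \mathbb{1}_{\Ecal_N}$, and identity \eqref{eq:IPPder1} follows immediately after absorbing the bound on $|\erm^{\irm u\sqrt{N}\Re(\hat{\gamma}_N)}|=1$.

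For \eqref{eq:IPPder2} and \eqref{eq:IPPder3}, I would apply the Gaussian integration by parts formula of Lemma \ref{lemma:IPP_Poincare} directly to the product $W_{i,j}\, h_N(\cdot)\, \chi_N\, \erm^{\irm u\sqrt{N}\Re(\hat{\gamma}_N)}$, viewing it as a smooth function of the i.i.d. Gaussian entries $\{W_{i,j}\}$. Smoothness in $W_{i,j}$ is inherited from the choice $\varphi \in \Ccal_c^\infty$, which makes $\chi_N$ a smooth function of the matrix entries, while the resolvent $\Q_N(z)$ and $\hat{\gamma}_N$ are smooth (rational/analytic) in those entries on the event $\{\chi_N=1\}$ where the denominators are uniformly bounded away from zero. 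Leibniz's rule produces three contributions: the "main" term with $\partial h_N/\partial \overline{W}_{i,j}$ multiplied by $\chi_N$, an "exponential" term whose derivative pulls down a factor $\irm u\sqrt{N}\, \partial \Re(\hat{\gamma}_N)/\partial \overline{W}_{i,j}$, and a term proportional to $D_{i,j}^* = \partial\chi_N/\partial \overline{W}_{i,j}$. Replacing $\chi_N$ by $\chi_N^2$ up to an $\epsilon_N(u,z)/N^p$ error via \eqref{eq:IPPder1} then yields the announced identity.

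The negligibility of the $D_{i,j}$ and $D_{i,j}^*$ terms is the cleanest step. By the chain rule, $D_{i,j}$ is a finite linear combination (in the entries of $\Sigmabs_N$ and of the resolvents of $\Sigmabs_N\Sigmabs_N^*$ and $\hat{\Omegabs}_N$) of terms each carrying a factor of the form $\varphi'(\lambda)$ with $\lambda$ an eigenvalue of $\Sigmabs_N\Sigmabs_N^*$ or of $\hat{\Omegabs}_N$. By the very definition of $\varphi$ in \eqref{def:reg_phi}, $\varphi' \equiv 0$ on $[t_1^-,t_1^+] \cup [t_2^-,t_2^+]$, so on the good event $\Ecal_N^c$ every such $\varphi'(\lambda)$ vanishes; hence $D_{i,j} \equiv 0$ on $\Ecal_N^c$. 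Combining the polynomial bound on $|D_{i,j}|$ (coming from $\varphi$ being compactly supported with bounded derivatives and from the trivial bounds on the entries of $\Sigmabs_N$ on $\Ecal_N$ after truncation) with $\Pbb(\Ecal_N)=\Ocal(N^{-p})$, the Cauchy--Schwarz inequality gives the claimed $\epsilon_N(u,z)/N^p$ bound.

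The main obstacle, in my view, is not any single identity but the bookkeeping required to guarantee that the error functions $\epsilon_N(u,z)$ retain both the uniform polynomial growth in $u$ (with derivative control, since the differential equation argument in Section \ref{section:diff_equ} differentiates them in $u$) and the uniformity in $z \in \partial \Rcal$. This uniformity requires that every resolvent norm $\|\Q_N(z)\|$, every $|1+\sigma^2 c_N m_N(z)|^{-1}$, and every derivative of $\hat{\gamma}_N$ in $W_{i,j}$ be uniformly polynomially bounded on $\partial \Rcal$ times $\{\chi_N=1\}$, and that the Gaussian concentration giving $\Pbb(\Ecal_N)=\Ocal(N^{-p})$ hold in the truly super-polynomial sense rather than the weaker almost-sure statement \eqref{eq:separation_lambdahat}--\eqref{eq:separation_omegahat}; both of these points are extracted from \cite{hachem2012large}.
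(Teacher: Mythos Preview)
Your proposal is correct and follows precisely the strategy the paper defers to \cite[Lem.~5.7]{hachem2012large}: super-polynomial decay of $\Pbb(\chi_N\neq 1)$ gives \eqref{eq:IPPder1}, Gaussian integration by parts plus Leibniz gives \eqref{eq:IPPder2}--\eqref{eq:IPPder3} modulo the $D_{i,j}$ term, and the $\varphi'$ factor makes $D_{i,j}$ vanish on the good event. One detail worth making explicit in your last step: you control $|D_{i,j}|$ but still need $h_N$ (with no $\chi_N$) bounded on $\{D_{i,j}\neq 0\}$; this is automatic once you note that each summand of $D_{i,j}$ carries the full product $\prod_{l\neq k}\varphi(\hat\lambda_l)\prod_m\varphi(\hat\omega_m)$, so $D_{i,j}\neq 0$ forces \emph{all} eigenvalues of $\Sigmabs_N\Sigmabs_N^*$ and $\hat\Omegabs_N$ into $\supp(\varphi)$, keeping them at distance $\geq \epsilon/3$ from $\partial\Rcal$ and hence uniformly bounding $\|\Q_N(z)\|$ and $\|\Sigmabs_N\|$ there.
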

We now introduce the matrix $\R_N(z)$ given by
\begin{align}
	\R_N(z) = 
	\left(\frac{\B_N\B_N^*}{1+\sigma^2 c_N \Ebb\left[\hat{m}_N(z_1)\chi_N\right]} - z \left(1+\sigma^2 c_N \Ebb\left[\hat{m}_N(z_1)\chi_N\right]\right) + \sigma^2 (1-c_N)\right)^{-1}.
	\notag
\end{align}
Matrix $\R_N(z)$ is similar to $\T_N(z)$ where we just replaced $m_N(z)$ by $\Ebb[\hat{m}_N(z) \chi_N]$. Since $\Ebb[\hat{m}_N(z) \chi_N] - m_N(z) \to_N 0$, it is of course
expected that $\R_N(z)$ will be close to $\T_N(z)$ asymptotically. This result is given by the following lemma.
\begin{corollary}[$\text{\cite[Lem. 3.10, 5.5 \& 5.6]{hachem2012large}}$]
	\label{corollary:tool2}
	Let $\M_N(z)$ a sequence of deterministic matrices of size $M \times M$ such that 
	\begin{align}
		\limsup_{N \to \infty} \sup_{z \in \partial \Rcal} \left\|\M_N(z)\right\| < \infty.
		\notag
	\end{align}
	Then we have
	\begin{align}
		\limsup_{N \to \infty} \sup_{z \in \Rcal} \left|1 + \sigma^2 c_N \Ebb[\hat{m}_N(z) \chi_N]\right|^{-1} < \infty
		\quad\text{and}\quad
		\limsup_{N \to \infty} \sup_{z \in \Rcal} \left\|\R_N(z)\right\| < \infty.
		\notag
	\end{align}
	Moreover, $\Ebb\left[\hat{m}_N(z) \chi_N\right] - m_N(z) = \frac{\epsilon_N(z)}{N^2}$ and
	\begin{align}
		\d_{1,N}^* (\Ebb[\Q_N(z) \chi_N]-\R_N(z)) \M_N(z) \d_{2,N} = \frac{\epsilon_N(z)}{N^{3/2}}
		\quad\text{and}\quad
		\d_{1,N}^* (\R_N(z)-\T_N(z)) \M_N(z) \d_{2,N} = \frac{\epsilon_N(z)}{N^{3/2}}.
		\notag
	\end{align}
\end{corollary}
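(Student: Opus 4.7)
The plan is to build the four estimates of Corollary \ref{corollary:tool2} in sequence, using integration by parts and the Poincar\'e inequality (Lemma \ref{lemma:IPP_Poincare} and Corollary \ref{corollary:tool1}) on the regularized quantities, together with the spectral confinement enforced by $\chi_N$. The key feature of $\chi_N$ is that wherever it is nonzero, the spectrum of $\Sigmabs_N\Sigmabs_N^*$ lies inside $[t_1^-,t_1^+]\cup[t_2^-,t_2^+]$, a set bounded away from $\partial \Rcal$; hence $\Q_N(z)\chi_N$ is uniformly bounded in operator norm on $\partial\Rcal$, and this is what allows us to extract all sharper estimates.

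First I would establish the uniform boundedness statements. Applying the Poincar\'e inequality \eqref{eq:ineq_poin} to $\hat{m}_N(z)\chi_N$ gives $\Vbb[\hat{m}_N(z)\chi_N]=\Ocal(1/N^2)$, since the derivatives with respect to $W_{i,j}$ produce entries of $\Q_N^2\chi_N$ divided by $N$, and the contributions from $\partial\chi_N/\partial W_{i,j}$ are negligible by the last two displays of Corollary \ref{corollary:tool1}. On the event $\{\chi_N=1\}$ we have $\hat{m}_N(z)$ close to $m_N(z)$, and \eqref{eq:bound_m_half} gives $|1+\sigma^2 c_N m_N(z)|^{-1}\leq 2$, so the first bound transfers through $\Ebb$. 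The operator-norm bound on $\R_N(z)$ is then immediate from \eqref{eq:bound_norm_T_Ttilde} applied to the perturbed denominator.

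Next comes the main estimate: the $\Ocal(N^{-3/2})$ bilinear-form control on $\Ebb[\Q_N(z)\chi_N]-\R_N(z)$. The strategy is a Schwinger--Dyson derivation. Starting from the resolvent identity $\Q_N(z)\Sigmabs_N\Sigmabs_N^* = \I+z\Q_N(z)$ and expanding $\Sigmabs_N\Sigmabs_N^* = (\B_N+\W_N)(\B_N+\W_N)^*$, one applies the integration-by-parts formulas \eqref{eq:IPPder2}--\eqref{eq:IPPder3} to each $\W_N$ factor. After rearrangement this yields an approximate matrix identity of the form
\begin{align}
\Ebb[\Q_N(z)\chi_N]=\R_N(z)+\Xibs_N(z),\notag
\end{align}
where $\Xibs_N(z)$ is an explicit remainder whose bilinear forms $\d_{1,N}^*\Xibs_N(z)\M_N(z)\d_{2,N}$ are bounded by Cauchy--Schwarz from a variance $\Vbb[\d_{1,N}^*\Q_N\M_N\d_{2,N}\chi_N]=\Ocal(1/N)$ (Poincar\'e again) and a deterministic bias of the same order. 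The product of these two factors yields the claimed $\Ocal(N^{-3/2})$ rate for the bilinear form.

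The main obstacle will be the sharper $\Ebb[\hat{m}_N(z)\chi_N]-m_N(z)=\epsilon_N(z)/N^2$ estimate, which is two orders better than what the bilinear rate would give after summing $M$ basis directions. The plan is to decompose
\begin{align}
\Ebb[\hat{m}_N(z)\chi_N]-m_N(z)=\tfrac{1}{M}\Tr\!\left(\Ebb[\Q_N(z)\chi_N]-\R_N(z)\right)+\tfrac{1}{M}\Tr\!\left(\R_N(z)-\T_N(z)\right),\notag
\end{align}
and observe that the Schwinger--Dyson analysis of the previous step, when applied to the first trace, produces a \emph{linearized} fixed-point equation in $\Ebb[\hat{m}_N\chi_N]-m_N$ itself: the leading $\Ocal(1/N)$ contribution is not a driving term but a multiple of the unknown, so after moving it to the left-hand side and inverting the linear operator (invertibility following from the separation assumptions {\bf A-\ref{assumption:subspace1}}, {\bf A-\ref{assumption:subspace2}}, in the spirit of Lemma \ref{lemma:properties_determinant}), only a genuine $\Ocal(1/N^2)$ remainder survives. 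The $\R_N-\T_N$ piece is then controlled by a perturbation argument from the definitions using the freshly proven $\Ocal(1/N^2)$ Stieltjes estimate, which closes the loop and simultaneously upgrades the bilinear-form rate of the third display to the sharp statement.
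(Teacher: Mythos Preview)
The paper does not prove this corollary at all: as the header indicates, it is quoted verbatim from \cite[Lem.~3.10, 5.5 \& 5.6]{hachem2012large}, so there is no in-paper proof to compare against. Your outline is nonetheless the standard route used in that reference---spectral confinement via $\chi_N$ to keep $\Q_N(z)\chi_N$ uniformly bounded on $\partial\Rcal$, Poincar\'e for the variance estimates, integration by parts to derive the approximate self-consistent equation yielding $\Ebb[\Q_N\chi_N]\approx\R_N$, and the linearized fixed-point argument to promote the Stieltjes-transform error from $\Ocal(N^{-1})$ to $\Ocal(N^{-2})$---so the approach is correct and essentially matches the cited proofs.

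One minor remark: in your last paragraph you describe a bootstrap in which the $\R_N-\T_N$ bilinear estimate is obtained \emph{after} the $\Ocal(N^{-2})$ Stieltjes bound, via perturbation. That is indeed how it works, and in fact this gives $\d_{1,N}^*(\R_N-\T_N)\M_N\d_{2,N}=\Ocal(N^{-2})$, stronger than the $\Ocal(N^{-3/2})$ stated in the corollary; the $N^{-3/2}$ rate in the statement is simply what is needed downstream, not the sharpest available. There is no circularity in your bootstrap because the preliminary boundedness of $\R_N$ only requires $\Ebb[\hat m_N\chi_N]-m_N=o(1)$, not the full $\Ocal(N^{-2})$ rate.
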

We now give a result on the variance of certain expressions involving the resolvent, whose proof is a standard application of the Poincar\'e inequality 
(see e.g. \cite[Lem. 5.8]{hachem2012large}, \cite[Lem. 10]{vallet2012improved}), and is therefore omitted .
\begin{corollary}
	\label{corollary:tool3}
	Let $\M_N(z_1,z_2)$ a sequence of deterministic matrices of size $M \times M$ such that 
	\begin{align}
		\limsup_{N \to \infty} \sup_{(z_1,z_2) \in \partial \Rcal^2} \left\|\M_N(z_1,z_2)\right\| < \infty.
		\notag
	\end{align}
	Let $\Prm$ a $5$-variate polynomial function independent of $N$ such that the matrix
	\begin{align}
		\Xibs_N(z_1,z_2) = \Prm\left(\Q_N(z_1), \Q_N(z_2), \Sigmabs_N, \Sigmabs_N^*, \M_N(z_1,z_2)\right) \chi_N.
		\notag
	\end{align}
	is properly defined. Then, it holds that
	\begin{align}
		\limsup_{N \to \infty} \sup_{(z_1,z_2) \in \partial \Rcal^2} \Vbb\left[\Tr \Xibs_N(z_1,z_2) \right] < \infty
		\quad\text{and}\quad
		\limsup_{N \to \infty} \sup_{(z_1,z_2) \in \partial \Rcal^2} \Vbb\left[\sqrt{N}\d_{1,N}^* \Xibs_N(z_1,z_2) \d_{2,N}\right] < \infty.
		\notag
	\end{align}
	Moreover, it also holds that $\Vbb\left[\d_{1,N}^* \Xibs_N(z_1,z_2) \d_{2,N}  \erm^{\irm u \sqrt{N}\Re\left(\hat{\gamma}_N\right)}\right]$ 
	is a term behaving as $\epsilon_N(u,z_1,z_2)$.
\end{corollary}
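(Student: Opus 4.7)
The plan is to apply the Poincaré inequality of Lemma \ref{lemma:IPP_Poincare} to each of the random variables in the statement, treating the entries $W_{i,j}$ as the underlying i.i.d. complex Gaussian variables with variance $\sigma^2/N$. The starting point for each computation is the resolvent identity
\begin{align}
\frac{\partial \Q_N(z)}{\partial W_{i,j}} = -\Q_N(z) \e_i \e_j^T \Sigmabs_N^* \Q_N(z),
\qquad
\frac{\partial \Q_N(z)}{\partial \overline{W_{i,j}}} = -\Q_N(z)\Sigmabs_N \e_j \e_i^T \Q_N(z),
\notag
\end{align}
together with the trivial derivatives of $\Sigmabs_N$ and $\Sigmabs_N^*$. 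Applied inside the polynomial $\Prm$ via the Leibniz rule, each partial derivative produces a finite sum of terms each of the form $(\text{matrix}_1)_{\cdot, i}\cdot (\text{matrix}_2)_{j,\cdot}$, where the two matrices involve only products of $\Q_N(z_k)$, $\Sigmabs_N$, $\Sigmabs_N^*$, $\M_N$ and their conjugates, all multiplied by $\chi_N$. Derivatives falling on $\chi_N$ itself are absorbed into the $D_{i,j}$ terms of Corollary \ref{corollary:tool1} and yield $O(N^{-p})$ contributions for any $p$, hence negligible.

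For $\Tr\Xibs_N(z_1,z_2)$, squaring and summing over $(i,j)$ collapses each rank-one derivative into a trace of a product of matrices of the form $\Q_N(z_k)\chi_N$, $\Sigmabs_N \chi_N$, $\M_N$. The regularization coefficient $\chi_N$ forces $\|\Q_N(z_k)\chi_N\|$ and $\|\Sigmabs_N \chi_N\|$ to be bounded uniformly on $\partial\Rcal^2$ (by construction of $\varphi$ and \eqref{def:reg_phi}), so each such trace is $O(N)$ and hence $\frac{\sigma^2}{N}\sum_{i,j} \Ebb|\partial \Tr\Xibs_N/\partial W_{i,j}|^2 = O(1)$. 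For the bilinear form $\sqrt{N}\d_{1,N}^*\Xibs_N\d_{2,N}$, the derivative is now a scalar, not a trace; squaring and summing over $(i,j)$ once more yields a quadratic form $\d_{1,N}^*(\cdots)\d_{1,N}\cdot \d_{2,N}^*(\cdots)\d_{2,N}$ by Cauchy-Schwarz type identities, each factor being a bounded quadratic form in bounded vectors with matrices of bounded norm, hence $O(1)$. The factor $N$ from the $\sqrt{N}$ scaling is then exactly cancelled by $\rho^2 = \sigma^2/N$.

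For the last claim, I apply Poincaré to the product $\d_{1,N}^*\Xibs_N\d_{2,N}\cdot \erm^{\irm u\sqrt{N}\Re(\hat{\gamma}_N)}$. The Leibniz rule splits each derivative into two pieces: one where the derivative acts on the bilinear form (treated exactly as above and yielding an $O(1)$ contribution, independent of $u$), and one where it acts on the exponential, producing a factor $\irm u \sqrt{N}\,\partial \Re(\hat{\gamma}_N)/\partial W_{i,j}$. When squared and summed, this second piece yields $u^2 N \cdot \Ebb[|\d_{1,N}^*\Xibs_N\d_{2,N}|^2 \sum_{i,j}|\partial\hat{\gamma}_N/\partial W_{i,j}|^2]$. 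The inner sum, again by Poincaré applied to $\hat{\gamma}_N$ itself, is $O(N^{-1})$, so the combined contribution is $O(u^2)$ uniformly in $(z_1,z_2)\in\partial\Rcal^2$, fitting the generic $\epsilon_N(u,z_1,z_2)$ framework with polynomial growth in $u$.

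The main obstacle is the combinatorial bookkeeping: the polynomial $\Prm$ in five non-commuting matrix variables can have arbitrarily many factors, so one needs a systematic argument showing that every term arising from repeated application of the Leibniz rule reduces, after the $(i,j)$-summation, to a trace or bilinear form involving only operators that are uniformly bounded under $\chi_N$. The argument rests entirely on the regularization: without $\chi_N$, the resolvents $\Q_N(z)$ could have eigenvalues arbitrarily close to the contour, and the bounds would fail. Once one notes that $\chi_N$ can be freely inserted thanks to \eqref{eq:IPPder1} of Corollary \ref{corollary:tool1}, and that the derivatives of $\chi_N$ are negligible via the $D_{i,j}$ identities, the argument becomes a mechanical application of the standard resolvent norm bounds $\|\Q_N(z)\chi_N\| \leq C$ on $\partial\Rcal$.
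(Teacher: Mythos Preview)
Your approach is the same one the paper indicates (the paper omits the proof, citing it as ``a standard application of the Poincar\'e inequality'', referring to \cite[Lem.~5.8]{hachem2012large} and \cite[Lem.~10]{vallet2012improved}), and your outline is essentially correct.

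One bookkeeping slip: in the third part you write that ``the inner sum $\sum_{i,j}|\partial\hat{\gamma}_N/\partial W_{i,j}|^2$ is $O(N^{-1})$ by Poincar\'e applied to $\hat{\gamma}_N$''. Poincar\'e gives an \emph{upper} bound on $\Vbb[\hat{\gamma}_N]$ in terms of this sum, not a bound on the sum itself. The correct statement is that $\sum_{i,j}|\partial\hat{\gamma}_N/\partial W_{i,j}|^2\chi_N = O(1)$ (exactly by the same bilinear-form computation you already did), and the missing $N^{-1}$ comes from the Poincar\'e prefactor $\rho^2=\sigma^2/N$ that you dropped when writing ``$u^2 N\cdot\Ebb[\ldots]$''. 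The two slips cancel and your conclusion $O(u^2)$ is right. Also, strictly speaking the $\epsilon_N(u,z_1,z_2)$ class requires a polynomial bound on the $u$-derivative as well; this follows by the same argument after one differentiation in $u$, which brings down an extra bounded factor.
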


	\subsubsection{Proof of formula \eqref{eq:charac_fun_1}}
	\label{appendix:gaussiancomp1}

By expressing the derivative of $\erm^{\irm u \sqrt{N}\Re\left(\hat{\gamma}_N\right)}$ w.r.t. $W_{i,j}$, we obtain
\begin{align}
	&\frac{\partial}{\partial W_{i,j}} \left\{\erm^{\irm u \sqrt{N}\Re\left(\hat{\gamma}_N\right)}\right\}
	=
	\notag\\
	& \frac{\irm u}{2} \frac{1}{2 \pi \irm} \oint_{\partial \Rcal} 
	\left(- \d_{1,N}^* \Q_N(z) \e_i \e_j^* \Sigmabs_N^* \Q_N(z) \d_{2,N} \chi_N + \d_{1,N}^* \Q_N(z) \d_{2,N} D_{i,j}^*\right)
	\frac{w'_N(z)}{1+\sigma^2 c_N m_N(z)} \erm^{\irm u \sqrt{N}\Re\left(\hat{\gamma}_N\right)} \drm z
	\notag\\
	& + \frac{\irm u}{2} \frac{1}{2 \pi \irm} \oint_{\partial \Rcal} 
	\left(- \d_{2,N}^* \Q_N(z) \e_i \e_j^* \Sigmabs_N^* \Q_N(z) \d_{1,N} \chi_N + \d_{2,N}^* \Q_N(z) \d_{1,N} D_{i,j}^*\right)
	\frac{w'_N(z)}{1+\sigma^2 c_N m_N(z)} \erm^{\irm u \sqrt{N}\Re\left(\hat{\gamma}_N\right)} \drm z,
	\label{eq:der_exp1}
\end{align}
where $D_{i,j}$ is defined in corollary \ref{corollary:tool1}.
The derivative with respect to $\overline{W}_{i,j}$ is computed in the same way.
To develop $\Ebb\left[\d_{1,N}^* \Q_N(z) \d_{2,N} \chi_N \erm^{\irm u \sqrt{N}\Re\left(\hat{\gamma}_N\right)}\right]$, we start with the classical resolvent identity 
$\Q_N(z)\Sigmabs_N\Sigmabs_N^* = \I + z \Q_N(z)$.
By applying corollaries \ref{corollary:tool1} and \ref{corollary:tool2} several times, long but straightforward computations lead to
\begin{align}
	&\Ebb\left[\left[\Q_N(z_1)\Sigmabs_N\Sigmabs_N^*\right]_{i,j} \chi_N \erm^{\irm u \sqrt{N}\Re\left(\hat{\gamma}_N\right)}\right]
	=
	\notag\\
	&
	\alpha_{N}^{(i,j)}(u,z_1)
	+ \frac{\irm u \sigma^2}{2 \sqrt{N}} \frac{1}{2 \pi \irm} \oint_{\partial \Rcal} 
 	\frac{\left(\beta^{(i,j)}_N(u,z_1,z_2) + \tilde{\beta}^{(i,j)}_N(u,z_1,z_2)\right) w'_N(z_2)}
 	{\left(1+\sigma^2 c_N \Ebb\left[\hat{m}_N(z_1) \chi_N\right]\right)\left(1+\sigma^2 c_N m_N(z_2)\right)} \drm z_2
	\notag\\
	&	
	- \left[\Deltabs_N(u, z_1)\right]_{i,j} + \frac{\epsilon_N(u,z_1)}{N^p},
	\label{eq:exp_Q_sig_sig}
\end{align}
for all $p \in  \Nbb$, where $\alpha_{N}^{(i,j)}(u,z_1)$ is given by
\begin{align}
	&\alpha_{N}^{(i,j)}(u,z_1) = 
	\notag\\
	&\qquad\qquad
	\frac{\Ebb\left[\left[\Q_N(z_1)\B_N\B_N^*\right]_{i,j} \chi_N \erm^{\irm u \sqrt{N}\Re\left(\hat{\gamma}_N\right)}\right]}{1+\sigma^2 c_N \Ebb\left[\hat{m}_N(z_1) \chi_N\right]}
	+
	\frac{\sigma^2 \Ebb\left[\left[\Q_N(z)\right]_{i,j} \chi_N \erm^{\irm u \sqrt{N}\Re\left(\hat{\gamma}_N\right)}\right]}{{1+\sigma^2 c_N \Ebb\left[\hat{m}_N(z_1) \chi_N\right]}}
	\notag\\
	&\qquad\qquad\qquad\qquad
	-
	\frac{\Ebb\left[\left[\Q_N(z_1)\right]_{i,j}\erm^{\irm u \Re\left(\hat{\gamma}_N\right)}\right]}{\left(1+\sigma^2 c_N \Ebb\left[\hat{m}_N(z_1)\chi_N\right]\right)^2} 
	\frac{\sigma^2}{N} \Tr \B_N^* \Ebb\left[\Q_N(z_1)\chi_N\right] \B_N
	,
	\notag
\end{align}
$\beta_{N}^{(i,j)}(u,z_1,z_2)$ and $\tilde{\beta}_{N}^{(i,j)}(u,z_1,z_2)$ respectively by
\begin{align}
	&\beta_{N}^{(i,j)}(u,z_1,z_2)=
	\notag\\
	&\qquad
	\Ebb\left[\d_{1,N}^* \Q_N(z_2) \e_j \e_i^*\Q_N(z_1) \B_N\Sigmabs_N^*\Q_N(z_2)\d_{2,N}\chi_N^2\erm^{\irm u \sqrt{N}\Re\left(\hat{\gamma}_N\right)}\right]
	\notag\\
	&\qquad\qquad
	+
	\Ebb\left[\d_{2,N}^* \Q_N(z_2) \e_j \e_i^*\Q_N(z_1) \B_N\Sigmabs_N^*\Q_N(z_2)\d_{1,N}\chi_N^2\erm^{\irm u \sqrt{N}\Re\left(\hat{\gamma}_N\right)}\right]
	\notag
\end{align}
and
\begin{align}
	&\tilde{\beta}_{N}^{(i,j)}(u,z_1,z_2)=
	\notag\\
	&\qquad
	\Ebb\left[\d_{1,N}^* \Q_N(z_2) \Sigmabs_N\Sigmabs_N^*\e_j \e_i^*\Q_N(z_1) \Q_N(z_2)\d_{2,N}\chi_N^2\erm^{\irm u \sqrt{N}\Re\left(\hat{\gamma}_N\right)}\right]
	\notag\\
	&\qquad\qquad
	+
	\Ebb\left[\d_{2,N}^* \Q_N(z_2)  \Sigmabs_N\Sigmabs_N^* \e_j \e_i^*\Q_N(z_1)\Q_N(z_2)\d_{1,N}\chi_N^2\erm^{\irm u \sqrt{N}\Re\left(\hat{\gamma}_N\right)}\right]
	\notag
\end{align}
and where finally the matrix $\Deltabs_N(u,z_1)$ is given by
\begin{align}
	&\Deltabs_{N}(u,z_1) =
	\notag\\
	&
	-\frac{\Ebb\left[\Q_N(z_1)\chi_N \erm^{\irm u \sqrt{N}\Re\left(\hat{\gamma}_N\right)}\right]}
	{\left(1+\sigma^2 c_N \Ebb\left[\hat{m}_N(z_1)\chi_N\right]\right)^2}
	\Ebb
	\left[
	        \left(\frac{\sigma^2}{N}\Tr \Q_N(z_1)\chi_N - \Ebb\left[\frac{\sigma^2}{N}\Tr \Q_N(z_1)\chi_N\right]\right)
	        \frac{\sigma^2}{N}\Tr \Sigmabs_N^*\Q_N(z_1)\B_N\chi_N
	\right]
	\notag\\
	&
	+
	\frac{1}{1+\sigma^2 c_N \Ebb\left[\hat{m}_N(z_1)\chi_N\right]}
	\Ebb
	\left[
	        \left(\frac{\sigma^2}{N}\Tr\Sigmabs_N^*\Q_N(z_1)\B_N \chi_N - \Ebb\left[\frac{\sigma^2}{N}\Tr\Sigmabs_N^*\Q_N(z_1)\B_N \chi_N\right]\right)           
	        \Q_N(z)\chi_N \erm^{\irm u \sqrt{N}\Re\left(\hat{\gamma}_N\right)}
	\right]
	\notag\\
	&
	+ \frac{1}{1+\sigma^2 c_N \Ebb\left[\hat{m}_N(z_1)\chi_N\right]}
	\Ebb
	\left[
	        \left(\frac{\sigma^2}{N} \Tr\Q_N(z_1)\chi_N - \Ebb\left[\frac{\sigma^2}{N} \Tr\Q_N(z_1)\chi_N\right]\right) 
	        \Q_N(z_1)\Sigmabs_N\Sigmabs_N^*\chi_N \erm^{\irm u \sqrt{N}\Re\left(\hat{\gamma}_N\right)}
	\right].
	\label{eq:defDeltaz}
\end{align}
From \eqref{eq:exp_Q_sig_sig} and the resolvent identity, we obtain
\begin{align}
	&\Biggl[
		\Ebb\left[\Q_N(z_1)\chi_N \erm^{\irm u \sqrt{N}\Re\left(\hat{\gamma}_N\right)} \right] 
		\left(\frac{\B_N\B_N^*}{1+\sigma^2 c_N \Ebb\left[\hat{m}_N(z_1)\chi_N\right]} - z\left(1+\sigma^2 c_N \tilde{\tau}_N(z_1)\right)\I\right)
	\Biggr]_{i,j}
	=
	\notag\\
	&\qquad
	\Ebb\left[[\I]_{i,j}\chi_N \erm^{\irm u \sqrt{N}\Re\left(\hat{\gamma}_N\right)}\right] 
	- \frac{\irm u \sigma^2}{2 \sqrt{N}} \frac{1}{2 \pi \irm} \oint_{\partial \Rcal} 
 	\frac{\left(\beta^{(i,j)}_N(u,z_1,z_2) + \tilde{\beta}^{(i,j)}_N(u,z_1,z_2)\right) w'_N(z_2)}
 	{\left(1+\sigma^2 c_N \Ebb\left[\hat{m}_N(z_1) \chi_N\right]\right)\left(1+\sigma^2 c_N m_N(z_2)\right)} \drm z_2
	\notag\\
	&\qquad\qquad
	+ [\Deltabs_N(u,z_1)]_{i,j} + \frac{\epsilon_N(u,z_1)}{N^p},
	\notag
\end{align}
where
\begin{align}
	\tilde{\tau}_N(z_1) = 
	- \frac{\sigma}{z\left(1+\sigma^2 c_N \Ebb\left[\hat{m}_N(z_1)\chi_N\right]\right)}
	\left(1 - \frac{1}{N} \Tr\left(\frac{\B_N^*\Ebb\left[\Q_N(z_1)\right]\B_N^*}{1+\sigma^2 c_N \Ebb\left[\hat{m}_N(z_1)\chi_N\right]}\right)\right)
	\notag
\end{align}
Straightforward algebra gives 
\begin{align}
	\tilde{\tau}_N(z_1) =  \Ebb\left[\left(\hat{m}_N(z_1) - \frac{(1-1/c_N)}{z_1}\right) \chi_N\right] + \frac{1}{z_1}\frac{1}{M} \Tr \Deltabs_N(0,z_1) + \frac{\epsilon_N(z_1)}{N^p},
	\notag
\end{align}
and finally we get
\begin{align}
	&\Ebb\left[\d_{1,N}^*\Q_N(z_1)\d_{2,N}\chi_N \erm^{\irm u \sqrt{N}\Re\left(\hat{\gamma}_N\right)}\right] =
	\notag\\
	&\qquad
	\d_{1,N}^*\R_{N}(z_1) \d_{2,N} \Ebb\left[\chi_N \erm^{\irm u \sqrt{N}\Re\left(\hat{\gamma}_N\right)}\right]
	- \frac{\irm u \sigma^2}{2 \sqrt{N}} \frac{1}{2 \pi \irm} \oint_{\partial \Rcal} 
 	\frac{\left(\beta_N(u,z_1,z_2) + \tilde{\beta}_N(u,z_1,z_2)\right) w'_N(z_2)}
 	{\left(1+\sigma^2 c_N \Ebb\left[\hat{m}_N(z_1) \chi_N\right]\right)\left(1+\sigma^2 c_N m_N(z_2)\right)} \drm z_2
 	\notag\\
 	&\qquad\qquad
 	+ \d_{1,N}^*\Deltabs_{N}(u,z_1) \R_{N}(z_2)\d_{2,N}  	
 	+ \Ebb\left[\d_{1,N}^*\Q_N(z_1)\d_{2,N}\erm^{\irm u \sqrt{N}\Re\left(\hat{\gamma}_N\right)} \right] \frac{\sigma^2}{N} \Tr \Deltabs_N(0,z_1)
 	+ \frac{\epsilon_N(u,z_1)}{N^p}.	
	\label{eq:mastereq}
\end{align}
where $\beta_N(u,z_1,z_2)$ is defined by
\begin{align}
	&\beta_N(u,z_1,z_2) = 
	\notag\\
	&\qquad\qquad
	\Ebb\left[\d_{1,N}^* \Q_N(z_2) \R_N(z_1)\d_{2,N} \d_{1,N}^*\Q_N(z_1) \B_N\Sigmabs_N^*\Q_N(z_2)\d_{2,N}\chi_N^2\erm^{\irm u \sqrt{N}\Re\left(\hat{\gamma}_N\right)}\right]
	\notag\\
	&\qquad\qquad\qquad\qquad
	+ \Ebb\left[\d_{1,N}^* \Q_N(z_2) \Sigmabs_N\Sigmabs_N^* \R_N(z_1) \d_{2,N} \d_{1,N}^*\Q_N(z_1) \Q_N(z_2)\d_{2,N}\chi_N^2\erm^{\irm u \sqrt{N}\Re\left(\hat{\gamma}_N\right)}\right],
	\notag
\end{align}
and $\tilde{\beta}_N(u,z_1,z_2)$ by
\begin{align}
	&\tilde{\beta}_N(u,z_1,z_2) = 
	\notag\\
	&\qquad\qquad
	\Ebb\left[\d_{2,N}^* \Q_N(z_2) \R_N(z_1)\d_{2,N} \d_{1,N}^*\Q_N(z_1) \B_N\Sigmabs_N^*\Q_N(z_2)\d_{1,N}\chi_N^2\erm^{\irm u \sqrt{N}\Re\left(\hat{\gamma}_N\right)}\right]
	\notag\\
	&\qquad\qquad\qquad\qquad
	+ \Ebb\left[\d_{2,N}^* \Q_N(z_2)  \Sigmabs_N\Sigmabs_N^* \R_N(z_1)\d_{2,N} \d_{1,N}^*\Q_N(z_1)\Q_N(z_2)\d_{1,N}\chi_N^2\erm^{\irm u \sqrt{N}\Re\left(\hat{\gamma}_N\right)}\right].
	\notag
\end{align}
Using corollaries \ref{corollary:tool2} and \ref{corollary:tool3} (in conjonction with Cauchy-Schwarz inequality), 
and the fact that $\Ebb[\chi_N \erm^{\irm u \Re\left(\hat{\gamma}_N\right)}]=\psi_N(u) + \Ocal\left(N^{-p}\right)$ 
(by dominated convergence theorem, see section \ref{section:regularization}), it is straightforward to show that
\begin{align}
	&\Ebb\left[\d_{1,N}^*\Q_N(z_1)\d_{2,N}\chi_N \erm^{\irm u \sqrt{N}\Re\left(\hat{\gamma}_N\right)}\right] =
	\notag\\
	&\qquad\qquad
	\d_{1,N}^*\T_{N}(z_1) \d_{2,N} \psi_N(u)
	- \frac{\irm u \sigma^2}{2 \sqrt{N}} \frac{1}{2 \pi \irm} \oint_{\partial \Rcal} 
 	\frac{\left(\beta_N(u,z_1,z_2) + \tilde{\beta}_N(u,z_1,z_2)\right) w'_N(z_2)}
 	{\left(1+\sigma^2 c_N m_N(z_1)\right)\left(1+\sigma^2 c_N m_N(z_2)\right)} \drm z_2
 	+ \frac{\epsilon_N(u,z_1)}{N}.	
	\label{eq:mastereq2}
\end{align}
The next step consists in decorrelating the different terms inside the expressions of $\beta_N$ and $\tilde{\beta}_N$. 
We have
\begin{align}
	&\Ebb\left[\d_{1,N}^* \Q_N(z_2) \R_N(z_1)\d_{2,N} \d_{1,N}^*\Q_N(z_1) \B_N\Sigmabs_N^*\Q_N(z_2)\d_{2,N}\chi_N^2\erm^{\irm u \sqrt{N}\Re\left(\hat{\gamma}_N\right)}\right]
	=
	\notag\\
	&
	\Ebb\left[\d_{1,N}^* \Q_N(z_2) \R_N(z_1)\d_{2,N}\chi_N\right] \Ebb\left[\d_{1,N}^*\Q_N(z_1) \B_N\Sigmabs_N^*\Q_N(z_2)\d_{2,N}\chi_N\right] \psi_N(u)
	\notag\\
	&
	+
	\Ebb
	\left[
		\d_{1,N}^* \left(\Q_N(z_2)-\Ebb\left[(\Q_N(z_2)\right]\right) \R_N(z_1)\d_{2,N} 
		\d_{1,N}^*\Q_N(z_1) \B_N\Sigmabs_N^*\Q_N(z_2)\d_{2,N}\chi_N^2\erm^{\irm u \sqrt{N}\Re\left(\hat{\gamma}_N\right)}
	\right]
	\notag\\
	&
	+
	\Ebb\left[\d_{1,N}^* \Q_N(z_2) \R_N(z_1)\d_{2,N}\chi_N\right]
	\Ebb
	\left[
	\d_{1,N}^* \left(\Q_N(z_1)\B_N\Sigmabs_N^*\Q_N(z_2)-\Ebb\left[\Q_N(z_1)\B_N\Sigmabs_N^*\Q_N(z_2)\right]\right)
		\d_{2,N}\chi_N^2\erm^{\irm u \sqrt{N}\Re\left(\hat{\gamma}_N\right)}
	\right],
	\notag
\end{align}
and using again corollaries \ref{corollary:tool2} and \ref{corollary:tool3}, we end up with
\begin{align}
	&\Ebb\left[\d_{1,N}^* \Q_N(z_2) \R_N(z_1)\d_{2,N} \d_{1,N}^*\Q_N(z_1) \B_N\Sigmabs_N^*\Q_N(z_2)\d_{2,N}\chi_N^2\erm^{\irm u \sqrt{N}\Re\left(\hat{\gamma}_N\right)}\right]
	=
	\notag\\
	&
	\Ebb\left[\d_{1,N}^* \Q_N(z_2) \R_N(z_1)\d_{2,N}\chi_N\right] \Ebb\left[\d_{1,N}^*\Q_N(z_1) \B_N\Sigmabs_N^*\Q_N(z_2)\d_{2,N}\chi_N\right] \psi_N(u)
	+
	\frac{\epsilon_N(u,z_1,z_2)}{\sqrt{N}}.
	\notag
\end{align}
In the same way, 
\begin{align}
	&\Ebb\left[\d_{1,N}^* \Q_N(z_2) \Sigmabs_N\Sigmabs_N^* \R_N(z_1) \d_{2,N} \d_{1,N}^*\Q_N(z_1) \Q_N(z_2)\d_{2,N}\chi_N^2\erm^{\irm u \sqrt{N}\Re\left(\hat{\gamma}_N\right)}\right]
	=
	\notag\\
	&\qquad\qquad\qquad\qquad
	\Ebb\left[\d_{1,N}^* \Q_N(z_2) \Sigmabs_N\Sigmabs_N^* \R_N(z_1) \d_{2,N} \chi_N\right] \Ebb\left[\d_{1,N}^*\Q_N(z_1) \Q_N(z_2)\d_{2,N}\chi_N\right] \psi_N(u)
	+
	\frac{\epsilon_N(u,z_1,z_2)}{\sqrt{N}}.
	\notag
\end{align}
A standard application of corollaries \ref{corollary:tool1} \ref{corollary:tool2} and \ref{corollary:tool3} leads to 
\begin{align}
	&\Ebb\left[\d_{1,N}^*\Q_N(z_1) \B_N\Sigmabs_N^*\Q_N(z_2)\d_{2,N}\chi_N\right] = 
	\notag \\
	&
	\frac{\Ebb\left[\d_{1,N}^*\Q_N(z_1) \B_N\B_N^*\Q_N(z_2)\d_{2,N}\chi_N\right]}{1+\sigma^2 c_N  \Ebb\left[\hat{m}_N(z_2)\right]}
	- \frac{\Ebb\left[\d_{1,N}^* \Q_N(z_1)\Q_N(z_2)\d_{2,N}\right] \frac{\sigma^2}{N} \Tr \B_N^*\Ebb\left[\Q_N(z_1) \chi_N \right]\B_N}
	{\left(1+\sigma^2 c_N  \Ebb\left[\hat{m}_N(z_1)\right]\right)\left(1+\sigma^2 c_N  \Ebb\left[\hat{m}_N(z_2)\right]\right)}
	\notag\\
	&	
	 + \frac{\epsilon_N(u,z_1,z_2)}{N}
	\notag
\end{align}
and
\begin{align}
	&\Ebb\left[\d_{1,N}^* \Q_N(z_2) \Sigmabs_N\Sigmabs_N^* \R_N(z_1) \d_{2,N} \chi_N\right]
	=
	\notag\\
	&
	\frac{\Ebb\left[\d_{1,N} \Q_N(z_2) \B_N\B_N^* \R_N(z_1) \d_{2,N}\right]}{1+\sigma^2 c_N  \Ebb\left[\hat{m}_N(z_2)\right]}
	\notag\\
	&
	- 
	\left(
		\frac
		{
			\frac{\sigma^2}{N} \Tr \B_N^*\Ebb\left[\Q_N(z_2)\right]\B_N
		}
		{\left(1+\sigma^2 c_N  \Ebb\left[\hat{m}_N(z_2)\right]\right)^2}
		+
		\frac{\sigma^2}{1+\sigma^2 c_N  \Ebb\left[\hat{m}_N(z_2)\right]}
	\right) \Ebb[\d_{1,N}^* \Q_N(z_2) \R_N(z_1) \d_{2,N}]
	+ \frac{\epsilon_N(u,z_1,z_2)}{N^{3/2}}.
	\label{eq:useful_bil_form}
\end{align}
Inserting the previous estimates into the expressions of $\beta_N(u,z_1,z_2)$ and $\tilde{\beta}_N(u,z_1,z_2)$, and replacing $\Ebb[\hat{m}_N(z)]$ by $m_N(z)$ as well as 
$\Ebb[\Q_N(z)]$ and $\R_N(z)$ by $\T_N(z)$ thanks to corollary \ref{corollary:tool2}, we finally obtain \eqref{eq:charac_fun_1}.

	\subsubsection{Proof of proposition \ref{proposition:bil_form}}
	\label{appendix:bil_form}

Since the proof of proposition \ref{proposition:bil_form} uses the same technic as in the proof of formula \eqref{eq:charac_fun_1} (see appendix \ref{appendix:gaussiancomp1}), we will only
provide the main lines of the computations.

Let $\M_N(z_1,z_2)$ be a $M \times M$ deterministic matrix  s.t. 
\begin{align}
	\limsup_{N \to \infty} \sup_{(z_1,z_2) \in \partial \Rcal \times \partial \Rcal} \|\M_N(z_1,z_2)\| <\infty.
	\notag
\end{align}
We will also use the generic notation $\E_N(z_1,z_2)$ for $M \times M$ matrices such that
\begin{align}
	\limsup_{N \to \infty} \sup_{(z_1,z_2) \in \partial \Rcal \times \partial \Rcal} \left|\Tr \E_N(z_1,z_2)\right| < \infty,
	\notag
\end{align}
i.e. such that $\Tr \E_N(z_1,z_2)$ behaves as $\epsilon_N(z_1,z_2)$. The value of $\E_N(z_1,z_2)$ may change from one line to another.

Starting from the matrix $\Ebb\left[\Q_N(z_1)\M_N(z_1,z_2)\Q_N(z_2) \chi_N\right]$, a repeated use of corollaries \ref{corollary:tool1} \ref{corollary:tool2} and \ref{corollary:tool3} 
together with the decorrelation trick, in the same way as in appendix \ref{appendix:gaussiancomp1}, leads to
\begin{align}
	&\Ebb
	\left[
		\frac{\Q_N(z_1)\M_N(z_1,z_2)\Q_N(z_2)}{z_2\left(1+\sigma^2 c_N \Ebb\left[\hat{m}_N(z_2)\right]\right)} 
		\left(
			z_2\left(1+\sigma^2 c_N \Ebb\left[\hat{m}_N(z_2)\right]\right)\I - \B_N\B_N^* 
			- \sigma^2\left(\I - \frac{\frac{\sigma^2}{N} \Tr \B_N^* \Ebb\left[\Q_N(z_2)\right]\B_N}{1+\sigma^2 c_N \Ebb\left[\hat{m}_N(z_2)\right]}\right)
		\right)
		\chi_N
	\right] = 
	\notag\\
	&\qquad
	-\frac{1}{z_2} \Ebb\left[\left[\Q_N(z_1) \M_N(z_1)\right] \chi_N\right]
	- \frac{\Ebb\left[\Q_N(z_1)\Sigmabs_N\Sigmabs_N^* \chi_N\right]}{z_2\left(1+\sigma^2 c_N \Ebb\left[\hat{m}_N(z_2)\right]\right)}
	\Ebb\left[\frac{\sigma^2}{N} \Tr \Q_N(z_1)\M_N(z_1,z_2)\Q_N(z_2) \chi_N\right]
	\notag\\
	&\qquad\qquad\qquad
	- \frac{\Ebb\left[\Q_N(z_1) \chi_N\right]}{z_2\left(1+\sigma^2 c_N \Ebb\left[\hat{m}_N(z_1)\right]\right)\left(1+\sigma^2 c_N \Ebb\left[\hat{m}_N(z_2)\right]\right)}
	\Ebb\left[\frac{\sigma^2}{N} \Tr \Q_N(z_1)\M_N(z_1,z_2)\Q_N(z_2)\B_N\B_N^* \chi_N\right]
	\notag\\
	&\qquad\qquad\qquad\qquad
	+ \frac{\Ebb\left[\Q_N(z_1) \chi_N\right]\frac{\sigma^2}{N} \Tr \B_N^* \Ebb\left[\Q_N(z_2)\right]\B_N}
	{z_2\left(1+\sigma^2 c_N \Ebb\left[\hat{m}_N(z_1)\right]\right)\left(1+\sigma^2 c_N \Ebb\left[\hat{m}_N(z_2)\right]\right)^2}
	\Ebb\left[\frac{\sigma^2}{N} \Tr \Q_N(z_1)\M_N(z_1,z_2)\Q_N(z_2)\chi_N\right]
	\notag\\
	&\qquad\qquad\qquad\qquad\qquad
	+ \frac{\E_N(z_1,z_2)}{N^2}.
	\notag
\end{align}
By introducing the matrix $\R_N(z_2)$ and as in appendix \ref{appendix:gaussiancomp1}, we obtain
\begin{align}
	&\Ebb\left[\Q_N(z_1)\M_N(z_1,z_2)\Q_N(z_2)\right] = 
	\notag\\
	&\qquad
	\Ebb\left[\left[\Q_N(z_1) \M_N(z_1) \R_N(z_2) \chi_N\right] \chi_N\right]
	+ \frac{\Ebb\left[\Q_N(z_1)\Sigmabs_N\Sigmabs_N^* \R_N(z_2)\chi_N\right]}{1+\sigma^2 c_N \Ebb\left[\hat{m}_N(z_2)\right]}
	\Ebb\left[\frac{\sigma^2}{N} \Tr \Q_N(z_1)\M_N(z_1,z_2)\Q_N(z_2) \chi_N\right]
	\notag\\
	&\qquad\qquad\qquad
	+ \frac{\Ebb\left[\Q_N(z_1) \chi_N\right]}{\left(1+\sigma^2 c_N \Ebb\left[\hat{m}_N(z_1)\right]\right)\left(1+\sigma^2 c_N \Ebb\left[\hat{m}_N(z_2)\right]\right)}
	\Ebb\left[\frac{\sigma^2}{N} \Tr \Q_N(z_1)\M_N(z_1,z_2)\Q_N(z_2)\B_N\B_N^* \chi_N\right]
	\notag\\
	&\qquad\qquad\qquad\qquad
	- \frac{\Ebb\left[\Q_N(z_1) \chi_N\right]\frac{\sigma^2}{N} \Tr \B_N^* \Ebb\left[\Q_N(z_2)\right]\B_N}
	{\left(1+\sigma^2 c_N \Ebb\left[\hat{m}_N(z_1)\right]\right)\left(1+\sigma^2 c_N \Ebb\left[\hat{m}_N(z_2)\right]\right)^2}
	\Ebb\left[\frac{\sigma^2}{N} \Tr \Q_N(z_1)\M_N(z_1,z_2)\Q_N(z_2)\chi_N\right]
	\notag\\
	&\qquad\qquad\qquad\qquad\qquad
	+ \frac{\E_N(z_1,z_2)}{N^2},
	\notag
\end{align}
By taking the trace in the previous expression, and using corollaries \ref{corollary:tool1} \ref{corollary:tool2} and \ref{corollary:tool3}, we end up with the following 
$2 \times 2$ linear system
\begin{align}
	&
	\begin{bmatrix}
		\Ebb\left[\frac{\sigma^2}{N} \Tr \Q_N(z_1) \M_N(z_1,z_2) \Q_N(z_2) \chi_N\right]
		\\
		\Ebb
		\left[
			\frac{\sigma^2}{N} \Tr \frac{\Q_N(z_1) \M_N(z_1,z_2) \Q_N(z_2) \B_N\B_N^*}
			{\left(1+\sigma^2 c_N m_N(z_1)\right)\left(1+\sigma^2 c_N m_N(z_2)\right)} \chi_N
		\right]
	\end{bmatrix}
	=
	\notag\\
	&\qquad\qquad
	\begin{bmatrix}
		v_N(z_1,z_2) s_N(z_1,z_2) + u_N(z_1,z_2) & v_N(z_1,z_2)
		\\
		\\
		u_N(z_1,z_2) s_N(z_1,z_2) + r_N(z_1,z_2) & u_N(z_1,z_2)
	\end{bmatrix}
	\begin{bmatrix}
		\Ebb\left[\frac{\sigma^2}{N} \Tr \Q_N(z_1) \M_N(z_1,z_2) \Q_N(z_2) \chi_N\right]
		\\
		\Ebb
		\left[
			\frac{\sigma^2}{N} \Tr \frac{\Q_N(z_1) \M_N(z_1,z_2) \Q_N(z_2)\B_N\B_N^*}
			{\left(1+\sigma^2 c_N m_N(z_1)\right)\left(1+\sigma^2 c_N m_N(z_2)\right)} \chi_N
		\right]
	\end{bmatrix}
	\notag\\
	&\qquad\qquad\qquad\qquad
	+
	\begin{bmatrix}
		\frac{\sigma^2}{N} \Tr \T_N(z_1) \M_N(z_1,z_2) \T_N(z_2)
		\\
		\frac{\sigma^2}{N} \Tr \T_N(z_1) \M_N(z_1,z_2) \T_N(z_2)\B_N\B_N^*
	\end{bmatrix}
	+
	\frac{1}{N^2}
	\begin{bmatrix}
		\epsilon_{N}(z_1,z_2)
		\\
		\epsilon_{N}(z_1,z_2)
	\end{bmatrix}
	,
	\label{eq:lin_sys}
\end{align}
where $u_N(z_1,z_2), v_N(z_1,z_2), r_N(z_1,z_2)$ and $s_N(z_1,z_2)$ are respectively defined in \eqref{def:u}, \eqref{def:v_vtilde}, \eqref{def:r} and \eqref{def:s}.
The determinant of the previous system is given by
\begin{align}
	\Delta_N(z_1,z_2) = \left(1-u_N(z_1,z_2)\right)^2 - v_N(z_1,z_2)\left(s_N(z_1,z_2) + r_N(z_1,z_2)\right).
	\notag
\end{align}
By relating $\T_N(z)$ with $\tilde{\T}_N(z)$, we obtain the equality
\begin{align}
	\frac{\B_N^*\T_N(z)\B_N}{1+\sigma^2 c_N m_N(z)} = \I + z\left(1+\sigma^2 c_N m_N(z)\right) \tilde{\T}_N(z).
	\label{eq:link_T_Ttilde}
\end{align}
Inserting relation \eqref{eq:link_T_Ttilde} in the expressions of $r_N(z_1,z_2)$ and $s_N(z_1,z_2)$, we obtain respectively
\begin{align}
	s_N(z_1,z_2) = 
		\frac{-\sigma^2}{\left(1+\sigma^2 c_N m_N(z_1)\right)\left(1+\sigma^2 c_N m_N(z_2)\right)}
		- z_1 \frac{\sigma^2}{N} \Tr \frac{\T_N(z_1)}{1+\sigma^2 c_N m_N(z_2)}
		- z_2 \frac{\sigma^2}{N} \Tr \frac{\T_N(z_2)}{1+\sigma^2 c_N m_N(z_1)}
	\notag
\end{align}
and
\begin{align}
	&r_N(z_1,z_2) = 
	\notag\\
	&\quad
		z_1 z_2 \tilde{v}_N(z_1,z_2)
		-\frac{-\sigma^2}{\left(1+\sigma^2 c_N m_N(z_1)\right)\left(1+\sigma^2 c_N m_N(z_2)\right)}
		+ z_1 \frac{\sigma^2}{N} \Tr \frac{\T_N(z_1)}{1+\sigma^2 c_N m_N(z_2)}
		+ z_2 \frac{\sigma^2}{N} \Tr \frac{\T_N(z_2)}{1+\sigma^2 c_N m_N(z_1)},
		\notag
\end{align}
where $\tilde{v}_N(z_1,z_2)$ is defined by \eqref{def:v_vtilde}.
The determinant thus writes 
\begin{align}
	\Delta_N(z_1,z_2) = \left(1-u_N(z_1,z_2)\right)^2 - z_1 z_2 v_N(z_1,z_2)\tilde{v}_N(z_1,z_2).
	\notag
\end{align}
Using lemma \ref{lemma:properties_determinant}, we can finally solve the system \eqref{eq:lin_sys} to obtain
\begin{align}
	&\Ebb\left[\frac{\sigma^2}{N} \Tr \Q_N(z_1) \M_N(z_1,z_2) \Q_N(z_2) \chi_N\right] =
	\notag\\
	&\qquad
	\frac
	{
		\left(1 - u_N(z_1,z_2)\right)\frac{\sigma^2}{N}\Tr \T_N(z_1)\M_N(z_1,z_2)\T_N(z_2) 
		+ v_N(z_1,z_2) \frac{\sigma^2}{N} \Tr \frac{\T_N(z_1) \M_N(z_1,z_2) \T_N(z_2) \B_N\B_N^*}{\left(1+\sigma^2 c_N m_N(z_1)\right) \left(1+\sigma^2 c_N m_N(z_1)\right)} 
	}
	{\Delta_N(z_1,z_2)}
	+ \frac{\epsilon_N(z_1,z_2)}{N^2}
	\notag
\end{align}
and
\begin{align}
	&\Ebb\left[\frac{\sigma^2}{N} \Tr \frac{\Q_N(z_1) \M_N(z_1,z_2) \Q_N(z_2) \B_N\B_N^*}{\left(1+\sigma^2 c_N m_N(z_1)\right)\left(1+\sigma^2 c_N m_N(z_2)\right)} \chi_N\right] =
	\notag\\
	&\qquad\frac
	{
		\left(u_N(z_1,z_2) s_N(z_1,z_2) + r_N(z_1,z_2)\right) \frac{\sigma^2}{N} \Tr \T_N(z_1)\M_N(z_1,z_2)\T_N(z_2)			
	}
	{\Delta_N(z_1,z_2)}
	\notag\\
	&\qquad + 
	\frac
	{
		\left(1 - v_N(z_1,z_2) s_N(z_1,z_2) - u_N(z_1,z_2)\right) 
		\frac{\sigma^2}{N} \Tr \frac{\T_N(z_1)\M_N(z_1,z_2)\T_N(z_2) \B_N\B_N^*}{\left(1+\sigma^2 c_N m_N(z_1)\right)\left(1+\sigma^2 c_N m_N(z_2)\right)}
	}
	{\Delta_N(z_1,z_2)}
	+ \frac{\epsilon_N(z_1,z_2)}{N^2}.
	\notag
\end{align}
The approximations \eqref{eq:approx_doubleQ1} and \eqref{eq:approx_doubleQ2} will unfold by choosing $\M_N(z_1,z_2)=\d_{1,N}\d_{2,N}^*$, whcih concludes 
the proof of proposition \ref{proposition:bil_form}.

	\subsection{Proof of lemma \ref{lemma:properties_determinant}}
	\label{appendix:determinant}

We recall here that $\Delta_N(z_1,z_2)$ is defined \eqref{def:determinant} by
\begin{align}
	\Delta_N(z_1,z_2) = \left(1-u_N(z_1,z_2)\right)^2 - z_1 z_2 v_N(z_1,z_2) \tilde{v}_N(z_1,z_2),
	\notag
\end{align}
with $u_N(z_1,z_2)$, $v_N(z_1,z_2)$ and $\tilde{v}_N(z_1,z_2)$ given by \eqref{def:u} and \eqref{def:v_vtilde}.
We also recall the following bounds from \cite{hachem2012subspace} :
\begin{align}
	\sup_{N \to \infty} \sup_{z \in \Kcal} \left|u_N(z,z^*)\right| < 1
	\quad\text{and}\quad
	\inf_{N \to \infty} \inf_{z \in  \Kcal} \left|\Delta_N(z,z^*)\right| > 0.
	\label{eq:bound_det_old}
\end{align}
For $z_1,z_2 \in \partial \Kcal$, Cauchy-Schwarz inequality and \eqref{eq:bound_det_old} gives
\begin{align}
	\left|u_N(z_1,z_2)\right| \leq \left|u_N(z_1,z_1^*)\right|^{1/2} \left|u_N(z_2,z_2^*)\right|^{1/2},
	\label{eq:bound_det2}
\end{align}
and thus
\begin{align}
	\limsup_{N \to \infty} \sup_{z_1,z_2 \in \Kcal} \left|u_N(z_1,z_2)\right| < 1,
	\notag
\end{align}
which shows \eqref{eq:bound_u}.
In the same way, since
\begin{align}
	\left|\Delta_N(z_1,z_2)\right| \geq \left(1 - \left|u_N(z_1,z_2)\right|\right)^2 - |z_1||z_2|\left|v_N(z_1,z_2)\right|\left|\tilde{v}_N(z_1,z_2)\right|,
	\notag
\end{align}
a straightforward application of Cauchy-Schwarz inequality yields
\begin{align}
	&\left|\Delta_N(z_1,z_2)\right| 
	\geq 
	\notag\\
	&\quad
	\left(1 - \left|u_N(z_1,z_1^*)\right|^{1/2}\left|u_N(z_2,z_2^*)\right|^{1/2}\right)^2 
	- |z_1||z_2|\left|v_N(z_1,z_1^*)\right|^{1/2}\left|v_N(z_2,z_2^*)\right|^{1/2}\left|\tilde{v}_N(z_1,z_1^*)\right|^{1/2}\left|\tilde{v}_N(z_1,z_1^*)\right|^{1/2}.
	\notag
\end{align}
Using the inequality 
\begin{align}
	\left(1-\sqrt{x_1 x_2}\right)^2 \geq \sqrt{\left(1-x_1\right)^2 - s_1 t_1}\sqrt{\left(1-x_2\right)^2 - s_2 t_2},
	\notag
\end{align}
valid for $x_i \in [0,1]$ and $(1-x_i)^2 \geq s_i t_i$ ($i=1,2$), we finally obtain
\begin{align}
	\left|\Delta_N(z_1,z_2)\right|  \geq \left|\Delta_N(z_1,z_1^*)\right|^{1/2} \left|\Delta_N(z_2,z_2^*)\right|^{1/2},
\end{align}
which readily implies
\begin{align}
	\inf_{N \to \infty} \inf_{z_1,z_2 \in \Kcal} \left|\Delta_N(z_1,z_2)\right| > 0.
	\notag
\end{align}
Moreover, from \eqref{eq:bound_norm_T_Ttilde}, and the definition of $u_N(z_1,z_2)$, $v_N(z_1,z_2)$ and $\tilde{v}_N(z_1,z_2)$, we also see that
\begin{align}
	\left|\Delta_N(z_1,z_2)\right| \leq 
	\Prm\left(\frac{1}{\drm\left(z_1,\supp(\mu_N) \cup \{0\}\right)}\right)\Qrm\left(\frac{1}{\drm\left(z_2,\supp(\mu_N) \cup \{0\}\right)}\right),
	\label{eq:upper_bound_det}
\end{align}
where $\Prm, \Qrm$ are two polynomials independent of $N,z_1,z_2$ with positive coefficients, and we thus deduce
\begin{align}
	\sup_{N \to \infty} \sup_{z_1,z_2 \in \Kcal} \left|\Delta_N(z_1,z_2)\right| < \infty,
	\notag
\end{align}
which shows \eqref{eq:bound_det}.
We now prove \eqref{eq:link_Delta_w}. By straightforward computations, it is easily shown that $\Delta_N(z_1,z_2)$ is the determinant of the following $2 \times 2$ linear system
\begin{align}
	\begin{bmatrix}
		\sigma c_N \left(m_N(z_1) - m_N(z_2)\right)
		\\
		\sigma\left(z_1 \tilde{m}_N(z_1) - z_2 \tilde{m}_N(z_2)\right)
	\end{bmatrix}
	=
	\begin{bmatrix}
		u_N(z_1,z_2) & v_N(z_1,z_2)
		\\
		z_1 z_2 \tilde{v}_N(z_1,z_2) & u_N(z_1,z_2)
	\end{bmatrix}
	\begin{bmatrix}
		\sigma c_N \left(m_N(z_1) - m_N(z_2)\right)
		\\
		\sigma\left(z_1 \tilde{m}_N(z_1) - z_2 \tilde{m}_N(z_2)\right)
	\end{bmatrix}
	+
	\frac{z_1 - z_2}{\sigma} 
	\begin{bmatrix}
		v_N(z_1,z_2)
		\\
		u_N(z_1,z_2)
	\end{bmatrix},
	\notag
\end{align}
Since $\Delta_N(z_1,z_2) \neq 0$ for $z_1,z_2 \in \Kcal$, solving the previous linear system gives
\begin{align}
	\sigma c_N \left(m_N(z_1) - m_N(z_1)\right) = \frac{z_1 - z_2}{\Delta_N(z_1,z_2)} \frac{v_N(z_1,z_2)}{\sigma},
	\notag
\end{align}
and it is easy to show that
\begin{align}
	\sigma c_N \left(m_N(z_1) - m_N(z_1)\right) = \left(w_N(z_1) - w_N(z_2)\right) \frac{v_N(z_1,z_2)}{\sigma}.
	\notag
\end{align}
Thus we obtain the relation
\begin{align}
	w_N(z_1) - w_N(z_2) = \frac{z_1 - z_2}{\Delta_N(z_1,z_2)}
	\notag
\end{align}
which shows \eqref{eq:link_Delta_w}.
Finally, to prove \eqref{eq:bound_det_serie}, we write, using Cauchy-Schwarz inequality,
\begin{align}
	\left|\Delta_N(z_1,z_2) -  \left(1-u_N(z_1,z_2)\right)^2\right|
	&=
	\left|z_1\right| \left|z_2\right| \left|v_N(z_1,z_2)\tilde{v}_N(z_1,z_2)\right|
	\notag\\
	&\leq
	\left|z_1\right| \left|z_2\right| 
	v_N(z_1,z_1^*)^{1/2} \tilde{v}_N(z_2,z_2^*)^{1/2} 
	v_N(z_1,z_1^*)^{1/2} \tilde{v}_N(z_2,z_2^*)^{1/2},
	\notag
\end{align}
and since $\left(1-u_N(z,z^*)\right)^2 > |z|^2 v_N(z,z^*)\tilde{v}_N(z,z^*)$ for $z \in \Kcal$, we have
\begin{align}
	\left|\Delta_N(z_1,z_2) -  \left(1-u_N(z_1,z_2)\right)^2\right| < \left(1-u_N(z_1,z_1^*)\right) \left(1-u_N(z_2,z_2^*)\right).
	\notag
\end{align}
As $|1 - u_N(z_1,z_2)| \geq 1 - |u_N(z_1,z_1^*)|^{1/2} |u_N(z_2,z_2^*)|^{1/2}$ and thanks to the inequality $\sqrt{(1-a)(1-b)} \leq 1 - \sqrt{ab}$ valid for $a,b \in [0,1]$, 
we finally obtain
\begin{align}
	\left|\Delta_N(z_1,z_2) -  \left(1-u_N(z_1,z_2)\right)^2\right| < \left|1-u_N(z_1,z_2)\right|^2, 
	\notag
\end{align}
or equivalently
\begin{align}
	\left|\frac{\Delta_N(z_1,z_2)}{\left(1-u_N(z_1,z_2)\right)^2} - 1\right| < 1.
	\notag
\end{align}

	\subsection{Proof of lemma \ref{lemma:conseq_sep2}}
	\label{appendix:proof_lemma_conseq_sep_2}

	Assume that the separation condition {\bf A-\ref{assumption:subspace1}} and {\bf A-\ref{assumption:subspace2}} hold, and let $t \in (t_1^+,t_2^-)$.
	Since $\mu_N$ converges to the Marchenko-Pastur distribution, $w_N(t) > 0 \to w(t) \geq 0$ and we deduce that $t > \sigma^2\left(1+\sqrt{c}\right)^2$.
	From \eqref{eq:canonical_eq_mp} and the behaviour of $\phi$, we have $w(t) > \sigma^2 \sqrt{c}$, and finally 
	assumption {\bf A-\ref{assumption:subspace2}} implies that $\liminf_N \lambda_{K,N} \geq w(t)$,
	which proves \eqref{eq:sep_cond_spike}.
	
	Now, assume that \eqref{eq:sep_cond_spike} holds and let $\epsilon > 0$ such that 
	\begin{align}
		\liminf_{N \to \infty} \lambda_{K,N} > \sigma^2 \sqrt{c} + \epsilon.
	\end{align}
	For any compact $\Kcal \subset \left(-\infty, 0\right)\cup\left(0, \sigma^2 \sqrt{c}+\epsilon\right)$, we have
	\begin{align}
		\sup_{w \in \Kcal} \left|f_N(w) + \frac{1}{w}\right| \xrightarrow[N \to \infty]{} 0,
		\quad
		\sup_{w \in \Kcal} \left|\phi_N(w) - \phi(w)\right| \xrightarrow[N \to \infty]{} 0,
		\quad\text{and}\quad
		\sup_{w \in \Kcal} \left|\phi'_N(w) - \phi'(w)\right| \xrightarrow[N \to \infty]{} 0
		\label{eq:conv_unif}
	\end{align}
	Since $\phi(w)$ has a unique maximum $\sigma^2 (1-\sqrt{c})^2$ at point $w=-\sigma^2 \sqrt{c}$ on the interval $(-\infty, 0)$, 
	$\phi_N$ will also admit a positive maximum in this interval for all large $N$, and thus
	\begin{align}
		x_{1,N}^-= \sigma^2 (1-\sqrt{c})^2 + o(1) \quad\text{and}\quad w_N(x_{1,N}^-) = -\sigma^2 \sqrt{c} + o(1).
		\label{eq:convergence_w1-}
	\end{align}
	In the same way, $\phi$ has a unique positive minimum $\sigma^2 (1+\sqrt{c})^2$ at $\sigma^2 \sqrt{c}$ on the interval 
	$\left(0, \sigma^2 \sqrt{c}+\epsilon\right)$, and thus $\phi_N$ will also admit a positive minimum on this interval, 
	at the point $w_N(x_{1,N}^+)$, for $N$ large enough, and
	\begin{align}
		x_{1,N}^+ = \sigma^2 (1+\sqrt{c})^2 + o(1) \quad\text{and}\quad w_N(x_{1,N}^+) = \sigma^2 \sqrt{c} + o(1).
		\label{eq:convergence_w1+}
	\end{align}
	Therefore, we can find $t_1^-$ such that $\liminf_N x_{1,N}^- > t_1^- > 0$.
	Moreover, if $\Kcal'$ is a compact included in $\left(\sigma^2 \sqrt{c}, \sigma^2 \sqrt{c}+\epsilon\right)$, \eqref{eq:conv_unif} also implies that for $N$ 
	large enough,
	\begin{align}
		\inf_{w\in \Kcal'} 1 - \sigma^2 c_N f_N(w) > 0\quad\text{and}\quad\inf_{w \in \Kcal'} \phi'_N(w) > 0.
		\label{eq:cond_phi_der}
	\end{align}
	which proves that $\phi_N(\Kcal') \subset \Rbb \backslash \supp(\mu_N)$ from \cite[Lemma 6]{vallet2012improved}.
	This shows that it necessarily exists a local maximum $x_{2,N}^-$ of $\phi_N$ with preimage $w_N(x_{2,N}^-) > \sup \Kcal'$.
	By fixing two points $t_2^- > t_1^+ > \sigma^2 (1+\sqrt{c})^2$ such that $w(t_1^+), w(t_2^-) \in \mathrm{Int}\left(\Kcal'\right)$, 
	we easily conclude that 
	\begin{align}
		\limsup_{N\to\infty} x_{1,N}^+ < t_1^+ \quad\text{and}\quad \liminf_{N \to \infty} x_{2,N}^- > t_2^-, 
	\end{align}
	which proves {\bf A-1}.
	By definition of $t_2^-$, $w(t_2^-) < \sigma^2 \sqrt{c} + \epsilon$, which implies of course {\bf A-2}.

\bibliographystyle{plain}
\bibliography{mainbib} 

\end{document}